\newcommand{\IGNORE}[1]{}
\tikzstyle{block}=[draw opacity=0.7,line width=1.4cm]
\tikzstyle{graphnode}=[circle, draw, fill=black!20, inner sep=0pt, minimum width=6pt]
\tikzstyle{point}=[circle, draw, fill=black!30, inner sep=0pt, minimum width=1pt]
\tikzstyle{input}=[rectangle, draw, fill=black!75,inner sep=3pt, inner ysep=3pt, minimum width=4pt]
\tikzstyle{unmatched}=[graphnode,fill=black!0]
\tikzstyle{shaded}=[graphnode,fill=black!20]
\tikzstyle{matched}=[graphnode,fill=black!100]  	
\tikzstyle{matching} = [ultra thick]
\tikzset{
    >=stealth',
    pil/.style={
           ->,
           thick,
           shorten <=2pt,
           shorten >=2pt,}
}
\tikzset{->-/.style={decoration={
  markings,
  mark=at position .5 with {\arrow{>}}},postaction={decorate}}}
\DeclareMathOperator{\argmax}{arg max}
\newtheorem{theorem}{Theorem}[section]
\newtheorem{claim}[theorem]{Claim}
\newtheorem{proposition}[theorem]{Proposition}
\newtheorem{lemma}[theorem]{Lemma}
\newtheorem{corollary}[theorem]{Corollary}
\newtheorem{observation}[theorem]{Observation}
\theoremstyle{definition}
\newtheorem{definition}[theorem]{Definition}
\newtheorem{example}[theorem]{Example}
\newtheorem{defn}[theorem]{Definition}
\newcommand{\E}{\mathbb{E}}
\newcommand{\OPT}{\textsc{OPT}}
\newcommand{\Exp}[2]{\mathop{\mathbb E}_{#1}\displaylimits \left[#2\right]}
\def\eps {\epsilon}
\def \reals {\mathbb{R}}
\newcommand{\one}{\mathbf{1}\xspace}
\newcounter{note}[section]
\newcommand{\X}{\textbf{X}}
\newcommand{\Y}{\textbf{Y}}
\newcommand{\scol}{\ensuremath{s_{\text{col}}}}
\newcommand{\srow}{\ensuremath{s_{\text{row}}}}
\newtheorem{mainresult}{Main Result}
\title{Prophet Inequalities with Linear Correlations\\
and  Augmentations}
	\author{Nicole Immorlica\thanks{
	(nicimm@microsoft.com)
	Microsoft Research.
	}
	\and Sahil Singla\thanks{
        (singla@cs.princeton.edu)
        Princeton University and  
        Institute for Advanced Study. Supported in part by the
        Schmidt Foundation.
        }
	\and Bo Waggoner\thanks{
        (bwag@colorado.edu)
        University of Colorado.
        }
}
\date{ \today}
\begin{document}
\maketitle

\setlength{\abovedisplayskip}{2pt}
\setlength{\belowdisplayskip}{2pt}


\begin{abstract}{
\medskip

In a classical online decision problem, a decision-maker who is trying
to maximize her value inspects a sequence of arriving items to learn
their values (drawn from known distributions), and decides when to
stop the process by taking the current item. The goal is to prove a
``prophet inequality'': that she can do approximately as well as a
prophet with foreknowledge of all the values. In this work, we
investigate this problem when the values are allowed to be correlated.
Since non-trivial guarantees are impossible for arbitrary correlations, 
we consider a natural ``linear'' correlation structure introduced by Bateni et al.~\cite{BDHS-ESA15} as a generalization of the common-base value model of Chawla et al.~\cite{CMS-EC10Journal15}.

A key challenge is that
threshold-based algorithms, which are commonly used for prophet
inequalities, no longer guarantee good performance for linear correlations.
 We relate this
roadblock to another ``augmentations'' challenge that might be of 
independent interest: many existing prophet inequality algorithms are not robust to
 slight increase in the values of the arriving items.
We leverage this intuition to prove bounds
(matching up to constant factors) that decay gracefully with the
amount of correlation of the arriving items.
We extend these results to the case of selecting multiple items
by designing a new $(1+o(1))$ approximation ratio algorithm that is robust 
to augmentations. 

}\end{abstract}


\clearpage

\setcounter{tocdepth}{2}
\tableofcontents

\clearpage



\section{Introduction}\label{sec:intro}

In classic optimal-stopping problems, a decision-maker wishes to select one of a set  $[n] = \{1,\dots,n\}$ of options whose values are distributed according to a known joint distribution.  Option $i$ materializes at time $i$, revealing its value $X_i$.  The decision-maker must then either select option $i$, receiving a value of $X_i$, or permanently reject it and continue.  Her goal is to  choose an option whose value, in expectation over her selection algorithm and the randomness in the problem instance, obtains at least a $1/\alpha$-fraction of the expected maximum value for some approximation ratio $\alpha \geq 1$.  Such an approximation is referred to as a ``prophet inequality'' as it compares the decision-maker's performance to that of a prophet who knows the realizations of all values in advance and can always stop at the maximum. Examples of optimal-stopping problems include hiring, in which an employer interviews a sequence of candidates and must make a hiring decision on the spot; or house-buying in a sellers' market, in which a buyer must make an offer at the open house. 
These optimal-stopping problems became more popular in the last 15 years particularly because of their applications in mechanism design. E.g., an $\alpha$-prophet inequality often implies a posted-pricing mechanism that gets a $1/\alpha$-fraction of the maximum welfare for a sequence of bidders arriving online (see related work in Section~\ref{sec:related}). 




The approximability of optimal stopping problems depends heavily on the correlations of the option values.  In the case where all the values are independent, a tight ${2}$ approximation ratio was shown by Krengel and Sucheston~\cite{Krengel-Journal78}. In 1984, Samuel-Cahn~\cite{Samuel-Annals84} presented a simple \emph{median-of-maximum} ``threshold-based'' rule with the same performance: compute $\tau$ as the median of the distribution of the maximum value, and stop at the first $X_i$ exceeding $\tau$. Other threshold rules are also known to obtain ${2}$ approximation, e.g.,  Kleinberg and Weinberg~\cite{KW-STOC12} showed this for $\tau = \tfrac{1}{2}\Exp{}{\max_i X_i}$.
When the values are negatively correlated, the problem intuitively becomes even easier than the independent case: observing and rejecting a low $X_i$ increases the chances of seeing a high $X_{i'}$ in the future (and vice versa accepting a large $X_i$ decreases the chances of having missed a high $X_{i'}$ in the future). 
E.g., a simple implication of threshold-based algorithms is a ${2}$ approximation ratio for negatively associated  (a form of negative correlation) values;  see Appendix~\ref{sec:negCorr}.
Rinott and Samuel-Cahn~\cite{RS-Journal87,RS-Journal91} indeed show that the value of the optimal stopping algorithm for negative correlations exceeds that of the independent case, holding the marginals fixed. 
On the other hand, with general and positive correlation structures, no algorithm can guarantee better than $\Omega({n})$ approximation ratio (as is known from Hill and Kertz~\cite{HillKertz-Journal92}, and will also be a special case of our lower bounds).  Therefore, the question is how to impose a structure on the correlations that both models interesting applications and allows for better bounds.

\subsection{Linear Correlations Model}
We consider a \emph{linear correlations model} in which there exists a set of $m$ \emph{independent} variables $Y_1,\dots,Y_m$, with each option value $X_i$ being a positive linear combination of some subset:
  \[ \X = A \cdot \Y \]
where $A$ is a nonnegative matrix. The algorithm is given the matrix $A$ and the distributions of all the $Y_j$s, but when $X_i$ arrives, it only finds $X_i$ and not any of the realizations of the $Y_j$s. 
This model was introduced in an auctions context by Bateni et al.~\cite{BDHS-ESA15}, where it was inspired by the common-base value model of Chawla et al.~\cite{CMS-EC10Journal15}.
It has two natural parameters capturing the degree of correlation of an instance. If each row of $A$ has at most $\srow$ nonzero entries (\emph{row sparsity} $\srow$),  this implies that each $X_i$ only depends on at most $\srow$ different $Y_j$s. On the other hand, if each column of $A$ has at most $\scol$ nonzero entries (\emph{column sparsity} $\scol$), this implies that each $Y_j$ is only relevant to at most $\scol$ different $X_i$s.

\paragraph{General applications.}
Linear correlations occur in applications when each option $i$ is defined by the degree $A_{ij}$ to which it exhibits each feature $j\in[m]$.  The value $\sum_jA_{ij}Y_j$ of the option is then determined by the values $Y_j$ of its features, which is unknown to the decisionmaker.  Of particular interest in our setting are applications with many features, such as the hiring and house-buying examples often used to motivate prophet problems.  Other relevant applications include selecting hotel rooms, restaurants, and movies.  Here we elaborate on the hiring and house-buying examples, noting how they naturally exhibit column or row sparsity.\footnote{Note however that our bounds are expressed in terms of the minimum of row and column sparsity of an instance, and hence apply even to instances with high row/column sparsity.}  

In a hiring application, the features of a candidate might include where he received his education, his major, his work experience in each relevant industry, aspects of his personality, etc.  When the employer interviews a candidate, she learns how much she likes him, but not how to attribute her value for the candidate to particular features (every school/major/industry is a different feature). If the candidate pool is diverse, so that candidates come from many different schools/majors/industries, we might expect the instance to have a low column sparsity.  Similarly, in house-buying, the features of a house might include the commute time, the square footage, and various bells-and-whistles like the existence of a patio, a hot-tub, a roof-deck, the number of parking spaces, if any.  Again, the value of a house is a linear combination of the value of its features, but when seeing a house, the buyer may only be able to access and articulate an overall valuation.  If each house has a limited number of bells-and-whistles, we expect the instance to have low row-sparsity.

\paragraph{Mechanism design applications: Welfare for linearly correlated values.}
Prophet inequalities can directly imply social welfare and revenue guarantees for sequential posted-price mechanisms~\cite{CHMS-STOC10,KW-STOC12}.
In the simplest model, a single item is for sale to a sequence of arriving bidders with values $X_1,\dots,X_n$, drawn from distributions known to the seller.
A threshold-$\tau$ stopping rule immediately translates to a posted price $\tau$.
The item is purchased by the first bidder whose value satisfies $X_i \geq \tau$.
In particular, social welfare is the value of the bidder who purchases the item, so a prophet inequality directly translates to a social welfare guarantee.\footnote{Revenue guarantees, at least in the classical independent-$X_i$ model, can be obtained using a threshold in virtual value space.}
While we show that threshold-based policies fail for linearly correlated values, we obtain positive results with \emph{inclusion-threshold} policies.
These correspond to offering a fixed posted price to a predetermined subset of buyers, while the others are automatically rejected.

For linearly correlated bidder values, our positive results immediately imply social welfare guarantees using such inclusion-posted-price mechanisms.
Here, linear correlations capture some component of common values in bidder preferences.
Namely, there are different features $Y_1,\dots,Y_m$ of the object, with bidder $i$ placing weight $A_{ij}$ on feature $j$.
In the mechanism-design setting, it is particularly natural to make our assumption that the decisionmaker (here, the seller) is not able to access $Y_{j}$ when value $X_i$ arrives.



\paragraph{Results.}
We start from the observation that threshold-based algorithms cannot guarantee good approximation ratios as soon as \emph{any} correlations are introduced.
Therefore, we define \emph{inclusion-threshold} algorithms that probabilistically include a subset of the arrivals for consideration and take the first arrival \emph{in this subset} to exceed a threshold.

We first design an inclusion-threshold algorithm to obtain an $O(\scol)$ approximation ratio, i.e., a guarantee that degrades gracefully as the amount of correlation increases, from the known $O(1)$ bound for the independent case to the known $\Theta(n)$ worst-case bound.
Then, we design a more complex inclusion-threshold algorithm to obtain an $O({\srow})$ approximation ratio, i.e., another gracefully degrading guarantee.
Together, these prove an $O\big({\min\left\{{\srow}, {\scol}\right\}}\big)$ approximation guarantee for the linear correlations model.
We then design a lower bound instance and prove that this is tight up to constants, i.e., no algorithm can guarantee better than an $\Omega\big({\min\left\{{\srow}, {\scol}\right\}}\big)$ approximation.

\begin{mainresult}[Informal Theorem~\ref{thm:main}] 
For the linearly correlated prophet problem, there exists an $O({\min\{{\scol}, {\srow} \}})$ approximation ratio  algorithm.
\end{mainresult}

Finally, we extend these results to the case of selecting a subset of up to $r$ of the arriving options with a goal of maximizing their expected sum (also known as an \emph{$r$-uniform matroid} constraint). It is known that for independent distributions, $1+o(1)$ approximation ratio prophet inequality algorithms are possible for the case of large $r$~\cite{HKS-AAAI07,Alaei-FOCS11}. We show a similar result for linearly correlated instances with bounded column sparsity $\scol$. 

\begin{mainresult}[Informal Theorem~\ref{thm:matroidScolasmyp}] 
For the linearly correlated prophet problem where we select $r$ options, there exists a $1+o(1)$ approximation ratio  algorithm when $r \gg \scol$.
\end{mainresult}

The case of bounded row sparsity, however, turns out to be harder: regardless of $r$, no algorithm can guarantee better than an $\Omega({\srow})$ approximation ratio for unbounded $\scol$, as in the $r=1$ case.

\subsection{Techniques and the Augmented Prophets Problem} \label{sec:techniques}

A crucial technique for our results is to introduce and solve the \emph{augmented prophets} problem.
The idea is to suppose we have an instance with independent random variables $Z_1,\dots,Z_n$ and an algorithm, say a threshold rule, guaranteeing some approximation ratio.
Now suppose we ``augment'' the instance by sending instead the values $X_1 := Z_1+W_1, ~ \dots, ~ X_n := Z_n + W_n$ where the $W_i$s are nonnegative.
Does the algorithm (which does not know $W_i$s) continue to guarantee its original approximation ratio (measured against the maximum $Z_i$)?
One would hope so, as each arriving option has only increased, while the benchmark has not.
However, this turns out not to be true for the median-of-maximum threshold rule. 
E.g., if $Z_i \sim \text{Bernoulli}(p)$ i.i.d. for $p \ll \tfrac{1}{n}$, the median is zero, and augmenting the first arrival to a miniscule positive value causes the (strict) median threshold rule to always take it, resulting in an arbitrarily poor approximation.
Luckily, we show that the half-of-expected-maximum threshold algorithm retains its approximation guarantees, even when the $W_i$s are chosen by an adversary depending on the past $X_{i'}$s, i.e., $i'<i$.

Armed with this ``augmentation lemma'', we use subsampling to obtain inclusion sets of arrivals $\{X_i\}$ with significant independent portions $\{Z_i\}$.
In the bounded $\scol$ case, direct subsampling of arrivals succeeds.
The case where $\srow$ is bounded but $Y_j$ can appear in any number of arrivals is more challenging.
We show it suffices to obtain a contention-resolution style subsampling of the arrivals such that each $Y_j$ is well-represented, but only with its maximum coefficient $A_{ij}$.
We then use a graph-theoretic argument to construct such a scheme.

The augmented prophets problem is also our key technique for the linearly correlated prophets problem with an $r$-uniform matroid constraint. In this case, however, we notice that none of the existing $1+o(1)$ algorithms  are robust to augmentations. 
Hence we design a new $1+o(1)$ algorithm and prove its robustness using a much more sophisticated analysis involving a sequence of thresholds and ``buckets'' with different cardinality constraints.
By combining this augmentation result with random partitioning of the input, we obtain the $1+o(1)$ approximation for the $r$-uniform matroid problem with fixed $\scol$.

\begin{mainresult}[Informal Lemma~\ref{lemma:aug} and~\ref{lemma:r-aug}] 
For the augmented prophets problem, there exists a ${2}$ approximation ratio algorithm when selecting a single option and a $1+o(1)$ approximation ratio algorithm when selecting $r\gg 1$ options.
\end{mainresult}

 One can also view the augmented prophets problem as capturing correlations induced by a mischievous wish-granting genie who awards bonuses $W_i \geq 0$ at each step, but tries to choose them so as to worsen the algorithm's performance.  We think this problem is of independent interest and will find further applications in designing robust prophet inequality algorithms.

\subsection{Related Work} \label{sec:related}
The last decade has seen significant interest in prophet inequalities motivated by their applications in mechanism design.
Many works focus on multiple-choice prophet inequality problems.
This includes prophet inequalities for uniform matroids in~\cite{HKS-AAAI07,Alaei-FOCS11}, for general matroids in~\cite{CHMS-STOC10,Yan-SODA11,KW-STOC12,FSZ-SODA16,LeeS-ESA18}, for matchings and combinatorial auctions in~\cite{AHL-EC12,FGL-SODA15,DFKL-FOCS17,EHKS-SODA18}, and for arbitrary packing constraints in~\cite{Rubinstein-STOC16,RS-SODA17}. There has also been a lot of work on  variants of  prophet inequalities: the prophet secretary problem where the values arrive in a random order~\cite{EHLM-SIDMA17,AEEHKL-STOC17,CFHOV-EC17,EHKS-SODA18,ACK-EC18,LeeS-ESA18,CSZ-SODA19}, and the limited information setting where we only have sample-access to distributions~\cite{AKW-SODA14,CDFS-2018,RWW-ITCS20}. All these works assume mutually independent values, whereas capturing correlations and designing robust algorithms is the main challenge in our work.

 Rinott and Samuel-Cahn~\cite{RS-Journal87,RS-Journal91,RS-Journal92} study correlated prophet inequalities. However, their techniques are not applicable to our work because their positive results hold only for negatively correlated values. Furthermore, their benchmark is the expected maximum of independent values having the same marginal distributions. This benchmark  could be a  factor $n$ larger than the expected maximum for positively correlated values. 
 
 Our approach via the augmented prophets problem is also related to the line of work on designing robust stochastic optimization algorithms. Since  algorithms that assume \emph{known} input distributions tend to over-fit, here the goal is to design algorithms that are robust to adversarial noise~(see~\cite{Dia-tut,Moitra18,DiakonikolasKK016,LaiRV16,CharikarSV17,DiakonikolasKK018,EKM-TEAC18,LykourisML18,BGSZ-ITCS20} and references therein). Our single-item and multiple-items augmentation algorithms can be seen as robust prophet inequality algorithms that retain their guarantees even when the input distributions are augmented by an adversary. Another relevant reference is that of D{\"{u}}tting and Kesselheim~\cite{DuttingK-EC19}, which gives prophet inequalities  assuming only probability distributions that are $\epsilon$-close (in some metric) to the true distributions. Their technical results, however, are not useful here because augmented distributions can be very far from the original distributions.

\newcommand{\ALG}{\text{ALG}}

\section{Model and  Fixed-Threshold Algorithms} \label{sec:model}

\subsection{Model and Notation}
In the \emph{linear correlations} model, there are $n$ random variables $X_1, X_2, \ldots, X_n$ that linearly depend on $m$ independent nonnegative random variables (sometimes called \emph{features}) $Y_1, Y_2, \ldots, Y_m$ as
\[	\X = A \cdot \Y,
\]
where matrix $A$ only contains non-negative entries. Let $\srow$ denote the row-sparsity of $A$ (maximum number of nonzero entries in any row) and $\scol$ denote the column-sparsity of $A$ (maximum number of nonzero entries in any column).

An online algorithm is initially given $A$ and the distributions of $Y_1,\dots,Y_m$.
Then, it observes the realizations of $X_1,\dots,X_n$ one at a time.
After observing $X_i$, the algorithm decides either to stop and take the reward $X_i$, ending the process, or to reject $X_i$ and continue to $X_{i+1}$.
Given such an algorithm $\ALG$, we abuse notation by writing $\ALG$ for the reward of the algorithm, a random variable.
The algorithm has \emph{an approximation ratio of $\alpha$} for some $\alpha(n,\srow,\scol)$ if for all $n,\srow,\scol$ and all instances of the problem with these parameters,
  \[ \E[\ALG] \geq \frac{1}{\alpha(n, \srow, \scol)} \cdot \E [\max_i X_i] . \]
Such a guarantee is often called a \emph{prophet inequality} because it compares the algorithm to a ``prophet'' that can predict the realizations of all $X_i$ in advance and take $\max_i \{X_i \}$ every time.

We use the notation $(\cdot)^+$ to mean $\max\{\cdot, 0\}$.

Our examples frequently use random variables that are either zero or some fixed positive value.
We say the variable is \emph{active} if it takes its positive value.
We sometimes say that $X_i$ ``includes'' $Y_j$ if $A_{ij} > 0$.


\subsection{Fixed-Threshold Algorithms}
A \emph{fixed-threshold} algorithm selects a single threshold $\tau$ and takes the first arrival $X_i$ that exceeds $\tau$.
We refer to such an algorithm as $\ALG_{\tau}$.
Fixed-threshold algorithms have been very successful in prophet inequality design.
However, our first result shows their severe limitation for even mildly correlated prophet inequalities.

\begin{restatable}{lemma}{fixedThresh} \label{thm:fixedThreshold}
  In the linear correlations model, even for $\srow=\scol=2$ there exist instances where every fixed threshold $\tau$ algorithm $\ALG_{\tau}$ has an approximation ratio at least $\Omega({n})$.
\end{restatable}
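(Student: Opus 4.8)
The plan is to build, for every (large) $n$, a ``path-like'' instance with $\srow=\scol=2$ on which \emph{every} choice of threshold loses a $\Theta(n)$ factor. Take $n-1$ independent features $Y_1,\dots,Y_{n-1}$ with $Y_j$ equal to $1$ with probability $p_j:=M^{-j}$ and $0$ otherwise, where $M:=n$ (assume $n\ge 3$); set $Y_0\equiv Y_n\equiv 0$ for notational uniformity, and let
\[ X_i \;=\; M^{i}\bigl(Y_{i-1}+Y_i\bigr),\qquad i=1,\dots,n. \]
Then $A$ is nonnegative, each $X_i$ involves only $Y_{i-1},Y_i$ (so $\srow=2$), and each $Y_j$ ($1\le j\le n-1$) occurs only in $X_j$ (coefficient $M^{j}$) and $X_{j+1}$ (coefficient $M^{j+1}$), so $\scol=2$. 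The design point: when $Y_j$ is the \emph{only} active feature, exactly two arrivals are nonzero, namely $X_j=M^{j}$ (the ``small'' value at scale $j$) followed by $X_{j+1}=M^{j+1}$ (the ``large'' value at scale $j$), after which all arrivals are $0$.

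First I would lower-bound the benchmark. Since $\Pr[\text{only }Y_j\text{ active}]=M^{-j}\prod_{k\ne j}(1-M^{-k})\ge M^{-j}/2$ (using $\sum_k M^{-k}<\tfrac{1}{M-1}\le\tfrac12$) and on that event $\max_i X_i=X_{j+1}=M^{j+1}$, summing over $j$ gives $\E[\max_i X_i]\ge \sum_{j=1}^{n-1}\tfrac12 M^{-j}\cdot M^{j+1}=\tfrac{(n-1)M}{2}$. Next I would upper-bound $\E[\ALG_\tau]$ for an \emph{arbitrary} $\tau\ge 0$ (the case $\tau<0$, and the variant forcing the algorithm to take the last arrival, only help and are handled the same way since $\ALG_\tau\le\max_i X_i$ always). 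Condition on the number of active features. If none, $\ALG_\tau=0$. If exactly $Y_j$ is active, the algorithm scans zeros, then $X_j=M^{j}$, then $X_{j+1}=M^{j+1}$, then zeros, so it collects $M^{j}$ when $\tau<M^{j}$, collects $M^{j+1}$ when $M^{j}\le\tau<M^{j+1}$, and collects $0$ when $\tau\ge M^{j+1}$; multiplying by $\Pr[\text{only }Y_j]\le M^{-j}$, the contribution of scale $j$ is at most $1$, at most $M$, or $0$ respectively. Because the intervals $[M^{j},M^{j+1})$ are disjoint, the ``$M$'' case happens for at most one $j$, so the whole single-active contribution is $\le M+(n-2)\le M+n$. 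Finally, on events with $\ge 2$ active features, bound $\ALG_\tau\le\max_i X_i\le\sum_j 2M^{j+1}\mathbf{1}[Y_j\text{ active}]$, giving a contribution at most $\sum_j 2M^{j+1}\cdot\Pr[Y_j\text{ active and some other active}]\le\sum_j 2M^{j+1}\cdot M^{-j}\cdot\tfrac{1}{M-1}\le 4n$. Hence $\E[\ALG_\tau]\le M+5n$, and with $M=n$,
\[ \frac{\E[\max_i X_i]}{\E[\ALG_\tau]}\;\ge\;\frac{(n-1)n/2}{n+5n}\;=\;\frac{n-1}{12}\;=\;\Omega(n), \]
uniformly in $\tau$, which is the claim.

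The heart of the argument — and the reason for the geometric spacing — is forcing \emph{all} $\tau$ to fail at once: a threshold tuned at scale $j$ (to skip the worthless small value $M^{j}$ and grab the large value $M^{j+1}$) is, at every higher scale, so large it catches nothing, and at every lower scale it catches only the small value, worth merely $M^{-j}\cdot M^{j}=1$ in expectation. The second thing I expect to need care with is the tail from $\ge 2$ simultaneously active features: those realizations are astronomically large ($\sim M^{n}$), but the probability $M^{-j}$ of $Y_j$ being active is precisely tuned to cancel its coefficient $M^{j+1}$ down to $O(1)$, while the extra ``some other feature also active'' factor is $\le 1$; this is exactly why one wants $M$ not too small (choosing $M=n$ suffices). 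Beyond that the computation is routine bookkeeping of three disjoint cases.
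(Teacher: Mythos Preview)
Your proof is correct and follows essentially the same approach as the paper's: both build a geometric ``ladder'' instance (the paper's $\srow=\scol=2$ tower, Example~\ref{ex:2-tower}) in which each feature shows up with a small coefficient in one arrival and a large coefficient in the next, so that any fixed threshold can ``win'' (grab the large value) at only one scale while contributing $O(1)$ at every other scale. The only cosmetic differences are that the paper puts the geometric growth into the $Y_j$'s themselves (taking $Y_j\in\{0,\epsilon^{-j}\}$ with row coefficients $1,\epsilon$) and sends $\epsilon\to 0$ so that multi-active events are negligible, whereas you put the growth into $A$, fix $M=n$, and bound the $\ge 2$-active tail explicitly.
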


The full proof is deferred to Appendix \ref{app:proofs}, but the instance is important and described next.
Intuitively, the problem is that cases where an arrival $X_i$ just crosses the threshold may be correlated with some later $X_{i'}$ being very large.
Taking $X_i$ prevents the algorithm from ever obtaining the gains from $X_{i'}$.
Our proof uses the following ``tower'' variables, which will also be useful later.

\begin{definition}[Tower $Y$ variables]
  Given $\epsilon>0$,  define the \emph{tower $\Y$ variables} as  $Y_i = \frac{1}{\epsilon^i}$ with probability $\epsilon^i$ and $Y_i = 0$ otherwise for each $i \in \{1,\dots,m\}$.
\end{definition}

\begin{example}[$\srow=\scol=2$ tower instance] \label{ex:2-tower}
  Take the tower $\Y$ variables.
  Let $A$ be an $n \times n$ matrix with entry $A_{i,i} = 1$ for all $i$ and $A_{i,i+1} = \epsilon$ for $i \in [1,n-1]$.  All other entries are $0$. Visually, 
  \begin{align*}
    X_1 = Y_1 + \epsilon Y_2,  \qquad     X_2 = Y_2 + \epsilon Y_3 ,   \qquad \ldots, \qquad
    X_n = Y_n .
  \end{align*}
\IGNORE{  \begin{align*}
    X_1 &= Y_1 + \epsilon Y_2  \\
    X_2 &= Y_2 + \epsilon Y_3  \\
    \ldots  \\
    X_n &= Y_n .
  \end{align*}
  }
We have $\scol = \srow =2$.  The point is that if $X_i$ is nonzero, then almost certainly $X_i = \frac{1}{\epsilon^i}$.
  But in this case, a threshold algorithm cannot distinguish between the more likely case that $Y_i = \frac{1}{\epsilon^i}$, in which case it should stop and take $X_i$, and the unlikely case that $Y_i = 0$ and $Y_{i+1} = \frac{1}{\epsilon^{i+1}}$, in which case it should wait and take $X_{i+1}$. 
  
  Indeed, these coefficients of matrix $A$ play an important role, and in Appendix~\ref{sec:unwtdLinCorr} we show that when entries of $A$ are restricted to being only $0$ or $1$, there exists a constant-factor approximation fixed-threshold algorithm. Roughly this happens because each $Y_j$ has ``limited influence'' on any $X_i$: either $Y_j$ appears with coefficient $0$ and has no influence on $X_i$, or it appears with coefficient $1$ and has the same influence on every such $X_i$.
  
 \begin{restatable}{theorem}{unwtdLinCorr}\label{thm:unwtdLinCorr}
The unweighted linear correlations problem has a fixed threshold constant-factor approximation algorithm.
\end{restatable}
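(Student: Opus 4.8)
I would use the fixed threshold $\tau := \E[\max_i X_i]/K$ for a large absolute constant $K$, and show $\E[\ALG_\tau]=\Omega(\E[\max_i X_i])$. Write $X_i=\sum_{j\in S_i}Y_j$ with $S_i:=\{j:A_{ij}=1\}$, discard features not used by any $S_i$, and for each surviving feature let $i_j:=\min\{i:j\in S_i\}$ be its first appearance. Define the \emph{fresh part} $Z_i:=\sum_{j:\,i_j=i}Y_j$ of $X_i$. Because coefficients are $0/1$, every feature contributes its \emph{full} value to exactly one fresh part, so $X_i\ge Z_i$ and $\sum_i Z_i=\sum_j Y_j\ge M$, where $M:=\max_i X_i$. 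Since the index sets $\{j:i_j=i\}$ are disjoint, $Z_1,\dots,Z_n$ are mutually independent, and $Z_i$ is independent of the event ``$\ALG_\tau$ is still running when $X_i$ arrives'' (that event depends only on $X_1,\dots,X_{i-1}$, hence only on features in $\bigcup_{i'<i}S_{i'}$, which is disjoint from $\{j:i_j=i\}$). This independence is exactly what the weighted \emph{tower} instance destroys: there the large $Y_{i+1}$ enters $X_i$ only through the tiny coefficient $\epsilon$, so its first appearance carries almost none of its mass.

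Next, the algorithm lower bound. Decomposing $X_i\mathbf 1[X_i\ge\tau]=(X_i-\tau)^+ +\tau\,\mathbf 1[X_i\ge\tau]$, summing over the stopping step, and using $X_i\ge Z_i$ together with the independence above (so $\E[(Z_i-\tau)^+\mathbf 1[R_i]]=\E[(Z_i-\tau)^+]\Pr[R_i]$ and $\Pr[R_i]\ge\Pr[M<\tau]$), one obtains
\[
\E[\ALG_\tau]\ \ge\ \tau\cdot\Pr[M\ge\tau]\ +\ \Pr[M<\tau]\cdot\sum_i\E[(Z_i-\tau)^+].
\]
(Here $R_i$ is the event that the algorithm reaches step $i$; up to the usual convention about $\ge$ vs.\ $>$ for the threshold.)

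Then, the benchmark upper bound. Put $b:=\tau$ and split $Y_j=(Y_j\wedge b)+(Y_j-b)^+$. Then $M\le \widehat M+\max_i\sum_{j\in S_i}(Y_j-b)^+$ with $\widehat M:=\max_i\sum_{j\in S_i}(Y_j\wedge b)$; and since $y\mapsto(y-b)^+$ is convex with value $0$ at $0$, it is superadditive on the nonnegatives, so $\sum_{j\in S_i}(Y_j-b)^+\le\sum_j(Y_j-b)^+=\sum_i\sum_{j:i_j=i}(Y_j-b)^+\le\sum_i(Z_i-b)^+$. Hence $M\le \widehat M+\sum_i(Z_i-\tau)^+$. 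Now $\widehat M$ is a supremum of sums of independent $[0,b]$-bounded variables; a standard Efron--Stein / self-bounding computation (replacing one $Y_j$ changes $\widehat M$ by at most $Y_j\wedge b$, and only when $j$ lies in an optimal $S_i$) gives $\operatorname{Var}(\widehat M)\le b\,\E[\widehat M]$, so by Chebyshev $\operatorname{med}(\widehat M)\ge\E[\widehat M]-\sqrt{2b\,\E[\widehat M]}$, i.e.\ $\E[\widehat M]\le 2\operatorname{med}(\widehat M)+8b\le 2\operatorname{med}(M)+8\tau$ (using $\widehat M\le M$). Combining, $\E[M]\le 2\operatorname{med}(M)+8\tau+\sum_i\E[(Z_i-\tau)^+]$.

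Finally, two cases close the argument. If $\Pr[M\ge\tau]\ge\tfrac12$, the displayed bound gives $\E[\ALG_\tau]\ge\tau/2=\E[M]/(2K)$. Otherwise $\Pr[M\le\tau]\ge\tfrac12$, so $\operatorname{med}(M)\le\tau$; substituting into the benchmark bound with $\tau=\E[M]/K$ and $K$ large enough (e.g.\ $K=20$) yields $\E[M]\le 10\tau+\sum_i\E[(Z_i-\tau)^+]=\tfrac12\E[M]+\sum_i\E[(Z_i-\tau)^+]$, hence $\sum_i\E[(Z_i-\tau)^+]\ge\tfrac12\E[M]$, and the displayed bound then gives $\E[\ALG_\tau]\ge\tfrac12\cdot\tfrac12\E[M]$. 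Either way $\E[\ALG_\tau]=\Omega(\E[M])$. The main obstacle is the concentration step: bounding $\E[\widehat M]$ by $\operatorname{med}(\widehat M)+O(b)$ for a supremum of \emph{correlated} sums of independent bounded variables --- a naive union bound over the $n$ sets $S_i$ loses a $\Theta(\log n)$ factor, so one must use that the sums share features (the Efron--Stein variance bound). The $0/1$ hypothesis is used precisely here (after truncating at $b$, the increments of $\widehat M$ are uniformly $\le b$) and in the fresh-part decomposition (features enter at full strength); both fail for general nonnegative $A$, consistent with Lemma~\ref{thm:fixedThreshold}.
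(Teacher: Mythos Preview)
Your proposal is correct and takes a genuinely different route from the paper. The paper chooses a boundary $\tau$ so that $\Pr[\text{all }Y_j\le\tau]=\tfrac12$, splits $\E[\OPT]$ into a \emph{core} term $\E[\max_i X_i\mid A=\emptyset]$ and a \emph{tail} term $\sum_j p_j\,\E[Y_j\mid Y_j>\tau]$, and then argues that one of three fixed thresholds $\tau_{\text{tail}},\tau_{\text{core}},\tau$ is a constant approximation: the tail is handled by showing (via the Augmentation Lemma with exactly your fresh-part $Z_i$) that $\E[\max_j Y_j]$ is attainable and dominates the tail term; the core is handled by exponential XOS/self-bounding concentration applied to the truncated instance. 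By contrast, you commit to a \emph{single} threshold $\tau=\E[M]/K$, decompose each $Y_j$ pointwise as $(Y_j\wedge\tau)+(Y_j-\tau)^+$, and route the two pieces through the same $\tau$: the truncated part $\widehat M$ is controlled by an Efron--Stein/self-bounding variance bound plus Chebyshev (the weaker cousin of the paper's exponential bound, but enough here), while the excess is bounded by $\sum_i(Z_i-\tau)^+$ via superadditivity of $(\cdot-\tau)^+$ and then matched against the algorithm's surplus term. Both arguments exploit the $0/1$ hypothesis at the same two places (fresh parts carry full mass; truncated increments are uniformly $\le\tau$). Your approach buys a cleaner statement (one threshold rather than a case-dependent choice among three) and a shorter unified analysis; the paper's event-based core/tail split is more modular and makes the role of the Augmentation Lemma explicit. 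The superadditivity step linking $\sum_j(Y_j-\tau)^+$ to $\sum_i(Z_i-\tau)^+$ is a nice shortcut that the paper does not use.
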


\end{example}

This raises the question whether  for a general matrix $A$  any policy can achieve a better approximation, let alone a simple policy.
Our positive results will show that relatively simple \emph{inclusion-threshold} algorithms can achieve tight prophet inequalities.

\begin{definition} \label{defn:inclusionThreshold}
  An \emph{inclusion-threshold} algorithm selects a subset $S \subseteq \{1,\dots,n\}$ and threshold $\tau$, possibly both at random, and selects the first $X_i$ such that $i \in S$ and $X_i \geq \tau$.
  In other words, it commits to a subset $S$ of arrivals and applies a threshold policy to those $X_i$, ignoring the others.
\end{definition}

\section{Our Approach of Handling Correlations via Augmentations} \label{sec:aug}
In analysis of prophet inequalities, the problem is to upper-bound the expected maximum of the variables $X_i$ as compared to one's algorithm.
An important and common approach is to use the fact that for any threshold $\tau$,
\begin{align}
  \E [\max_i X_i] \quad \leq \quad  \tau + \E [\max_i \left(X_i - \tau\right)^+ ] \quad
  \leq \quad \tau + \sum_i \E [ \left(X_i - \tau\right)^+ ] \label{eqn:classic-max-bound}.
\end{align}
When all the arrivals $X_i$ are independent, it is known that one can always select $\tau$ such that the left and right sides of (\ref{eqn:classic-max-bound}) differ by at most a constant factor, i.e., $\frac{e}{e-1} \approx 1.6$ (this is related to the correlation gap~\cite{ADSY-OR12}).
In fact, the prototypical prophets analysis shows that setting some threshold $\tau$ allows $\ALG_{\tau}$ to approximate the right hand side up to a constant factor.
However, when $\{X_i\}$ are correlated, this could be a very loose upper bound. E.g., consider $X_1 = \cdots = X_n = Y_1 \sim \text{Bernoulli}(p)$.
The left side equals $p$ while the right side equals $\tau + np(1-\tau) = np + \tau(1-np) \geq np$ for $p < \frac{1}{n}$.
So the right side can be a factor $n$ larger than the left, and we cannot hope to approximate the right side with any algorithm.

One approach to correlated prophets \emph{could be} a direct analysis of the right-hand side of (\ref{eqn:classic-max-bound}) in cases of limited correlation.
Here, we take a different approach.
The first key idea is to use inclusion-threshold algorithms.
To see why, consider a first attempt: discard certain $X_i$ such that we are only left with a subset that are all independent of each other.
Now a standard prophet algorithm that only considers these $X_i$ would obtain a constant factor of the maximum in this subset.
One could then hope to argue that this subset's maximum approximates the original maximum up to a factor depending on the amount of correlation.
Indeed, one can show that this approach succeeds on the tower instance in Example~\ref{ex:2-tower} with $\srow = \scol = 2$, e.g., by including every other $X_i$.
But in general this approach cannot give tight bounds. This is because 
each $X_i$ contains $\srow$ variables, each of which can appear in up to $\scol-1$ other $X_{i'}$, so including $X_i$ requires eliminating $\approx \srow\scol$ other variables.
Our goal is to achieve approximations to within $\min\{\srow,\scol\}$ factors even if $\max\{\srow,\scol\} = n$.

So in addition to including only a subset of $X_i$, we will use a second key idea: decompose each variable as $X_i = Z_i + W_i$, where the $Z_i$s satisfy independence requirements and $W_i$s are viewed as ``bonus'' augmentations.
We will show that $\E[ \max_i Z_i]$ is an approximation to $\E[ \max_i X_i]$.
Then we will compete with $\E[ \max_i Z_i]$.
However, the augmentations add an additional challenge, requiring us to solve the following problem.

\begin{definition} [Single-Item Augmented Prophets Problem]
The algorithm is given the distributions of a set of independent nonnegative random variables $Z_1,\dots,Z_n$.
  Then, it observes one at a time the realizations of $X_i = Z_i + W_i$ for $i \in [1,n]$, where each $W_i$ is nonnegative and $Z_i$ is independent of $X_1,\dots,X_{i-1}$ for each $i$ (but $W_i$ could depend on $X_1, \ldots, X_{i}$).
  The algorithm chooses at each step whether to continue or stop and obtain value $X_i$.
  It must compete with $\E[\max_i Z_i]$.
\end{definition}
One can view the augmented prophets problem as capturing correlations induced by a mischievous  genie who awards bonuses $W_i \geq 0$ at each step so as to worsen the algorithm's performance.
We note that the genie cannot base her choices on the future, i.e., $W_i$ is a random variable that may be correlated with $Z_{i'}$ if $i' \leq i$ but not if $i' > i$.

Intuitively, it might seem like algorithms for prophets problems should continue to perform well, as the genie can only increase the rewards at each step.
However, this is not true for the classical median stopping rule, i.e, for $\tau = $ the median of $\max_i Z_i$ (see an example in Section~\ref{sec:techniques}).
Luckily,  a different threshold rule is robust:

\begin{lemma}[Single-item Augmentation lemma] \label{lemma:aug}
  For the augmented prophets problem, a fixed threshold algorithm with $\tau = \frac{1}{2} \E [\max_i Z_i]$ guarantees
$ \frac{\E [\ALG_{\tau}]}{\E [\max_i Z_i]} \geq \frac{1}{2} $.
\end{lemma}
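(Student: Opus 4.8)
The plan is to adapt the classical threshold analysis for independent prophet inequalities to the augmented setting, exploiting two facts: that $\ALG_\tau$ stops the first time $X_i \geq \tau$, and that $X_i \geq Z_i$ pointwise. Fix $\tau = \frac12 \E[\max_i Z_i]$. Let $p$ denote the probability that $\ALG_\tau$ never stops, i.e., that $X_i < \tau$ for all $i$; since $X_i \geq Z_i$, this event is contained in $\{Z_i < \tau \text{ for all } i\}$, so $p \leq \Pr[\max_i Z_i < \tau]$. I would split the analysis into the two standard cases: either the algorithm stops and collects some $X_i \geq \tau$, or it never stops. The goal is to lower bound $\E[\ALG_\tau]$ by combining a ``$\tau$ times probability of stopping'' term with a ``surplus above $\tau$'' term, and then show their sum is at least $\frac12 \E[\max_i Z_i]$.

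First I would write, conditioning on whether the algorithm stops,
\begin{equation}
\E[\ALG_\tau] \;\geq\; (1-p)\,\tau \;+\; \sum_{i} \E\big[(X_i - \tau)^+ \cdot \mathbf{1}[\text{$\ALG_\tau$ reaches step $i$}]\big],
\end{equation}
using that when the algorithm stops at step $i$ it gets $X_i = \tau + (X_i - \tau)$, and that $(X_i-\tau)^+$ is only nonzero on the event it actually stops there, which is contained in the event it reaches step $i$. The key point, exactly as in the Kleinberg--Weinberg argument, is that the event $\{\ALG_\tau \text{ reaches step } i\}$ depends only on $X_1,\dots,X_{i-1}$, hence is independent of $Z_i$. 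Now I want to replace $(X_i - \tau)^+$ by $(Z_i - \tau)^+$, which is valid since $X_i \geq Z_i$ implies $(X_i-\tau)^+ \geq (Z_i - \tau)^+$. The event $\{\ALG_\tau \text{ reaches step } i\}$ contains the event $\{Z_{i'} < \tau \text{ for all } i' < i\}$ — wait, more carefully, I would instead lower bound the indicator of reaching step $i$ by the indicator that $X_{i'} < \tau$ for all $i' < i$; and since this is implied by... no, actually reaching step $i$ IS exactly the event $X_{i'} < \tau$ for all $i' < i$. So I would bound $\Pr[\ALG_\tau \text{ reaches step } i] = \Pr[\forall i' < i: X_{i'} < \tau] \geq \Pr[\forall i' : X_{i'} < \tau] = p' \geq$ ... hmm, I actually want a lower bound in terms of $p$, so I would use $\Pr[\ALG_\tau \text{ reaches step } i] \geq \Pr[\ALG_\tau \text{ never stops}] = p$, but $p$ here should be $\Pr[\forall i: X_i < \tau]$, and this is $\geq$ nothing useful directly. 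The clean move: set $q := \Pr[\forall i : Z_i < \tau]$; then $\Pr[\forall i : X_i < \tau] \leq q$, and also $\Pr[\ALG_\tau$ reaches step $i] \geq \Pr[\forall i: X_i < \tau]$, which is not obviously $\geq$ anything. Let me restructure using the standard trick: bound $\Pr[\text{reach } i] \geq \Pr[\forall i': X_{i'} < \tau] =: 1 - (\text{stop prob})$, and denote the stopping probability by $s$, so each term gets weight $\geq 1-s$, and the non-stopping contribution is handled separately. Then
\begin{equation}
\E[\ALG_\tau] \;\geq\; s\,\tau \;+\; (1-s)\sum_i \E[(Z_i - \tau)^+],
\end{equation}
and I combine this with the standard bound $\E[\max_i Z_i] \leq \tau + \sum_i \E[(Z_i - \tau)^+]$ from inequality~(\ref{eqn:classic-max-bound}). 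Writing $R := \sum_i \E[(Z_i-\tau)^+]$, we have $\E[\ALG_\tau] \geq s\tau + (1-s)R$ and $\E[\max_i Z_i] \leq \tau + R = 2\tau + R - \tau$. If $R \geq \tau$ then $\E[\ALG_\tau] \geq \min\{s\tau + (1-s)R\} \geq \tau \geq \frac12\E[\max Z_i]$ would need care; if $R < \tau$, then... the convex combination $s\tau + (1-s)R \geq \min\{\tau, R\} = R$, and also $\geq$ ... Actually the cleanest is: $s\tau + (1-s)R \geq$ ? We want $\geq \frac12(\tau + R) = \tau$ (by choice of $\tau$, since $\E[\max Z_i] = 2\tau$ means $\tau + R \geq 2\tau$, i.e. $R \geq \tau$). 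So in fact $R \geq \tau$ always holds, hence $s\tau + (1-s)R \geq s\tau + (1-s)\tau = \tau = \frac12 \E[\max_i Z_i]$, done.

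So the real structure is: (1) $\E[\max_i Z_i] \leq \tau + R$ forces $R \geq \tau$ since $\E[\max_i Z_i] = 2\tau$; (2) the stopping-time decomposition gives $\E[\ALG_\tau] \geq s\tau + (1-s)R \geq \tau$; hence $\E[\ALG_\tau] \geq \tau = \frac12 \E[\max_i Z_i]$. The main obstacle, and the only place the augmentation genie enters, is justifying step (2): I must verify that $(X_i - \tau)^+ \geq (Z_i - \tau)^+$ (pointwise, immediate from $W_i \geq 0$), that the indicator of ``$\ALG_\tau$ reaches step $i$'' is a function of $X_1,\dots,X_{i-1}$ only and therefore independent of $Z_i$ (this uses precisely the hypothesis that $Z_i$ is independent of $X_1,\dots,X_{i-1}$ — note $W_i$ may depend on the past, which is fine, we never need independence of $W_i$), and that this indicator is $\geq \mathbf{1}[\ALG_\tau$ never stops$]$, whose probability is $s' := \Pr[\forall i: X_i < \tau]$ — and then I should be careful that the weight in the decomposition is $\geq 1 - s$ where $s$ is the actual stopping probability; since reaching step $i$ is a superset of never-stopping, and $s = 1 - \Pr[\text{never stop}]$, each surplus term carries weight $\geq \Pr[\text{never stop}] = 1-s$, while the $\tau$-collection term carries weight exactly $s$ (probability of stopping at all), giving $\E[\ALG_\tau] \geq s\tau + (1-s)R$. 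With $R \geq \tau$ this is $\geq \tau$, completing the proof. I would double-check the edge case $s = 1$ (always stops): then $\E[\ALG_\tau] \geq \tau$ trivially, consistent.
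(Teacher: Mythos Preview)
Your proposal is correct and follows essentially the same approach as the paper's proof: decompose $\E[\ALG_\tau]$ into $s\tau$ plus a sum of surplus terms $\E[(X_i-\tau)^+\cdot\mathbf{1}[\text{reach }i]]$, lower-bound each surplus by $(Z_i-\tau)^+$ using $W_i\geq 0$, factor out the reaching-indicator via the assumed independence of $Z_i$ from $X_1,\dots,X_{i-1}$, and bound $\Pr[\text{reach }i]\geq 1-s$. The paper's final step uses $\sum_i(Z_i-\tau)^+\geq\max_i Z_i-\tau$ directly, while you phrase the same inequality as $R\geq\tau$ via $\E[\max_i Z_i]\leq\tau+R$; these are identical in content.
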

\begin{proof}
  We ``augment'' a standard prophet inequality proof.
  Let $P = \Pr[ \max_i X_i \geq \tau]$. Now,
  \begin{align*}
    \E [\ALG_{\tau}] &=     P\cdot \tau + \sum_i \Pr[ X_{i'} < \tau ~ (\forall i' < i)] \cdot \E \big[ (X_i-\tau)^+  \mid X_{i'} < \tau ~ (\forall i' < i) \big]\\
                  &\geq P\cdot \tau + \sum_i (1-P) \cdot \E \big[ (X_i-\tau)^+  \mid X_{i'} < \tau ~ (\forall i' < i) \big]
\end{align*}
because  $\ALG_{\tau}$ selects no element with probability  $1-P$.
Nonnegativity of $W_i$ implies $X_i \geq  Z_i$, so
  \begin{align*}
   \E [\ALG_{\tau}]   &\geq P\cdot \tau + \sum_i (1-P) \cdot \E \big[ (Z_i-\tau)^+  \mid X_{i'} < \tau ~ (\forall i' < i) \big] \\
   & = P\cdot \tau +  (1-P) \cdot  \E \Big[ \sum_i  (Z_i-\tau)^+  \Big]
\end{align*}
because $Z_i$ is independent of the event $\{ X_{i'} < \tau ~ (\forall i' < i) \} $.  Since $\sum_i (Z_i - \tau)^+ \geq \max_i(Z_i - \tau)^+ $, 
  \begin{align*}
   \E [\ALG_{\tau}]   &\geq P\cdot \tau + (1-P) \cdot \E \big[\max_i (Z_i - \tau)^+\big]  \\
                  &\geq P\cdot \tau + (1-P)\cdot  \E\big[ \max_i Z_i - \tau \big]  
                  \quad = \quad    P\cdot \tau + (1-P) \tau   \quad =  \quad  \tau.
  \end{align*}
  This proves that $\E [\ALG_{\tau} ] \geq \frac{1}{2} \E [\max_i Z_i]$, as claimed.
\end{proof}

\paragraph{Multiple Items.}
The key idea in proving our $1+o(1)$ approximation ratio result for selecting multiple items for bounded $\scol$ is  to extend the  augmentation lemma to  cardinality constraints.

\begin{defn} [Multiple-Items Augmented Prophets Problem]
  In the \emph{augmented prophets problem with cardinality constraint $r$}, the algorithm is given the distributions of a set of independent nonnegative random variables $Z_1,\dots,Z_n$.
  Then, it observes one at a time the realizations of $X_i = Z_i + W_i$ for $i\in [1,n]$, where each $W_i$ is nonnegative and satisfies that each $Z_i$ is independent of $X_1,\dots,X_{i-1}$.
  The algorithm chooses at each step to reject or accept $X_i$ subject to taking at most $r$ variables total.
  It must compete with $\E\left[\sum_{i=1}^{r} Z^{(i)}\right]$, where $Z^{(i)}$ is the $i$th-largest $\{Z_1,\dots,Z_n\}$.
\end{defn}

Since none of the prior $1+o(1)$ approximation ratio algorithms for multiple-items is robust to augmentations, in Section~\ref{sec:multipItemsAug} we design a new algorithm to prove the following multiple-items augmentation lemma.
\begin{restatable}{lemma}{rAug}\emph{(Multiple-Items Augmentation Lemma)}. \label{lemma:r-aug}
  There is an algorithm for the augmented prophets problem with cardinality constraint $r$ achieving a $\Big(1 + O\big(\frac{(\log r)^{3/2}}{r^{1/4}}\big)\Big)$ approximation ratio.
\end{restatable}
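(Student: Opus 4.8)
The plan is to generalize the single-item argument in Lemma~\ref{lemma:aug} to the cardinality-$r$ setting, but since a single fixed threshold cannot achieve $1+o(1)$ for $r$-selection even in the independent case, I would use a \emph{sequence of decreasing thresholds} combined with cardinality ``buckets.'' Concretely, partition the $r$ budget into blocks $B_1,\dots,B_k$ (think $k \approx \log r$ blocks each of geometrically varying size), and associate to block $B_\ell$ a threshold $\tau_\ell$, chosen so that $\tau_\ell$ is roughly the value such that the expected number of $Z_i$ exceeding $\tau_\ell$ matches the cumulative capacity up through block $\ell$. The algorithm processes arrivals online: maintaining a current ``active'' threshold, it accepts $X_i$ whenever $X_i \geq \tau_{\text{cur}}$ and the current bucket is not yet full, and when a bucket fills it lowers the threshold to the next $\tau_{\ell+1}$. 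The thresholds $\tau_\ell$ are computed from the (known) distributions of the $Z_i$ alone, so the algorithm never needs to see the $W_i$.

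The analysis would follow the ``augment a standard proof'' template from the proof of Lemma~\ref{lemma:aug}. First, the benchmark $\E[\sum_{i=1}^r Z^{(i)}]$ is decomposed across the threshold levels: $\E[\sum_{i=1}^r Z^{(i)}] \le \sum_\ell |B_\ell|\,\tau_\ell + \sum_\ell \E[\sum_i (Z_i - \tau_\ell)^+ \cdot \mathbf{1}\{\text{level }\ell\}]$, a telescoping bound analogous to $\E[\max_i Z_i] \le \tau + \sum_i \E[(Z_i-\tau)^+]$. Second, one lower-bounds $\E[\ALG]$: condition on the history before step $i$, use $X_i \ge Z_i$ (nonnegativity of $W_i$) to replace $X_i$ by $Z_i$ inside the $(\cdot - \tau)^+$ terms, and use that $Z_i$ is independent of $X_1,\dots,X_{i-1}$ (hence independent of whatever bucket state the algorithm is in) to pull the expectation through. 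The key quantitative step is showing that each bucket $B_\ell$ fills with probability close to $1$ \emph{and} that, conditioned on not filling, the residual $\E[(Z_i-\tau_\ell)^+]$ mass the algorithm collects is essentially all of it — exactly as the $P\cdot\tau + (1-P)\cdot(\text{residual})$ split in Lemma~\ref{lemma:aug}, but now summed over levels. Choosing the number of levels $k \asymp \log r$ and the block sizes so that the per-level concentration error is $O(1/\sqrt{r/k})$ per level should yield a total loss of $O\!\big((\log r)^{3/2}/r^{1/4}\big)$ after optimizing the split between ``threshold too high, bucket underfills'' and ``threshold too low, we overshoot the benchmark.''

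The main obstacle, I expect, is the concentration/coupling argument at the boundaries between buckets. In the independent-$Z_i$ case one would just use a Chernoff bound on $\sum_i \mathbf{1}\{Z_i \ge \tau_\ell\}$ to argue bucket $\ell$ fills whp; but here the algorithm's state (which bucket is active when $X_i$ arrives) depends on the realized $W_{i'}$ for $i' < i$, which are adversarial and correlated with the $Z_{i'}$. The delicate point is that $W_i$ can only \emph{help} the algorithm fill buckets faster, yet filling a bucket ``too fast'' means some large-$Z_i$ arrivals get caught at a too-high threshold bucket rather than contributing at their natural level — so one must show the adversary cannot exploit this. I would handle it by a monotonicity/coupling argument: run a phantom copy of the algorithm on the $Z_i$ alone, show the augmented run's accepted set stochastically dominates (value-wise) a ``shifted'' version of the phantom run's, and bound the shift. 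Getting the error terms from this coupling to match the claimed $1+O((\log r)^{3/2}/r^{1/4})$ rate — rather than something weaker — is where the real work lies, and it likely forces the specific choice of geometric bucket sizes and logarithmically many thresholds.
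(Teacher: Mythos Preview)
Your high-level plan---multiple thresholds with per-level capacity budgets, plus an analysis that replaces $X_i$ by $Z_i$ inside $(\cdot-\tau)^+$ and uses independence of $Z_i$ from the past---is the right shape and matches the paper. But the two concrete design choices you make diverge from the paper in a way that leaves exactly the step you flag as ``where the real work lies'' unsolved.

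The paper does \emph{not} use $\sim\log r$ quantile-based thresholds with a threshold-lowering rule. It uses $c=\Theta\big(\tfrac{1}{\epsilon}\log(r/\epsilon^2)\big)$ thresholds that are \emph{value}-spaced by a fixed ratio, $\tau_j=(1-\epsilon)^j\,\E[\OPT]/\epsilon$, and then sets each bucket's capacity to $\tilde r_j=r_j+\beta$, where $r_j$ is the expected number of top-$r$ $Z_i$'s landing in $[\tau_j,\tau_{j-1})$ and $\beta=\Theta(\sqrt{r\log(c/\epsilon)})$ is a uniform slack. An arrival is placed in the bucket matching the \emph{value} of $X_i$, and if that bucket is full it \emph{trickles down} to $j+1,j+2,\dots$. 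There is also a dedicated capacity-$1$ bucket at level~$0$ for ``heavy'' arrivals with $X_i\ge\tau_0=\E[\OPT]/\epsilon$, handled by a one-line argument (if it ever fills, that item alone is worth $\E[\OPT]/\epsilon$). Finally, this relaxed algorithm $\ALG'$ can overshoot $r$ by $c\beta$ slots; the actual algorithm independently discards each accepted item with probability $\epsilon$, and a Chernoff bound shows this stays under $r$ with high probability for $\epsilon\gtrsim(\log r)^{3/2}/r^{1/4}$---this balance is where the claimed rate comes from.

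The payoff of value-spaced buckets with trickle-down is that it eliminates the coupling entirely. The core step is a deterministic counting claim: conditioned on the high-probability event $C$ that no $|O_j|$ exceeds $\tilde r_j$, one shows by induction on $j$ that $|B_0^{\text{light}}|+\sum_{j'\le j}|B_{j'}|\ge\sum_{j'\le j}|O_{j'}|$. If bucket $j$ is full then $|B_j|=\tilde r_j\ge|O_j|$ by $C$; if not, then every $\OPT_2'$ item at level $\le j$ was accepted at level $\le j$, since it could not have trickled past an unfull bucket. Because consecutive thresholds differ by exactly $1-\epsilon$, this count domination converts to $\ALG_2'\ge(1-\epsilon)\,\OPT_2'$ pointwise given $C$: one builds a one-to-one map from $\OPT_2'$ items to $\ALG_2'$ items in the same or higher bucket. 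The adversary's worst move---boosting a small $Z_i$ into a high bucket---is neutralized because the boosted item contributes within $(1-\epsilon)$ of whatever it displaces, and the displaced item trickles into the slot the booster would otherwise have vacated. Your time-indexed scheme lacks this structure: with quantile-based levels the value ratio between consecutive $\tau_\ell$'s is uncontrolled by the algorithm, so an augmented item filling a slot at level $\ell$ need not be within $1-o(1)$ of the genuine item it crowds out, and your phantom-run coupling sketch does not explain how that loss is recovered.
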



Next, we utilize the single-item augmentation lemma, along with careful decompositions of $\{X_i\}$, to separately attack the single-item problem for the cases of bounded $\srow$ and $\scol$.




\section{Selecting a Single Item}
In this section we prove our main theorem.
Later, we will also  address cases where the algorithm can take multiple items.
\begin{restatable}{theorem}{UpperMain} \label{thm:main}
  There exists an inclusion-threshold algorithm for the linearly correlated prophet problem with approximation ratio $O({\min\{{\scol}, {\srow} \}})$.
\end{restatable}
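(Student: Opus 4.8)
The plan is to establish an $O(\scol)$-approximation and an $O(\srow)$-approximation separately and run whichever scheme matches $\min\{\srow,\scol\}$. Both fit a single template built on the single-item augmentation lemma (Lemma~\ref{lemma:aug}). The algorithm picks a random inclusion set $S\subseteq[n]$ and, for each $i\in S$, a feature set $F_i\subseteq\{j:A_{ij}>0\}$ chosen so that whenever $j\in F_i$, no earlier arrival $i'\in S$ (with $i'<i$) includes $Y_j$ (in particular the $F_i$ are then pairwise disjoint). Setting $Z_i:=\sum_{j\in F_i}A_{ij}Y_j$ and $W_i:=X_i-Z_i=\sum_{j\notin F_i}A_{ij}Y_j\ge 0$, the earliest-includer condition makes $Z_i$ independent of $(X_{i'})_{i'\in S,\,i'<i}$ given $S$, so conditioned on $S$ the subsequence $(X_i)_{i\in S}$ is a legal instance of the single-item augmented prophets problem with base variables $(Z_i)_{i\in S}$. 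Running $\ALG_{\tau(S)}$ on it with $\tau(S)=\tfrac12\E[\max_{i\in S}Z_i\mid S]$ (computable from $A$ and the laws of the $Y_j$) and averaging over $S$ gives, by Lemma~\ref{lemma:aug}, $\E[\ALG]\ge\tfrac12\,\E_{S,Y}[\max_{i\in S}Z_i]$. It then suffices, in each regime, to design $(S,\{F_i\})$ so that $\E_{S,Y}[\max_{i\in S}Z_i]=\Omega\big(1/\min\{\srow,\scol\}\big)\cdot\E_Y[\max_i X_i]$.

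For bounded $\scol$, put each $i$ into $S$ independently with probability $1/\scol$ and let $F_i$ consist of the features of $X_i$ included by no other option of $S$ (so the $F_i$ are disjoint, as needed). To bound $\E_{S,Y}[\max_{i\in S}Z_i]$, fix $Y$, let $i^\star=\argmax_i X_i(Y)$, and condition on $i^\star\in S$ (probability $1/\scol$): a feature $j$ of $X_{i^\star}$ lands in $F_{i^\star}$ iff none of its at most $\scol-1$ other includers is in $S$, an event of probability $\ge(1-1/\scol)^{\scol-1}\ge 1/e$ that depends only on $S$, not on $Y$. Hence $\E_S[Z_{i^\star}\mid i^\star\in S,Y]\ge\tfrac1e X_{i^\star}(Y)$, so $\E_S[\max_{i\in S}Z_i\mid Y]\ge\tfrac1{e\scol}\max_i X_i(Y)$; averaging over $Y$ gives the $\Omega(1/\scol)$ bound, and hence an $O(\scol)$-approximation.

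The bounded-$\srow$ case is where I expect the real difficulty, since direct subsampling of arrivals fails once $\scol$ is large (already the tower instance of Example~\ref{ex:2-tower}). First I would change benchmark: with $\hat A_j=\max_i A_{ij}$, $b_j=\argmax_i A_{ij}$, and $V_j=\hat A_j Y_j$ (the $V_j$ independent), row-sparsity gives $\max_i X_i=X_{i^\star}\le\sum_{j\in\mathrm{supp}(i^\star)}V_j\le \srow\cdot\max_j V_j$ pointwise, so $\E[\max_i X_i]\le \srow\,\E[\max_j V_j]$. Taking the greedy assignment $F_i=\{\,j: i\text{ is the earliest arrival of }S\text{ with }A_{ij}>0\,\}$, it then suffices to design a randomized inclusion set $S$ such that for \emph{every} feature $j$, with probability $\Omega(1)$ the option $b_j$ is in $S$ and is the earliest arrival of $S$ including $Y_j$: then, fixing $Y$ and letting $j^\star=\argmax_j V_j$, we get $\E_S[\max_{i\in S}Z_i\mid Y]\ge\Pr[b_{j^\star}\text{ protected}]\cdot V_{j^\star}=\Omega(1)\max_j V_j$, which through the template and the benchmark reduction yields an $O(\srow)$-approximation. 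Producing such a subsampling is a contention-resolution problem on the ``blocking'' digraph $D$ carrying an arc $i\to b_j$ whenever $i$ is an earlier includer of $Y_j$: since each option includes at most $\srow$ features, $D$ has out-degree at most $\srow$, and it is acyclic because arcs go forward in arrival order. The crux is that a target $b_j$ can have unboundedly many blockers while itself blocking or targeting others, so independent inclusion is hopeless and the choices must be correlated; I would exploit the bounded out-degree and acyclic layering of $D$ to build the scheme, and this graph-theoretic construction is where essentially all the work lies. Taking the better of the two algorithms then gives the claimed $O(\min\{\srow,\scol\})$-approximation.
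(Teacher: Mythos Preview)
Your handling of the column-sparsity case is essentially the paper's argument (Proposition~\ref{prop:colSparsity}), so no issues there.

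The row-sparsity case, however, has a genuine gap, and it is not merely that the graph-theoretic construction is ``left to do''---the target you set up cannot be reached. Concretely, consider the windowed tower with row sparsity $c$: option $i$ includes $Y_i,Y_{i+1},\dots,Y_{i+c-1}$ with coefficients $1,\epsilon,\dots,\epsilon^{c-1}$, so $b_j=j$ and the blocker set is $B_j^-=\{j-c+1,\dots,j-1\}$. The protection event $E_j=\{j\in S,\ B_j^-\cap S=\emptyset\}$ satisfies: for any $j<j'\le j+c-1$, $E_j$ forces $j\in S$ while $E_{j'}$ forces $j\notin S$, so $E_j,\dots,E_{j+c-1}$ are pairwise disjoint. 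Hence $\min_j\Pr[E_j]\le 1/c=1/\srow$ for \emph{every} randomized inclusion rule, and your claimed $\Omega(1)$ protection is impossible. Plugging the achievable $\Theta(1/\srow)$ protection into your framework (benchmark reduction $\E[\max_i X_i]\le \srow\,\E[\max_j V_j]$, then protection) yields only an $O(\srow^2)$ approximation, not $O(\srow)$. One can pad this instance with extra options that each include a single $Y_j$ plus a fresh private feature, driving $\scol$ arbitrarily large while keeping $\srow=c$, so the column-sparsity algorithm cannot rescue the bound.

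The paper avoids this double loss by \emph{not} passing through the benchmark $\max_j V_j$. It builds the graph on \emph{features} rather than options (edge $(j,j')$ when the primary option of $j$ also includes $j'$), gets a representative construction with $\Pr[j\in T]=\Omega(1/\srow)$, sets $Z_i=Y_{j(i)}$, and then compares $\E[\max_i Z_i]$ to $\E[\max_i X_i]$ directly via a max-to-sum trick: for the realized maximizer $i^\star$, each $j$ with $A_{i^\star j}>0$ is the \emph{unique} survivor of $R_{i^\star}\cap T$ with probability $\Omega(1/\srow)$ (Claim~\ref{claim:rowSparsity-unique}), which lets one lower-bound the expected max of $\{Z_i\}$ by $\Omega(1/\srow)\sum_j A_{i^\star j}Y_j=\Omega(1/\srow)X_{i^\star}$. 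That single $\srow$ loss is the whole story. Your blocking digraph on options with bounded out-degree is a reasonable object, but it encodes a harder contention-resolution goal than is needed (and, as shown, one that is infeasible at the level you require).
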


  We will first show in Proposition~\ref{prop:colSparsity} that an inclusion-threshold algorithm guarantees $O({\scol})$; then in Proposition~\ref{prop:rowSparsity} that an inclusion-threshold algorithm achieves $O({\srow})$.   The algorithm that runs one of these according to which of $\scol,\srow$ is smaller is an inclusion-threshold algorithm achieving the claimed performance.

We will see that bounded column sparsity is the easier case, requiring a simpler algorithm and analysis.
For the case of bounded row sparsity, we will need much more careful reasoning about dependencies and correlations between $Y_j$.
This difficulty will also manifest quantitatively when we move to the cardinality-constraint setting in Section \ref{sec:multipleItems}, where better bounds will be achievable only in the case of bounded column sparsity.

\subsection{Bounded Column Sparsity}
Recall that column sparsity $\scol$ is the maximum number of times a given feature $Y_j$ may appear with nonzero coefficient.
We now give a relatively straightforward algorithm for achieving $\Omega\left(\frac{1}{\scol}\right)$ fraction of $\E[ \max_i X_i]$.
The idea is similar to the ``first attempt'' described in Section \ref{sec:aug}, using the single-item augmentation lemma (Lemma \ref{lemma:aug}) to overcome the challenges discussed there.

\begin{proposition}\label{prop:colSparsity}
  There exists an inclusion-threshold algorithm for the linearly correlated prophet problem with approximation ratio ${2e \cdot \scol}$.
\end{proposition}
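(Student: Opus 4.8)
The plan is to use a simple random-subsampling inclusion set together with the single-item augmentation lemma (Lemma~\ref{lemma:aug}). First I would build a decomposition $X_i = Z_i + W_i$ relative to a randomly chosen inclusion set $S$. Independently include each feature index $j \in [m]$ in a set $T$ with probability $1/\scol$. Then let $Z_i$ be the ``private'' part of $X_i$: the sum $\sum_{j \in T_i} A_{ij} Y_j$ where $T_i$ is the set of features of $X_i$ that lie in $T$ \emph{and} are not shared with any earlier arrival that also has a feature in $T$ — more precisely, set things up so that conditioned on the realized membership pattern, the $Z_i$ restricted to the inclusion set $S := \{ i : Z_i \not\equiv 0 \}$ are mutually independent, and $W_i := X_i - Z_i \geq 0$. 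The cleanest way: include $X_i$ in $S$ iff feature $j^* = \text{(the index realizing, say, the largest-coefficient feature of } X_i)$ lands in $T$ and no $X_{i'}$ with $i' < i$ claimed that same $Y_{j^*}$; since column sparsity is $\scol$, each $Y_j$ is claimed by at most one arrival with probability bounded below, and conditioning appropriately the private parts $Z_i = A_{i j^*} Y_{j^*}$ over $i \in S$ are independent because they use distinct, fresh $Y_j$'s.

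The key steps, in order: (1) Define the random feature set $T$ and the resulting inclusion set $S$ and private variables $Z_i$ as above, and verify that (conditioned on $T$ and on which arrival ``owns'' each feature) the $\{Z_i\}_{i \in S}$ are mutually independent nonnegative random variables with $W_i = X_i - Z_i \geq 0$, and that $Z_i$ is independent of $X_1,\dots,X_{i-1}$ (this holds because $Z_i$ depends only on the fresh feature $Y_{j^*}$ that no earlier arrival used). (2) Run the augmented-prophets threshold algorithm of Lemma~\ref{lemma:aug} on the arrivals in $S$, with $\tau = \tfrac12 \E[\max_{i \in S} Z_i]$; by that lemma it collects at least $\tfrac12 \E[\max_{i \in S} Z_i]$ in expectation. (3) Lower-bound $\E[\max_{i \in S} Z_i]$ against $\E[\max_i X_i]$. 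For this, fix the realization of all $Y_j$ and let $i^\star = \argmax_i X_i$; with probability $\tfrac{1}{\scol}$ the largest-coefficient feature $Y_{j^*(i^\star)}$ of $X_{i^\star}$ is in $T$, and — using that this feature has column sparsity at most $\scol$ so the number of competitors for ownership is bounded — with constant probability $i^\star$ actually ends up in $S$ and $Z_{i^\star} = A_{i^\star j^*} Y_{j^*} \geq \tfrac{1}{\srow} X_{i^\star}$... no: better, choose $j^*(i^\star)$ to be the feature contributing the largest \emph{share} $A_{i^\star j} Y_j$ of $X_{i^\star}$, so $A_{i^\star j^*} Y_{j^*} \geq X_{i^\star}/\srow$ — but we want only a $\scol$ loss, so instead I would argue in expectation over $T$: $\E[\max_{i \in S} Z_i] \geq \E_T \E_Y [ \mathbf{1}[i^\star \in S]\, Z_{i^\star}]$ and bound $\E[Z_{i^\star} \mid i^\star \in S]$ by a constant fraction of $\E[X_{i^\star}]$ by summing the contributions of each feature weighted by its inclusion probability $1/\scol$ and using column sparsity to control the ``already claimed'' event, yielding $\E[\max_{i\in S} Z_i] \geq \tfrac{1}{e \cdot \scol}\E[\max_i X_i]$ (the $e$ coming from $(1-1/\scol)^{\scol-1} \geq 1/e$). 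Combining (2) and (3) gives the $2e\,\scol$ bound.

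The main obstacle I expect is step (1) together with the independence bookkeeping in step (3): the inclusion set $S$ and the private parts $Z_i$ are both random and correlated (they depend jointly on $T$), so one must be careful that conditioned on the random structure that determines $S$, the $Z_i$ over $i \in S$ are genuinely independent and that $Z_i$ is independent of the past arrivals $X_1,\dots,X_{i-1}$ — otherwise Lemma~\ref{lemma:aug} does not apply. The right move is to reveal $T$ first (and for each included feature, which arrival ``owns'' it), treat that as defining a deterministic instance of the augmented prophets problem, apply Lemma~\ref{lemma:aug} conditionally, and only then take expectations over $T$ and over $Y$; the factor-$\scol$ loss is then entirely accounted for by the probability that the maximizing arrival's dominant feature is both selected into $T$ and not pre-claimed.
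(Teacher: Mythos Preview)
Your high-level strategy matches the paper's: randomly subsample to build an inclusion set $S$, decompose $X_i = Z_i + W_i$ so that $\{Z_i\}_{i\in S}$ are independent ``fresh'' pieces, apply the single-item augmentation lemma (Lemma~\ref{lemma:aug}) to get $\tfrac12\E[\max_{i\in S} Z_i]$, and finally show $\E[\max_{i\in S} Z_i] \geq \tfrac{1}{e\cdot\scol}\E[\max_i X_i]$ via the $(1-1/\scol)^{\scol-1}\ge 1/e$ computation. That is exactly the paper's plan.

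However, there is a genuine gap in your execution: you subsample \emph{features} (putting each $j\in[m]$ into $T$ with probability $1/\scol$), whereas the paper subsamples \emph{arrivals} (putting each $i\in[n]$ into $S$ with probability $1/\scol$). Your choice breaks step~(3). Concretely, take $X_1=Y_1,\ X_2=Y_2,\ \dots,\ X_m=Y_m,\ X_{m+1}=Y_1+\cdots+Y_m$ with all $Y_j\equiv 1$, so $\scol=2$ and $\max_i X_i = m$. Under either of your readings---assigning each $j\in T$ to the \emph{first} arrival containing it, or to the first arrival \emph{in $S$} containing it---every feature $j\in T$ is claimed by $X_j$ (since $X_j$ arrives before $X_{m+1}$ and $X_j\in S$ iff $j\in T$). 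Hence $Z_{m+1}\equiv 0$ and $\max_i Z_i \le 1$ always, giving $\E[\max_i Z_i]/\E[\max_i X_i] = O(1/m)$ instead of the required $\Omega(1/\scol)=\Omega(1)$. Your own text wavers here (first proposing a single-feature $Z_i$, noticing it loses an $\srow$ factor, then trying to sum over all features anyway), and the circular definition $S:=\{i:Z_i\not\equiv 0\}$ with $T_i$ depending on $S$ is a symptom of the same problem.

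The paper's fix is simple and clean: include each \emph{arrival} $i$ in $S$ independently with probability $1/\scol$, and for $i\in S$ set $Z_i = \sum_{j\in T_i} A_{ij}Y_j$ where $T_i$ is the set of features $j$ with $A_{ij}>0$ that do not appear in any earlier $i'\in S$. Then for the maximizer $i^*$ and any feature $j$ of it, $\Pr[j\in T_{i^*}] \ge \tfrac{1}{\scol}\cdot(1-\tfrac{1}{\scol})^{\scol-1}\ge \tfrac{1}{e\cdot\scol}$ (since $i^*$ must survive and the at most $\scol-1$ competitors for $j$ must all die). Summing over \emph{all} features of $i^*$ gives $\E_S[Z_{i^*}]\ge \tfrac{1}{e\cdot\scol}X_{i^*}$ directly, with no $\srow$ dependence. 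Once you make this one change, your steps (1)--(3) go through exactly as you outlined.
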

\begin{proof}
  Choose $S \subseteq [n]$ by including each $i \in [n]$ independently with probability $\frac{1}{\scol}$.
  This gives the inclusion subset; now we define the threshold $\tau$.
  Assign each $Y_j$ to the first surviving $X_i$ that includes it, i.e., for each $i \in S$, construct a set $T_i := \big\{j : A_{ij} > 0 \text{ and } A_{i'j} = 0 ~ (\forall i' \in S \text{ where } i' < i) \big\}$.
  Let $Z_i = \sum_{j \in T_i} A_{ij} Y_j$ and set $\tau = \frac{1}{2} \E [\max\{ Z_1, \ldots, Z_n \}]$.
  If $i \not\in S$, then $T_i = \emptyset$ and $Z_i = 0$.

  By definition of an inclusion-threshold algorithm (Definition~\ref{defn:inclusionThreshold}), we automatically reject any $X_i$ such that $i \not\in S$, and we select the first arriving $X_i$ such that $i \in S$ and $X_i \geq \tau$.

  Now, by construction, we can write $X_i = Z_i + W_i$ where each $Z_i$ contains only variables $Y_j$ not appearing in any prior $X_{i'}$ for $i' \in S$ and $i' < i$.
  So $Z_i$ is independent of the preceding $X_{i'}$ under consideration.
  Hence by the single-item augmentation lemma (Lemma~\ref{lemma:aug}),
    $ \E [\ALG ] \geq \frac{1}{2} \E[ \max_i Z_i ]. $

  Next, we show that $\E[ \max_i Z_i ]$ is comparable to $\E[ \max_i X_i ]$.
  \begin{claim} \label{claim:col-z-approx}
    $\E [ \max\{Z_1,\ldots,Z_n\} ] \geq \frac{1}{e \cdot \scol} \E[ \max\{X_1,\ldots,X_n\} ]$, where
    the expectation is over the construction of $S$ as well as $Y_1,\dots,Y_n$.
  \end{claim}
  \begin{proof}[Proof of Claim \ref{claim:col-z-approx}]
    We prove that for every fixed realization of $Y_1,\dots,Y_n$, the inequality holds in expectation over $S$.
    Let $X_{i^*} = \max_i X_i$.
    For each $Y_j$ with $A_{i^*j} > 0$, we claim $\Pr[j \in T_i] \geq \frac{1}{e \cdot \scol}$ because $X_i$ survives with probability $\frac{1}{\scol}$ and independently, the other at most $\scol-1$ variables $X_{i'}$ with $A_{i'j} > 0$ all fail to survive with probability at least\footnote{We often use the inequality $\left(1-\frac{1}{N}\right)^{N-1} = \left(\frac{N-1}{N}\right)^{N-1} \geq \frac{1}{e}$, which follows from $\left(\frac{N}{N-1}\right)^{N-1} = \left(1 + \frac{1}{N-1}\right)^{N-1} \leq e$.} $\left(1 - \frac{1}{\scol}\right)^{\scol-1} \geq \frac{1}{e}$.
    (If $\scol=1$, then this probability is $1$.)
    So with probability only over the construction of $S$,
    \begin{align*}
      \E_S [Z_{i^*} ] \quad = \quad   \sum_j \Pr[j \in T_{i^*}] A_{i^* j} Y_j   \quad \geq \quad \frac{1}{e \cdot \scol} \sum_j A_{i^* j} Y_j  \quad = \quad    \frac{1}{e \cdot \scol} X_{i^*} .
    \end{align*}
    So we have $\E_S[ Z_{i^*}] \geq \frac{1}{e\cdot \scol}\max_i X_i$, so $\E_S[ \max_i Z_i ] \geq \frac{1}{e\cdot \scol} \max_i X_i$.
    This holds for each fixed realization of $Y_1,\dots,Y_n$, so it holds in expectation.
  \end{proof}
     Claim~\ref{claim:col-z-approx} completes the proof of Proposition \ref{prop:colSparsity}, as  we have
\[
    \E [\ALG ] \quad \geq \quad \frac{1}{2} \E [\max_i Z_i ]  \quad \geq \quad \frac{1}{2 e \cdot \scol} \E [ \max_i X_i ]. \qedhere
\]
\end{proof}

\subsection{Bounded Row Sparsity}
Recall that row sparsity $\srow$ implies that each $X_i$ only depends on at most $\srow$ different features $Y_j$; however, a given $Y_j$ may appear in arbitrarily many $X_i$s.
In this section, for notational convenience, we assume without loss of generality that $\max_i \{A_{ij} \} = 1$ for all $j$.
(If this is not the case, one can renormalize each column and redefine a scaled version of $Y_j$.)
We prove the following:
\begin{proposition}\label{prop:rowSparsity}
  There exists an inclusion-threshold algorithm for the linearly correlated prophets problem achieving approximation ratio ${2 e^3 \cdot \srow}$.
\end{proposition}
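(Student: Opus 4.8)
The plan is to follow the template of Proposition~\ref{prop:colSparsity} — decompose $X_i = Z_i + W_i$ and invoke the single‑item augmentation lemma — but with a much more delicate choice of the inclusion set $S$, since plain independent subsampling of arrivals no longer isolates enough ``fresh'' mass once a feature $Y_j$ may be shared by arbitrarily many arrivals. Throughout I use the normalization $\max_i A_{ij}=1$; for each feature $j$ let its \emph{owner} be $i(j):=\argmax_i A_{ij}$, so $A_{i(j),j}=1$.

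First, a feature‑level bound on the benchmark. Since every row of $A$ has at most $\srow$ nonzero entries, all of which are $\le 1$, we get $X_i=\sum_{j:\,A_{ij}>0}A_{ij}Y_j\le \srow\max_j Y_j$ for every $i$, hence $\E[\max_i X_i]\le \srow\,\E[\max_j Y_j]$. So it suffices to build an inclusion‑threshold algorithm with value $\Omega(1)\cdot \E[\max_j Y_j]$. To that end I would produce a randomized inclusion set $S\subseteq[n]$, chosen independently of the $Y_j$'s, and the decomposition in which $T_i$ is the set of features $j$ with $i\in S$ and no earlier surviving arrival using $j$ (so each feature lands in at most one $T_i$, namely at its first surviving user), $Z_i:=\sum_{j\in T_i}A_{ij}Y_j$, $W_i:=X_i-Z_i$. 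By construction every feature contributing to $Z_i$ is absent from all $X_{i'}$ with $i'\in S$, $i'<i$, so $Z_i$ is independent of those $X_{i'}$; applying Lemma~\ref{lemma:aug} to the subinstance $(X_i)_{i\in S}$ with $\tau=\tfrac12\E[\max_{i\in S}Z_i]$ gives $\E[\ALG]\ge\tfrac12\E[\max_i Z_i]$.

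Call feature $j$ \emph{well‑represented} if its first surviving user is exactly its owner $i(j)$ — equivalently $i(j)\in S$ and $S$ misses every arrival before $i(j)$ that uses $j$. When this happens $Z_{i(j)}\ge A_{i(j),j}Y_j=Y_j$, so $\max_i Z_i\ge \max\{Y_j:\ j\text{ well‑represented}\}$. Because $S$ is independent of $Y$, if the scheme guarantees $\Pr[j\text{ well‑represented}]\ge p_0$ for every $j$ and some absolute constant $p_0$, then with $j^\circ=\argmax_j Y_j$ (a function of $Y$ only) we obtain $\E[\max_i Z_i]\ge\E\!\big[Y_{j^\circ}\,\Pr[j^\circ\text{ well‑rep}\mid Y]\big]\ge p_0\,\E[\max_j Y_j]$. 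Combining with the benchmark bound yields approximation ratio $2\srow/p_0=O(\srow)$, and tracking the real losses ($\tfrac12$ from augmentation, plus the $e$‑type factors that show up inside the subsampling scheme, of the kind $(1-1/N)^{N-1}\ge1/e$) gives the stated $2e^3\,\srow$.

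The remaining task — and the main obstacle — is the \emph{existence} of such a scheme: a distribution over $S$ under which every feature's owner is simultaneously its first surviving user with constant probability. This is a contention‑resolution problem, which I would attack graph‑theoretically. Form the digraph on arrivals with an arc $i'\to i$ whenever $i'<i$ and $i'$ uses some feature owned by $i$; since each arrival uses at most $\srow$ features, this digraph has out‑degree $\le\srow$, so its underlying graph is $O(\srow)$‑degenerate. The catch is that the schemes one reads off directly from degeneracy — a random proper color class, or including each arrival independently at rate $\Theta(1/\srow)$ — only give well‑representation probability $\Theta(1/\srow)$, because an owner can be ``blocked'' by up to $\srow$ other owners; that extra $\srow$ would reduce the argument to an $O(\srow^2)$ bound. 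Getting an absolute constant instead requires correlating the inclusion decisions: intuitively, for a feature whose owner lies far to the right of many low‑coefficient earlier users, one wants $S$ to drop the whole conflicting prefix with constant probability, while still keeping each individual arrival in $S$ with constant probability for the sake of the features it owns. Arranging all of these trade‑offs consistently — by processing the degeneracy order and mixing over conflict‑respecting ``blocks'' / prefixes — is exactly where the graph argument does the real work, and I expect it to be the technically hardest part; the reduction above is routine once it is in hand.
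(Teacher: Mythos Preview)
There is a genuine gap: the entire proof hinges on producing a distribution over inclusion sets $S$ under which every feature is well--represented with an \emph{absolute constant} probability $p_0$, and you do not construct such a scheme. You explicitly flag this as ``the technically hardest part'' and offer only a sketch (``process the degeneracy order and mix over conflict--respecting blocks/prefixes''), but nothing here pins down an algorithm, let alone a proof that $p_0=\Omega(1)$ is attainable. Your own diagnosis is correct that the obvious schemes --- independent sampling at rate $\Theta(1/\srow)$, or a random color class of an $O(\srow)$--coloring --- yield only $p_0=\Theta(1/\srow)$, which combined with your benchmark bound $\E[\max_i X_i]\le \srow\,\E[\max_j Y_j]$ gives $O(\srow^2)$, not $O(\srow)$. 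So as written the proposal establishes only the weaker bound; the step that would close the gap is precisely the one that is missing.

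The paper avoids this difficulty by \emph{not} aiming for constant $p_0$. It is content with a $\tfrac{1}{e^2\srow}$--representative construction (Lemma~\ref{lem:rep}), obtained by a short degeneracy argument on the \emph{feature} graph, and it defines $Z_i$ to be the single feature $Y_{j(i)}$ matched to $i$. The $\srow$ loss is then absorbed not through your inequality $\E[\max_i X_i]\le \srow\,\E[\max_j Y_j]$ but through a max--to--sum step localized at $i^\*=\arg\max_i X_i$: for each of the at most $\srow$ features $j\in R_{i^\*}$, one shows $\Pr[R_{i^\*}\cap T=\{j\}]\ge \tfrac{1}{e^3\srow}$ (Claim~\ref{claim:rowSparsity-unique}), whence $\E[\max_{j\in T}Y_j]\ge \sum_j \tfrac{1}{e^3\srow}A_{i^\* j}Y_j=\tfrac{1}{e^3\srow}X_{i^\*}$. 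In other words, the paper trades your hard contention--resolution target (constant probability, loose benchmark) for an easy one ($\Theta(1/\srow)$ probability) paired with a sharper comparison to $\E[\max_i X_i]$. This is the idea your proposal is missing; once you see it, the graph argument you need is the simple one you already know how to do, not the correlated scheme you were reaching for. As a side note, your claim that the ``$e$--type factors'' line up to give exactly $2e^3\srow$ is unsupported: in your framework the ratio is $2\srow/p_0$, and nothing in your sketch pins $p_0$ to $1/e^3$.
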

This case requires more care.
There does not seem to be an analogous approach to randomly excluding $X_i$, as for bounded column sparsity.
Moreover, an important observation is that the $X_i$ cannot be treated identically in a manner oblivious to the structure of $A$.
For every ``important'' row that ought to be included, there can be many unimportant rows.
Indeed, we can take any instance and prepend it with arbitrarily many variables of the form $X_i = Y_1$ without changing the row sparsity $\srow$.
An oblivious inclusion-threshold algorithm would essentially keep only variables from this prefix, ignoring the actual instance.

Before the formal proof of Proposition \ref{prop:rowSparsity}, we develop a tool to address this challenge.
The key idea is to design an inclusion scheme that, for any instance structure, allows each $Y_j$ to be both represented and ``independent'' with a reasonably high probability.
Here independence refers to not sharing an $X_i$ with any other included $Y_{j'}$.
Inspired by contention-resolution schemes, which have a similar flavor, we define a \emph{representative} construction of a subset of the $Y_j$ and corresponding $X_i$ with $A_{ij}=1$, where $X_i$ is matched to $Y_{j(i)}$.
\begin{definition}
  Consider a randomized selection of $S \subseteq \{1,\dots,n\}$ and $T \subseteq \{1,\dots,m\}$ of equal size with a perfect matching $j(i)$ satisfying $A_{ij(i)} = 1$.
  Call this construction \emph{$\alpha$-representative or $\alpha$-rep.} if 
\setlist{nolistsep}
\begin{enumerate}[label=(\roman*),noitemsep]
\item for all $j \in \{1,\dots,m\}$, we have $\Pr[j \in T] \geq \alpha$, and \label{prop1}
\item  for all $i,i' \in S$, $i\not= i'$, we have $A_{i'j(i)} = 0$. \label{prop2}
  \end{enumerate}
\end{definition}
Note that we cannot hope for better than an $O(\frac{1}{\srow})$-rep. construction, as any inclusion of some $Y_j$ can rule out $\srow-1$ other features.
This raises the question of whether one can achieve $\Omega(\frac{1}{\srow})$-rep.

\begin{lemma} \label{lem:rep}
  For any linearly correlated instance  there exists a $\frac{1}{e^2 \cdot \srow}$ rep. construction.
\end{lemma}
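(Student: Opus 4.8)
The plan is to reduce Lemma~\ref{lem:rep} to a statement about colouring a bounded‑degeneracy ``conflict graph'' on the features. Recall that we have normalised so that $\max_i A_{ij}=1$, so each feature $j$ has a nonempty set $R_j=\{i:A_{ij}=1\}$; fix once and for all an arbitrary ``home row'' $i_j\in R_j$ for every $j$, and decree that $j$, when selected, is always matched to $i_j$. Define the conflict graph $G$ on vertex set $\{1,\dots,m\}$ by joining $j\neq j'$ whenever $A_{i_j,j'}\neq 0$ or $A_{i_{j'},j}\neq 0$ (this subsumes the case $i_j=i_{j'}$). The whole point of $G$ is that any independent set $T$ in $G$ yields a valid construction: take $S=\{i_j:j\in T\}$ with matching $i_j\mapsto j$. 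Indeed the home rows of distinct $j,j'\in T$ are distinct (otherwise $A_{i_j,j'}=1\neq 0$ and $j\sim j'$), we have $A_{i_j,j}=1$ by definition, and property~\ref{prop2} holds since $A_{i_{j'},j}\neq 0$ for two members of $T$ would again force $j\sim j'$.

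The key structural step is that $G$ admits an orientation of out‑degree at most $\srow-1$, hence has degeneracy at most $2(\srow-1)$ and chromatic number at most $2\srow-1$. Orient each edge $\{j,j'\}$ as $j\to j'$ whenever $A_{i_j,j'}\neq 0$ (if both $A_{i_j,j'}\neq 0$ and $A_{i_{j'},j}\neq 0$, pick either direction); every edge gets oriented, since an edge is present exactly when one of these holds. The out‑neighbours of $j$ all appear with nonzero coefficient in the single row $i_j$, which has at most $\srow$ nonzero entries, one being $j$ itself, so $\mathrm{outdeg}(j)\le \srow-1$. A graph with a max‑out‑degree‑$(\srow-1)$ orientation has at most $(\srow-1)\lvert V'\rvert$ edges on any vertex set $V'$ (restrict the orientation), hence a vertex of degree $\le 2(\srow-1)$ in every induced subgraph, so its degeneracy is $\le 2(\srow-1)$ and $\chi(G)\le 2\srow-1$.

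To finish, properly colour $G$ with $\chi(G)\le 2\srow-1$ colours (greedily along a degeneracy order), let $T$ be a uniformly random colour class, and output $S=\{i_j:j\in T\}$ with matching $i_j\mapsto j$. Each colour class is independent in $G$, so by the first paragraph this is a legal construction satisfying property~\ref{prop2}, and $\Pr[j\in T]=1/\chi(G)\ge \tfrac{1}{2\srow-1}\ge \tfrac{1}{e^2\srow}$, which is property~\ref{prop1}. (If one wants the constant $e^2$ to arise ``for the same reason'' as in the column‑sparsity proof, one can instead keep the fixed home rows, subsample features with probability $1/\srow$, and retain a surviving feature only if none of its $\le\srow-1$ out‑neighbours in row $i_j$ survives and it wins a priority contest against the ``in‑neighbour'' features whose home rows contain it; but the colouring route above already delivers the claimed bound with room to spare.)

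The one genuinely non‑obvious point, and the reason the obvious approach fails, is that one \emph{cannot} just independently subsample features: a popular feature appearing in $\Theta(n)$ rows has conflict‑degree $\Theta(n)$ in $G$ — its ``in‑degree'' $\lvert\{j':A_{i_{j'},j}\neq 0\}\rvert$ is unbounded — so no union bound over its neighbourhood survives, and the selection must be \emph{coordinated}. The instance $X_i=Y_1+Y_{i+1}$, whose conflict graph is a star centred at $Y_1$, is the prototype: any construction must be willing to abandon $Y_1$ entirely in favour of the many $Y_{i+1}$'s (or vice versa), and a random colour class does precisely this. Once the conflict graph is set up and the out‑degree bounded by the row sparsity, the remaining steps (degeneracy, greedy colouring, random colour class) are routine.
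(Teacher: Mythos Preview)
Your proof is correct and in fact yields a better constant than stated: $\Pr[j\in T]\ge 1/(2\srow-1)$. Both you and the paper identify exactly the same structural fact --- the ``conflict graph'' on features (with $j\to j'$ whenever $A_{i_j,j'}>0$) has out-degree at most $\srow-1$, hence degeneracy at most $2(\srow-1)$ --- so the core insight is identical. Where you differ is in how the independent set is extracted: the paper processes features along a degeneracy order and \emph{randomly} keeps each unblocked feature with probability $1/\srow$, obtaining $\Pr[j\in T]\ge (1-1/\srow)^{2(\srow-1)}\cdot(1/\srow)\ge 1/(e^2\srow)$; you instead greedily colour along the same order and output a uniformly random colour class. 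Your route is cleaner and tighter for Lemma~\ref{lem:rep} as stated.

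One caveat worth flagging for what comes next: the paper's randomized construction has an additional property that your deterministic-colouring construction does not, namely that each $j'$ lands in $T$ with probability at most $1/\srow$ \emph{conditional on any other decisions}. This is not part of the definition of an $\alpha$-representative construction, but the paper explicitly invokes it in the downstream Claim~\ref{claim:rowSparsity-unique} to bound $\Pr[R_i\cap T=\{j\}]$. With your construction, conditioning on $j\in T$ fixes the entire set $T$ deterministically, and two features in the same row $R_i$ can share a colour if neither is the other's home-row neighbour --- so Claim~\ref{claim:rowSparsity-unique} would need a different argument (or you would need to revert to the paper's randomised selection, which you sketch in your parenthetical remark).
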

\begin{proof}
  For each $Y_j$, define its \emph{primary} $i(j)$ by picking any $i$ such that $A_{ij} = 1$ (by our renormalization assumption, there is at least one).
  Consider a directed graph $G$ where the nodes are $\{1,\dots,m\}$ representing the independent variables $Y_j$.
  There is a \emph{directed} edge $(j,j')$ if $A_{i(j)j'} > 0$, i.e., $j$ points to all other $j'$ who are included in its primary variable $X_{i(j)}$.
  We note that both edges $(j,j')$ and $(j',j)$ might be present.
  
  The key property is that each vertex in $G$ has out degree $\leq \srow-1$, as $Y_j$ has only one primary $X_{i(j)}$ and at most $\srow-1$ other variables $j'$ have $A_{i(j)j'} > 0$.
  Because average in-degree equals average out-degree, this implies there exists a vertex with in-degree at most $\srow-1$.
  Applying this argument recursively, we get the following claim.
  \begin{claim} \label{claim:PiOrder}
    There exists an order $\pi$ of the vertices of $G$ such that for every $j$, the induced subgraph on $\pi(1), \ldots, \pi(j)$ satisfies that the in-degree of $\pi(j)$ is at most $\srow-1$.
  \end{claim}
  \begin{proof}
    As shown, there exists some $j$ with in-degree at most $\srow-1$.
    Set $\pi(m) = j$.
    Now delete $j$ from the graph, including all edges to and from $j$.
    In this graph again all out-degrees are at most $\srow-1$, so we can recursively construct $\pi(m-1),\dots,\pi(1)$.
  \end{proof}
  Consider all the $Y_j$ variables in the order $\pi$ given by Claim~\ref{claim:PiOrder}.
  Initialize $S,T = \emptyset$.
  On considering $j$, if $j$ does not have an edge with any vertex $j' \in T$ (neither incoming nor outgoing), then independently with probability $\frac{1}{\srow}$, add $j$ to $T$ and add $j(i)$ (its primary variable) to $S$.
  With the remaining probability, ignore $j$ and continue.   We show that this  randomized construction of $S,T$ satisfies the  two properties of a $\frac{1}{e^2 \cdot \srow}$ rep. construction.

For Property~\ref{prop1}, note each $j$ has at most $2(\srow-1)$ total edges (both incoming and outgoing) to nodes $j'$ appearing prior to $j$ in the permutation: $j$ has at most $\srow-1$ outgoing edges in total, and by construction of $\pi$, has at most $\srow-1$ incoming edges from nodes prior to $j$ in $\pi$.
  So when we reach $j$ in the permutation, we consider it with probability at least the probability that all these $2(\srow-1)$ neighbors have been rejected, which is at least $\left(1 - \frac{1}{\srow}\right)^{2(\srow-1)} \geq \frac{1}{e^2}$; and then we include it with probability $\frac{1}{\srow}$.
  This shows that each $j$ is included with probability at least $\frac{1}{e^2 \cdot \srow}$.

  Next, for Property~\ref{prop2},  consider any $i,i' \in S$ with respective partners $j,j' \in T$.
  We must show $A_{ij'} = 0$.
  Note that by construction, $i = i(j)$ and $i' = i(j')$, i.e., they are the primary variables for $j,j'$.
  Either $j$ was selected into $T$ before or after $j'$.
  In either case, the second variable could only be selected if edge $(j,j')$ did not exist in the graph, which implies $A_{ij'} = 0$.
  
  This completes the proof of Lemma~\ref{lem:rep}.
\end{proof}

Given our representative construction, we are ready to complete the algorithm and proof.
\begin{proof}[Proof of Proposition \ref{prop:rowSparsity}]
  The algorithm is an inclusion-threshold algorithm.
  Its inclusion set $S$ is obtained by calling our representative construction of Lemma \ref{lem:rep}, which also produces a choice $j(i)$ for each $i \in S$.
  Define $Z_i = Y_{j(i)}$ for each $i \in S$ and $Z_i = 0$ if $i \not\in S$.
  Set $\tau = \frac{1}{2} \E[ \max_i Z_i]$.

  By the second property of representative constructions, $Z_i$ is independent of $X_{i'}$ for all $i' \in S, i' \neq i$.
  Therefore, by the Augmentation Lemma (Lemma \ref{lemma:aug}),
    \begin{align} \label{eq:augLemmRowSpars}
 \E [\ALG] \geq \frac{1}{2} \E [\max_i Z_i] .
     \end{align}
  Combining this with the following Lemma~\ref{claim:rowSparsity-z-approx} will prove Proposition \ref{prop:rowSparsity}.
  \begin{lemma} \label{claim:rowSparsity-z-approx}
    $\E[ \max_i Z_i ]\geq \frac{1}{e^3 \cdot \srow} \E[ \max_i X_i]$.
  \end{lemma}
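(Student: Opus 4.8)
The plan is to reduce the lemma to a self‑contained statement about the randomized construction of Lemma~\ref{lem:rep} and then run a layer‑cake argument that captures a whole ``heavy'' row of $A$ rather than just its single largest entry. First I would observe that $\max_i Z_i=\max_{j\in T}Y_j$: because $i\mapsto j(i)$ is a perfect matching between $S$ and $T$ and $Z_i=0$ for $i\notin S$, the values $\{Z_i\}_i$ are exactly $\{Y_j\}_{j\in T}$ padded with zeros. Since the construction is independent of the $Y_j$'s, it then suffices to fix an arbitrary realization of $Y_1,\dots,Y_m$ and lower bound $\E_T[\max_{j\in T}Y_j]$ for that realization.

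The crux is the following claim about the construction: for every $U\subseteq\{1,\dots,m\}$ with $|U|\le\srow$, $\Pr_T[T\cap U\ne\emptyset]\ge\frac{|U|}{e^3\srow}$. I would prove it by exhibiting $|U|$ pairwise disjoint favorable events. Re‑describe the construction as first drawing independent $\text{Bernoulli}(1/\srow)$ coins $c_j$ for all $j$, then processing the $j$'s in the order $\pi$ and adding $j$ to $T$ iff $j$ is available (has no $G$‑neighbor currently in $T$) and $c_j=1$; in particular $j\in T$ forces $c_j=1$. Order $U=\{u_1,\dots,u_k\}$ along $\pi$ and let $F_r$ be the event that $c_{u_r}=1$, that $c_{j'}=0$ for every $G$‑neighbor $j'$ of $u_r$ with $j'<_\pi u_r$, and that $c_{u_1}=\dots=c_{u_{r-1}}=0$. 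On $F_r$, $u_r$ is available when processed (an adjacent predecessor in $T$ would need a heads coin, which $F_r$ forbids) and hence is added, so $F_r\subseteq\{u_r\in T\}$; the $F_r$'s are disjoint since $F_r$ pins $u_r$ down as the first of $u_1,\dots,u_k$ with a heads coin. Each $u_r$ has at most $\srow-1$ out‑neighbors and, by Claim~\ref{claim:PiOrder}, at most $\srow-1$ in‑neighbors preceding it in $\pi$, so $F_r$ constrains $c_{u_r}=1$ together with at most $2(\srow-1)+(r-1)$ other, independent, coins being $0$; hence $\Pr[F_r]\ge\frac{1}{\srow}(1-\tfrac1\srow)^{2(\srow-1)}(1-\tfrac1\srow)^{r-1}\ge\frac{1}{e^2\srow}(1-\tfrac1\srow)^{r-1}$. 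Summing over $r\le k$ and using $(1-\tfrac1\srow)^k\le e^{-k/\srow}$ and the concavity bound $1-e^{-x}\ge(1-e^{-1})x$ on $[0,1]$ gives $\Pr_T[T\cap U\ne\emptyset]\ge\frac{1-(1-1/\srow)^k}{e^2}\ge\frac{1-e^{-1}}{e^2}\cdot\frac{k}{\srow}\ge\frac{|U|}{e^3\srow}$.

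Given the claim, I would finish as follows. For the fixed realization of $Y$, write $X_{i^*}=\max_iX_i$ and $R_{i^*}=\{j:A_{i^*j}>0\}$, so that $|R_{i^*}|\le\srow$ and, by the normalization $A_{i^*j}\le 1$, $X_{i^*}\le\sum_{j\in R_{i^*}}Y_j$. Using $\max_{j\in T}Y_j\ge\max_{j\in T\cap R_{i^*}}Y_j$, the layer‑cake identity, and the claim applied to the size‑$\le\srow$ set $\{j\in R_{i^*}:Y_j>t\}$ for each $t$,
\[
\E_T[\max_{j\in T}Y_j]\ \ge\ \int_0^\infty\Pr_T[\,T\cap\{j\in R_{i^*}:Y_j>t\}\ne\emptyset\,]\,dt\ \ge\ \frac{1}{e^3\srow}\int_0^\infty|\{j\in R_{i^*}:Y_j>t\}|\,dt\ =\ \frac{X_{i^*}}{e^3\srow}.
\]
Taking expectation over $Y$ and recalling $\max_iZ_i=\max_{j\in T}Y_j$ gives $\E[\max_iZ_i]\ge\frac1{e^3\srow}\E[\max_iX_i]$, which together with~\eqref{eq:augLemmRowSpars} completes the proof of Proposition~\ref{prop:rowSparsity}.

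I expect the claim of the second paragraph to be the only real obstacle. Its content is that $T$ must hit any small set $U$ with probability proportional to $|U|$, not merely to a constant; were one content with the crude bound that $T$ contains each feature of $R_{i^*}$ with probability $\Omega(1/\srow)$ and then charged $X_{i^*}$ to its largest feature via $X_{i^*}\le\srow\cdot\max_{j\in R_{i^*}}Y_j$, one would lose an extra $\srow$ and only obtain an $O(\srow^2)$ approximation. The disjoint‑events construction supplies exactly the missing linear dependence on $|U|$, and the layer‑cake step then upgrades it to a recovery of an $\Omega(1/\srow)$‑fraction of the entire row sum $\sum_{j\in R_{i^*}}Y_j$; everything else is bookkeeping with quantities already controlled in the proof of Lemma~\ref{lem:rep}.
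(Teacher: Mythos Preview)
Your proof is correct, and it reaches the same constant $\tfrac{1}{e^{3}\srow}$ as the paper, but by a genuinely different route.

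The paper fixes the realization of the $Y_j$'s, takes $i^* = \arg\max_i X_i$, and proves a \emph{unique--survivor} claim: for each $j\in R_{i^*}$, $\Pr[T\cap R_{i^*}=\{j\}]\ge \tfrac{1}{e^{3}\srow}$ (this uses the $\tfrac{1}{e^{2}\srow}$ inclusion bound from the representative construction, plus one extra factor of $\tfrac{1}{e}$ from asking the remaining $\le \srow-1$ members of $R_{i^*}$ to miss). Since the events $\{T\cap R_{i^*}=\{j\}\}_j$ are disjoint and on each the max equals $A_{i^*j}Y_j$, one gets $\E_T[\max_{j\in T}Y_j]\ge \sum_{j\in R_{i^*}}\Pr[T\cap R_{i^*}=\{j\}]\,A_{i^*j}Y_j\ge \tfrac{1}{e^{3}\srow}X_{i^*}$ directly, without ever invoking the normalization $A_{i^*j}\le 1$ at this step.

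Your approach instead abstracts a structural property of the random set $T$ --- $\Pr[T\cap U\neq\emptyset]\ge |U|/(e^{3}\srow)$ for any $|U|\le\srow$ --- proved via the disjoint coin--pattern events $F_r$, and then runs a layer--cake integral to convert that into control of the whole row sum $\sum_{j\in R_{i^*}}Y_j\ge X_{i^*}$. This is a bit more work (you reconstruct what the paper's $\tfrac{1}{e^{2}\srow}$ inclusion bound proves, plus the extra geometric sum over $r$), and it spends the normalization $A_{i^*j}\le 1$ explicitly, but the ``$T$ hits small sets with linear probability'' statement is a clean stand--alone fact about the representative construction that could be reused elsewhere. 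Both arguments rest on the same combinatorial input (independent $1/\srow$ coins and the $\le 2(\srow-1)$ predecessor--degree bound from Claim~\ref{claim:PiOrder}); they just package the disjoint favorable events differently.
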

  Before proving Lemma~\ref{claim:rowSparsity-z-approx}, we will need one more idea.
  Let $R_i = \{j : A_{ij} > 0\}$, the variables included in $X_i$.
  Notice that we may have $|R_i \cap T| \geq 2$, i.e., multiple variables $Y_j$ are members of $X_i$ and appear in the construction $T$.
  This can occur when $X_i$ is not primary for any of them.\footnote{An illuminating instance is: $X_i = Y_i$ for all $i \leq \srow$ and $X_{\srow + 1} = 0.99(Y_1 + \dots + Y_{\srow})$.}
  It will help to lower-bound the probability that $Y_j$ is the unique member of $R_i \cap T$.
  \begin{claim} \label{claim:rowSparsity-unique}
    Under the construction of Lemma \ref{lem:rep}, for each $j \in R_i$, $\Pr[R_i \cap T = \{j\}] \geq \frac{1}{e^3 \cdot \srow}$.
  \end{claim}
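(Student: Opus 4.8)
The plan is to isolate an event depending only on the independent $\mathrm{Bernoulli}(1/\srow)$ coins used in the construction of Lemma~\ref{lem:rep}, show that this event forces $R_i\cap T=\{j\}$, and then bound its probability by a product of per-coin probabilities. Recall that in that construction the vertices $Y_{j'}$ are scanned in the order $\pi$ of Claim~\ref{claim:PiOrder}, each vertex $j'$ carries an independent coin $c_{j'}\sim\mathrm{Bernoulli}(1/\srow)$, and $j'$ enters $T$ exactly when (a) no $G$-neighbor of $j'$ (in either direction) is already in $T$ at the moment $j'$ is processed, and (b) $c_{j'}=1$. Let $N(j)$ be the set of vertices that precede $j$ in $\pi$ and are $G$-neighbors of $j$. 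As in the proof of Property~\ref{prop1}, $|N(j)|\le 2(\srow-1)$: vertex $j$ has at most $\srow-1$ out-neighbors in total, and by the defining property of $\pi$ it has at most $\srow-1$ in-neighbors among the earlier vertices.

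I would then consider the event
\[
\mathcal{E}\ :=\ \{c_j=1\}\ \cap\ \bigcap_{j'\in\, N(j)\,\cup\,(R_i\setminus\{j\})}\{c_{j'}=0\}.
\]
On $\mathcal{E}$ we get $R_i\cap T=\{j\}$: first, every $j'\in R_i\setminus\{j\}$ has $c_{j'}=0$, and since the rule only ever admits a vertex with coin $1$, no such $j'$ lies in $T$, so $R_i\cap T\subseteq\{j\}$; second, when $j$ is processed the only $G$-neighbors of $j$ that could already be in $T$ are those in $N(j)$, each of which has coin $0$ and hence is absent from $T$, so $j$ is not blocked, and since $c_j=1$ the rule admits $j$. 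Thus $\mathcal{E}$ is a sufficient event for the claim, and crucially it is an event about a set of distinct coins with $j\notin N(j)\cup(R_i\setminus\{j\})$.

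Finally, since $|R_i|\le\srow$ we have $|N(j)\cup(R_i\setminus\{j\})|\le 2(\srow-1)+(\srow-1)=3(\srow-1)$, and by independence of the coins
\[
\Pr[R_i\cap T=\{j\}]\ \ge\ \Pr[\mathcal{E}]\ =\ \frac1\srow\Bigl(1-\frac1\srow\Bigr)^{|N(j)\cup(R_i\setminus\{j\})|}\ \ge\ \frac1\srow\Bigl(1-\frac1\srow\Bigr)^{3(\srow-1)}\ \ge\ \frac1{e^3\srow},
\]
using $(1-1/\srow)^{\srow-1}\ge 1/e$ (and $\srow=1$ is trivial, since then $R_i=\{j\}$ and $j\in T$ surely). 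The only point requiring care is the sufficiency step: one must check against the construction that the blocking test consults solely vertices processed earlier, and edges in both orientations, so that $N(j)$ indeed captures every neighbor of $j$ that can keep it out of $T$. The remaining probability computation is a short independent-coin estimate, the one subtlety being merely that $N(j)$ and $R_i\setminus\{j\}$ may overlap — which only shrinks their union and helps the bound.
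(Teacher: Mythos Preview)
Your proof is correct. It differs from the paper's argument mainly in packaging: the paper invokes Property~\ref{prop1} of the representative construction as a black box to get $\Pr[j\in T]\ge \tfrac{1}{e^2\srow}$, and then argues that, conditioned on $j\in T$ and all other decisions, each remaining $j'\in R_i$ enters $T$ with probability at most $1/\srow$ (since entry always requires its own $1/\srow$ coin), yielding the extra $1/e$ factor. You instead bypass the conditioning and work directly at the level of the independent $\mathrm{Bernoulli}(1/\srow)$ coins, exhibiting a single explicit coin-event $\mathcal{E}$ that forces $R_i\cap T=\{j\}$ and bounding its probability by a product. Your route is slightly more elementary and sidesteps the need to justify the conditional independence statement the paper uses; the paper's route is shorter because it reuses the already-proven $\tfrac{1}{e^2\srow}$ bound. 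Both hinge on the same two facts: $|N(j)|\le 2(\srow-1)$ and $|R_i\setminus\{j\}|\le \srow-1$.
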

  \begin{proof}
    We have $\Pr[j \in T] \geq \frac{1}{e^2 \cdot \srow}$ by the representative construction.
    Meanwhile, conditioned on any other decisions, each $j'$ is included in $T$ with probability at most $\frac{1}{\srow}$, because it is considered with probability at most $1$ and included independently with probability $\frac{1}{\srow}$ conditioned on being considered.
    So $\Pr[R_i \cap T = \{j\} | j \in T] \geq \big(1 - \frac{1}{\srow}\big)^{\srow-1} \geq \frac{1}{e}$.
  \end{proof}
  \begin{proof}[Proof of Lemma~\ref{claim:rowSparsity-z-approx}]
    Fix the realizations of all $Y_j$.
    Let  $i^* = \argmax_i X_i$.
    First, notice that
        \begin{align*}
      \E [\max_i Z_i ] \quad &=  \quad  \E [ \max_{j \in T} \big\{ Y_j \big\} ] \quad \geq \quad \E \big[ \max_{j \in T \cap R_{i^*}} \big\{ A_{i^* j} \cdot Y_j \big\} \big] .
      \end{align*}

    Now the expected maximum of elements in $T \cap R_{i^*}$ is at least the sum over the elements of each's contribution to the max,   which is at least the chance it is the unique survivor times its value.  This allows us to relate a max to a sum, and it relies on the fact that the representative construction's randomness is independent of the realizations of the $\{Y_j\}$. Thus,
    \begin{align*}
      \E [\max_i Z_i ] ~ \geq  ~ \E \big[ \max_{j \in T \cap R_{i^*}} \big\{ A_{i^* j} \cdot Y_j \big\} \big] ~ \geq ~ \sum_j \Pr\big[T \cap R_{i^*} = \{j\} \big] \cdot A_{i^* j} Y_j  ~ \geq ~ \sum_j \frac{1}{e^3 \cdot \srow} A_{i^* j} Y_j ,
    \end{align*}
    where the last inequality uses Claim~\ref{claim:rowSparsity-unique}. Since $\sum_j A_{i^* j} Y_j  = X_{i^*}$, we have
    $  \E [\max_i Z_i ]     \geq     \frac{1}{e^3 \cdot \srow} X_{i^*}$ .
    Taking expectations over $Y_1,\dots,Y_n$ completes the proof.
  \end{proof}
  Finally, combining~\eqref{eq:augLemmRowSpars} with Lemma~\ref{claim:rowSparsity-z-approx} completes the proof of Proposition \ref{prop:rowSparsity}.
\end{proof}



\subsection{Lower Bounds} \label{sec:lowerBound}
We now give a matching hardness result, showing that no algorithm can do better than our results in the previous section up to constant factors.
\begin{example}[General tower instance] \label{ex:general-tower}
  Take the tower $\Y$ variables (recall $Y_i = \frac{1}{\epsilon^i}$ with probability $\epsilon^i$, and $0$ otherwise).
  Given input integer $c$, set $n = \srow = \scol = c$.
  Let $A$ be an $n \times n$ matrix with entry $A_{i,j} = 0$ for $j < i$ and $A_{i,j} = \epsilon^{j-i}$ for $j \geq i$.
  Visually,
  \begin{align*}
    X_1 = Y_1 + \epsilon Y_2 + ~ \cdots ~ \cdots ~ + \epsilon^n Y_n , \qquad 
    X_2 = Y_2 + \epsilon Y_3 + \cdots + \epsilon^{n-1} Y_n, \qquad
    \ldots , \qquad
    X_n = Y_n .
  \end{align*}
\IGNORE{  \begin{align*}
    X_1 &= Y_1 + \epsilon Y_2 + ~ \cdots ~ \cdots ~ + \epsilon^n Y_n  \\
    X_2 &= Y_2 + \epsilon Y_3 + \cdots + \epsilon^{n-1} Y_n  \\
    \ldots  \\
    X_n &= Y_n .
  \end{align*}
  }
  The difficulty here amplifies that of Example \ref{ex:2-tower}.
  If any of $Y_i,Y_{i+1},\dots,Y_n$ are active, then this will cause $X_i$ to be nonzero.
  Assuming only one of these variables is active (by far the most likely case), it is impossible for the algorithm to tell whether to stop or continue.
\end{example}

It will turn out that this instance is hard even if the algorithm is given additional power.
\begin{definition}
  In the \emph{fractional variant of the prophet problem}, at each arrival $i$, the algorithm may choose to take a fraction $p_i$ of the current arrival $X_i$ subject to always taking at most one  unit in total, i.e., $\sum_{i=1}^n p_i \leq 1$ with probability $1$.
  Its reward is $\sum_{i=1}^n p_i X_i$.
\end{definition}
One strategy available in the fractional prophet problem is to spend the entire budget on a single arrival, which is an algorithm for the standard prophets problem.
So a lower bound for the fractional problem immediately implies a lower bound for the prophets problem.

\begin{theorem} \label{thm:hardness-both}
  In the linearly correlated prophet problem, even if fractional, no online algorithm can guarantee a smaller approximation ratio than ${\min\{{\scol},{\srow}\}}$.
\end{theorem}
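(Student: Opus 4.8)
The plan is to prove the bound on the general tower instance of Example~\ref{ex:general-tower}, where $n=\srow=\scol=c$, and then let $\epsilon\to 0$. Since the input distribution is fixed, averaging over the algorithm's internal randomness reduces matters to lower-bounding the best \emph{deterministic} online fractional algorithm, i.e.\ functions $p_i=f_i(X_1,\dots,X_i)\in[0,1]$ with $\sum_i p_i\le 1$ almost surely. The one structural fact that drives the whole argument is that each active $Y_k$ with $k\ge i$ contributes $A_{ik}Y_k=\epsilon^{k-i}\cdot\epsilon^{-k}=\epsilon^{-i}$ to $X_i$, so $X_i=\big|\{k\ge i: Y_k\text{ active}\}\big|\cdot\epsilon^{-i}$. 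In particular, on the event $L_\ell$ that $Y_\ell$ is the \emph{unique} active feature we have $X_i=\epsilon^{-i}$ for all $i\le\ell$ and $X_i=0$ for $i>\ell$, so $\max_i X_i=\epsilon^{-\ell}$ and $\Pr[L_\ell]=\epsilon^\ell\prod_{k\ne\ell}(1-\epsilon^k)$. Summing just the contributions of $L_1,\dots,L_n$ already gives
\[
  \E[\max_i X_i]\;\ge\;\sum_{\ell=1}^n \epsilon^{-\ell}\,\Pr[L_\ell]\;=\;\sum_{\ell=1}^n\prod_{k\ne\ell}(1-\epsilon^k)\;\ge\; n\,(1-O(\epsilon)),
\]
so the prophet earns essentially $n$.

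For the algorithm, the crucial observation is that on $L_\ell$ it sees exactly the prefix $(\epsilon^{-1},\dots,\epsilon^{-i})$ at each step $i\le\ell$ — the \emph{same} prefix for every $\ell\ge i$ — and sees $0$ afterwards. Writing $a_i:=f_i(\epsilon^{-1},\dots,\epsilon^{-i})$ for the fraction it commits along this common trajectory, feasibility along the trajectory $L_n$ (which shows all of $X_1,\dots,X_n$ at their typical values) forces $\sum_{i=1}^n a_i\le 1$, and on $L_\ell$ the reward is exactly $\sum_{i\le\ell} a_i\epsilon^{-i}$ since later steps see $X_i=0$. Hence the total contribution of the single‑activation events is
\[
  \sum_{\ell=1}^n \Pr[L_\ell]\sum_{i\le\ell} a_i\epsilon^{-i}
  \;=\;\sum_{i=1}^n a_i\epsilon^{-i}\sum_{\ell\ge i}\Pr[L_\ell]
  \;\le\;\sum_{i=1}^n a_i\epsilon^{-i}\cdot\frac{\epsilon^i}{1-\epsilon}
  \;\le\;\frac{1}{1-\epsilon},
\]
using $\sum_{\ell\ge i}\Pr[L_\ell]\le\sum_{\ell\ge i}\epsilon^\ell\le\epsilon^i/(1-\epsilon)$. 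It remains to handle outcomes with $\ge 2$ active features: pointwise $\ALG=\sum_i p_iX_i\le\max_iX_i$, and if the active set is $T$ with $|T|\ge 2$ then $\max_i X_i\le|T|\,\epsilon^{-\max T}\le n\,\epsilon^{-\max T}$ while $\Pr[\text{active set}=T]\le\epsilon^{\sum_{k\in T}k}$, so this $T$ contributes at most $n\,\epsilon^{(\sum_{k\in T}k)-\max T}\le n\epsilon$ to $\E[\ALG]$; summing over the at most $2^n$ such sets adds at most $2^n n\epsilon$. Altogether $\E[\ALG]\le\frac{1}{1-\epsilon}+2^n n\epsilon$.

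Combining the two estimates, the approximation ratio of any online algorithm on this instance is at least
\[
  \frac{\E[\max_i X_i]}{\E[\ALG]}\;\ge\;\frac{n\,(1-O(\epsilon))}{\tfrac{1}{1-\epsilon}+2^n n\,\epsilon},
\]
which tends to $n=\min\{\srow,\scol\}$ as $\epsilon\to 0$; since a lower bound for the fractional problem is also one for the integral problem, and $c=n$ is arbitrary, this proves the theorem. The main obstacle — and the point the write‑up must argue carefully — is the commitment step: one must justify that the online algorithm genuinely cannot tailor $a_i$ to $\ell$, because the events $L_i,L_{i+1},\dots,L_n$ are indistinguishable through step $i$, so the hard constraint $\sum_i p_i\le1$ caps the \emph{total} weight $\sum_i a_i$ by $1$ even though the realized reward only ever collects a geometric prefix $\sum_{i\le\ell}a_i\epsilon^{-i}$. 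The low‑probability multi‑activation outcomes are only a minor technical nuisance, dispatched by the crude $\ALG\le\max_iX_i$ bound above.
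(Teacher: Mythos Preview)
Your proof is correct and takes a genuinely different route from the paper's. The paper proves the upper bound on $\E[\ALG]$ via a reverse induction (its Lemma~\ref{lem:solveXiOne}): conditioned on seeing $X_i=\epsilon^{-i}$, it shows that spending the entire remaining budget immediately is optimal, by comparing to the conditional probability that some later $Y_j$ is active. This characterizes the optimal fractional policy and yields the clean bound $\E[\ALG]\le 1/(1-\epsilon)^2$.

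You instead bypass the induction and the optimal-policy characterization entirely. Your argument isolates the informational bottleneck directly: along the ``typical'' trajectory $(\epsilon^{-1},\dots,\epsilon^{-i})$ the events $L_i,\dots,L_n$ are indistinguishable, so a deterministic algorithm commits to fixed fractions $a_1,\dots,a_n$ with $\sum_i a_i\le 1$ (the constraint being enforced on the positive-probability event $L_n$), and the single-activation contribution telescopes to at most $1/(1-\epsilon)$ after swapping sums. Multi-activation outcomes are then dispatched by the crude $\ALG\le\max_i X_i$ bound and a $2^n n\epsilon$ tail. This is more elementary and makes the source of hardness (indistinguishability plus a shared budget) completely explicit; the cost is a messier constant that only matches the paper's in the $\epsilon\to 0$ limit, and you do not learn what the optimal policy actually is.
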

\begin{proof}
  We consider a family of instances of Example \ref{ex:general-tower} where $n = \scol = \srow$.
  Since for every $i$ we have $X_i \geq Y_i$, we get  that for sufficiently small $\epsilon$,
  \begin{align*}
    \E[\max_i\{X_i\}] \quad \geq \quad \E[\max_j\{Y_j\}]  
     \quad = \quad    \sum_{i=1}^n \epsilon^i \prod_{j>i} (1-\epsilon^j) \cdot \frac{1}{\epsilon^i} 
    \quad \geq \quad n (1-\epsilon)^n
    \quad \geq \quad n(1-n\epsilon).
  \end{align*}
  
  On the other hand, we show every online algorithm that may even fractionally select elements has value at most $1/(1-\epsilon)^2$.  This implies the approximation ratio can be  ${n}(1-n\epsilon)(1-\epsilon)^2$, which tends to $n$ as $\epsilon \rightarrow 0$.

\begin{restatable}{lemma}{TowerLowerBound} \label{lem:solveXiOne}
  Suppose at arrival $i$ we have $p = \sum_{i' < i} p_{i'}$ and $X_i = 1/\epsilon^i$.
  Conditioned on this event, any online algorithm obtains expected value from elements $i,\dots,n$  at most $({1-p})/{\epsilon^i}$.
  \end{restatable}
  
  Before proving Lemma~\ref{lem:solveXiOne}, we use it to prove that every algorithm has $O(1)$ expected value. Notice that if $X_i> 1/\epsilon^i$ then the online algorithm should never accept any fraction of the element $X_i$ as $X_{i+1}$ is guaranteed to be larger. Hence by Lemma~\ref{lem:solveXiOne}, the optimal algorithm $\ALG$ takes the smallest $i$ for which $X_i = 1/\epsilon^i$, which means
\begin{align*}
\E[\ALG] & \leq \sum_{i \geq 1} \Pr\big[ \big(X_{j} > 1/\epsilon^{j} \text{ for all } j<i \big) ~\&~  \big(X_i = 1/\epsilon^i \big)  \big] \cdot \frac{1}{\epsilon^i} \\
& = \sum_{i \geq 1}  \Pr[Y_{i-1} = 1/\epsilon^{i-1} ]  \cdot \big( \Pr[X_i = 1/\epsilon^i] \big) \cdot \frac{1}{\epsilon^i} \\
& \leq \sum_{i \geq 1}  \Pr[Y_{i-1} = 1/\epsilon^{i-1} ]  \cdot \big( \sum_{j\geq i} \Pr[ Y_j = 1/\epsilon^j] \big) \cdot \frac{1}{\epsilon^i} \\
& =  \sum_{i  \geq 1} \epsilon^{i-1}  \sum_{j \geq i}  \epsilon^j  \cdot \frac{1}{\epsilon^i} \quad  \leq   \quad \sum_{i  \geq 1}  \frac{\epsilon^{i-1} }{1-\epsilon}
\quad \leq \quad \frac{1}{(1-\epsilon)^2}.	\qedhere
\end{align*}
\end{proof}

Now we prove the missing lemma.

\begin{proof}[Proof of Lemma~\ref{lem:solveXiOne}]
We prove this lemma by reverse induction on $i$. It is immediately true for $i=n$, as spending the entire remaining budget $1-p$ on acquiring $X_i = \frac{1}{\epsilon^i}$ is optimal.
To prove the inductive step, notice the optimal online algorithm can be written as a convex combination of the following two algorithms: one that spends the entire remaining budget of $(1-p)$ on $X_i$ and another one that spends no budget on $X_i$ and plays optimally afterwards. We argue that the first algorithm is better, which means its expected value is $({1-p})/{\epsilon^i}$.

Observe that the second algorithm obtains nonnegative reward only if one of $Y_j$s for $j>i$ is active. In this case $X_{i+1} = 1/\epsilon^{i+1}$ (note it cannot be larger because $X_i = 1/\epsilon^i$), and hence by induction hypothesis the optimal online algorithm gets value $X_{i+1} = (1-p)/\epsilon^{i+1}$. Thus, the expected value of the algorithm is
\[	\Pr \big[\exists j>i \text{ s.t. } Y_j = 1/\epsilon^j \mid X_i = 1/\epsilon^i \big] \cdot (1-p)/\epsilon^{i+1}.
\]
We show $\Pr \big[\exists j>i \text{ s.t. } Y_j = 1/\epsilon^j \mid X_i = 1/\epsilon^i \big] \leq \epsilon$, which implies the first algorithm is always better. To see this, notice
\begin{align*}
\Pr\big[\exists j>i \text{ s.t. } Y_j = 1/\epsilon^j \mid X_i = 1/\epsilon^i \big] & = \frac{ \Pr\big[ \big( \exists j>i \text{ s.t. } Y_j = 1/\epsilon^j \big) ~\&~ \big( X_i = 1/\epsilon^i] \big) \big] } {\Pr[X_i = 1/\epsilon^i ]} \\
& =   \frac{\sum_{j>i} \Pr\big[ \big(  Y_j = 1/\epsilon^j \big) ~\&~ \big( X_i = 1/\epsilon^i] \big) \big] } {\sum_{j \geq i} \Pr\big[ \big(  Y_j = 1/\epsilon^j \big) ~\&~ \big( X_i = 1/\epsilon^i] \big) \big] } \\
& =   \frac{\sum_{j>i} \epsilon^j/(1- \epsilon^j)\cdot \prod_{j' \geq i} (1-\epsilon^{j'}) }{\sum_{j \geq i} \epsilon^j/(1- \epsilon^j)\cdot \prod_{j' \geq i } (1-\epsilon^{j'})} \quad  = \quad   \frac{\sum_{j>i} \epsilon^j/(1- \epsilon^j) }{\sum_{j \geq i} \epsilon^j/(1- \epsilon^j)}. 
\end{align*}
Now using $1+ x \leq \frac{1}{1-x} \leq 1+ 2x$ for $0 \leq x<0.5$, we get 
\begin{align*}
 \Pr\big[\exists j>i \text{ s.t. } Y_j  =  1/\epsilon^j \mid X_i = 1/\epsilon^i \big] & \leq  \frac{\sum_{j\geq i+1} \epsilon^j \cdot (1 +  2\epsilon^j) }{\sum_{j \geq i} \epsilon^j \cdot (1 +  \epsilon^j)} \\
 & = \frac{\epsilon^{i+1} - \epsilon^{n+1} + (2 \epsilon^{2i+2} - 2 \epsilon^{2n+2}) /(1+\epsilon) }{\epsilon^i -\epsilon^{n+1}  + (\epsilon^{2i}  -\epsilon^{2n+2})/(1+\epsilon)} \quad  \leq \quad  \epsilon,
\end{align*}
where the last inequality uses $\epsilon <1/2$ and $\epsilon^n < 2$.
\end{proof}


\section{Selecting Multiple Items} \label{sec:multipleItems}

In this section, we show that our approach via augmentations extends to a variant of the problem in which one may take up to $r \in \mathbb{N}$ of the arriving variables $X_1,\dots,X_n$.
Let $Q \subseteq \{1,\dots,n\}$ be the indices chosen by $\ALG$ with $|Q| \leq r$. Let $X^{(i)}$ denote the $i$th-largest realized variable (we later use notation $Z^{(i)}$ for the $i$th-largest among $\{Z_1,\dots,Z_n\}$ as well).
We have
  \[ \ALG = \sum_{i \in Q} X_i  
\quad \text{ while } \quad 
 \OPT = \sum_{i=1}^r X^{(i)} . \]

We refer to this problem as a cardinality constraint of $r$.
It is also referred to as selecting an independent set of a rank-$r$ matroid in the special case of \emph{$r$-uniform} matroids.

In this setting, there will be significant differences between row and column sparsity assumptions.
We will show that for bounded column sparsity $\scol$, one can design $(1+o(1))$-approximation algorithms for cardinalities $r \to \infty$, while this does \emph{not} hold for bounded row sparsity $\srow$.


\subsection{Bounded Column Sparsity} \label{subsec:uniform-scol}
As $r \to \infty$, we will show for bounded column sparsity an approximation ratio approaching $1$.
\begin{theorem} \label{thm:matroidScolasmyp}
For a fixed $\scol$,  the linearly correlated prophets problem with cardinality constraint $r$ admits a $\Big(1 + O\big((\frac{\scol}{r})^{1/5} (\log r)^{6/5}\big)\Big)$-approximation. 
\end{theorem}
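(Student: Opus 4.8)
The goal is to reduce the cardinality-$r$ linearly correlated prophets problem with bounded column sparsity $\scol$ to the multiple-items augmented prophets problem (Lemma~\ref{lemma:r-aug}), just as Proposition~\ref{prop:colSparsity} reduced the single-item case to Lemma~\ref{lemma:aug}. The plan is to (i) randomly partition the budget and the arrivals so that each block sees an independent ``core'' of its $X_i$'s, (ii) apply Lemma~\ref{lemma:r-aug} inside each block, and (iii) show the resulting decomposed benchmark $\sum_{i=1}^{r}Z^{(i)}$ loses only a $1+o(1)$ factor against $\sum_{i=1}^{r}X^{(i)}$ when $r\gg\scol$.

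\textbf{Step 1: the decomposition.} As in Proposition~\ref{prop:colSparsity}, subsample an inclusion set $S\subseteq[n]$ by keeping each $i$ independently with some probability $q$ (to be tuned, roughly $q=\Theta((\scol/r)^{1/5}\cdot\text{polylog})$ or a constant fraction — the key constraint is $q$ small enough that cores are independent but large enough that the core benchmark is close to the true benchmark). For $i\in S$ let $T_i=\{j:A_{ij}>0,\ A_{i'j}=0\ \forall i'\in S, i'<i\}$ assign each feature to the first surviving arrival containing it, and set $Z_i=\sum_{j\in T_i}A_{ij}Y_j$, $W_i=X_i-Z_i\ge 0$. By construction $Z_i$ depends only on features not appearing in any earlier surviving arrival, so $Z_i$ is independent of $X_1,\dots,X_{i-1}$ restricted to $S$, exactly the hypothesis of the multiple-items augmented prophets problem. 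Running the algorithm of Lemma~\ref{lemma:r-aug} on the arrivals in $S$ with cardinality budget $r$ then yields
\[
  \E[\ALG]\ \ge\ \Big(1+O\big(\tfrac{(\log r)^{3/2}}{r^{1/4}}\big)\Big)^{-1}\cdot\E\Big[\sum_{i=1}^{r}Z^{(i)}\Big].
\]

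\textbf{Step 2: comparing the core benchmark to the true benchmark.} This is the main obstacle. Fix a realization of all $Y_j$; we must show $\E_S[\sum_{i=1}^r Z^{(i)}]\ge(1-o(1))\sum_{i=1}^r X^{(i)}$. The single-item argument showed $\E_S[Z_{i^*}]\ge\frac{1}{e\scol}X_{i^*}$ for the single maximizer; here we need it simultaneously for the top $r$ arrivals and with constant (not $1/\scol$) loss. Two ideas combine. First, as in Claim~\ref{claim:col-z-approx}, for any fixed arrival $i$ and feature $j\in R_i$ we have $\Pr[j\in T_i]\ge q(1-q)^{\scol-1}$, so $\E_S[Z_i]\ge q(1-q)^{\scol-1}X_i\ge q(1-q\scol)X_i$; choosing $q$ so that $q\scol=o(1)$ would give the per-arrival guarantee with factor $(1-o(1))q$, but we lose the factor $q$ itself. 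To recover it, the second idea is that $S$ contains roughly a $q$-fraction of everything, so among the top $\Theta(r/q)$ true arrivals we expect $\Theta(r)$ to survive into $S$, and their $Z_i$'s are each close to the corresponding $X_i$; a concentration argument (the survival events across distinct arrivals are independent, so a Chernoff bound on $|S\cap[\text{top }k]|$ applies) shows that with high probability $\sum_{i=1}^r Z^{(i)}$ over $S$ captures a $(1-o(1))$-fraction of $\sum_{i=1}^r X^{(i)}$, provided $r/q$ is not much larger than $r$ — i.e. provided $q$ is bounded below. So $q$ must be a constant, or a slowly vanishing quantity, and one must check that with constant $q$ the factor $(1-q)^{\scol-1}$ and the Chernoff slack together only cost $1+O((\scol/r)^{1/5}(\log r)^{6/5})$ after optimizing $q$. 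Concretely I expect to set $q$ to balance the two error sources: the ``feature-collision'' loss $\approx q\scol$ and the ``sampling/concentration'' loss $\approx\sqrt{\tfrac{\log r}{qr}}$ (plus the $r^{-1/4}$ term from Lemma~\ref{lemma:r-aug}), whose sum is minimized at $q\asymp(\scol/r)^{?}\cdot\text{polylog}$, yielding the stated $(\scol/r)^{1/5}(\log r)^{6/5}$ bound.

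\textbf{Step 3: assembling the bound.} Chain the inequality of Step~1 with that of Step~2 and take expectations over the $Y_j$'s:
\[
  \E[\ALG]\ \ge\ \Big(1+O\big(\tfrac{(\log r)^{3/2}}{r^{1/4}}\big)\Big)^{-1}\big(1-o(1)\big)\,\E\Big[\sum_{i=1}^{r}X^{(i)}\Big]\ =\ \Big(1+O\big(\big(\tfrac{\scol}{r}\big)^{1/5}(\log r)^{6/5}\big)\Big)^{-1}\E[\OPT],
\]
after substituting the tuned value of $q$ and absorbing all lower-order error terms. The delicate points to get right are (a) that the multiple-items augmented prophets algorithm of Lemma~\ref{lemma:r-aug} indeed only needs independence of $Z_i$ from the \emph{past} $X_{i'}$'s (which holds by the first-survivor assignment), and (b) the two-sided concentration in Step~2, where one must be careful that truncating to the top $\Theta(r/q)$ arrivals and discarding the rest costs nothing, since those discarded arrivals contribute at most $X^{(r)}$ each and there are few of them relative to the total of the top $r$. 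I expect Step~2, and in particular the choice of $q$ that makes the exponents work out to $1/5$ and $6/5$, to be the crux of the argument.
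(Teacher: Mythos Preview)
There is a genuine gap in Step~2. Your inclusion-threshold/subsampling approach cannot give a $1+o(1)$ factor here, because the two constraints on $q$ are incompatible. For the per-arrival collision loss $\Pr[j\in T_i]\geq q(1-q)^{\scol-1}$ to be $q(1-o(1))$ you need $q\scol=o(1)$. But subsampling with probability $q$ loses a factor of $q$ against the benchmark in the worst case: take the instance $X^{(1)}=\cdots=X^{(r)}=1$ and $X^{(i)}=0$ for $i>r$; then $\sum_{i=1}^r X^{(i)}=r$, while after subsampling only $\approx qr$ of the top $r$ survive, so $\sum_{i=1}^r Z^{(i)}\leq qr$. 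Your ``second idea'' (looking at the top $\Theta(r/q)$ true arrivals and noting that $\Theta(r)$ survive) does not recover this factor: in this example there are only $r$ nonzero arrivals to begin with. More generally, after subsampling the $k$-th largest survivor is roughly $X^{(k/q)}$, so $\sum_{k=1}^r Z^{(k)}\approx\sum_{k=1}^r X^{(k/q)}$, which can be a factor-$q$ below $\sum_{k=1}^r X^{(k)}$ whenever the sorted sequence drops sharply after rank $r$. The balancing you sketch (collision loss $\approx q\scol$ versus sampling loss $\approx\sqrt{\log r/(qr)}$) does not model this multiplicative $q$-loss, which is why the exponents happen to come out right even though the argument does not.

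The paper avoids this by \emph{partitioning} rather than subsampling: throw each $X_i$ uniformly into one of $c=\scol/\epsilon'$ groups, so that no arrival is discarded. Within a group, define $X_i'$ as the sum of those $A_{ij}Y_j$ for which $Y_j$ appears only once in that group; since each of the $\leq \scol-1$ other arrivals containing $Y_j$ avoids the group with probability $1-1/c=1-\epsilon'/\scol$, the collision loss per feature is at most $\epsilon'$, giving $\E[\sum_j\OPT_j']\geq(1-\epsilon')\E[\OPT]$ with no factor-$q$ penalty. Each group is now an independent augmented prophets instance (the $X_i'$ play the role of $Z_i$), so Lemma~\ref{lemma:r-aug} applies separately with budget $\epsilon' r/\scol$, and the $c$ budgets sum to $r$. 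The tuning parameter $\epsilon'$ balances the collision loss $\epsilon'$ against the Lemma~\ref{lemma:r-aug} error $O\big((\log(\epsilon' r/\scol))^{3/2}/(\epsilon' r/\scol)^{1/4}\big)$, which is what yields the $1/5$ and $6/5$ exponents.
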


The key idea is to prove an augmentation lemma  for selecting multiple items (restated below). 

\rAug*

In Section~\ref{sec:multipItemsAug} we prove this augmentation lemma, but before we use it to prove Theorem~\ref{thm:matroidScolasmyp}.


The idea of the reduction is that by randomly partitioning the variables into $\scol/\epsilon'$ ``groups'' gives us multiple independent Augmented Prophets problems. 
We think of each group as a subproblem of selecting $\epsilon' r/\scol$ elements and use the Augmentation Lemma to approximately solve it.

  \begin{proof}[Proof of Theorem~\ref{thm:matroidScolasmyp}]
Formally, let there be $c = \scol/\epsilon'$ sets, which we call \emph{groups}, $B_1,\dots,B_c$.
  For each $X_i$, place it in a group $j \in \{1,\dots,c\}$  chosen uniformly at random. For each $X_i$, let 
  \[ X_i' = \sum {A_{ij} Y_j \cdot \one[\text{$Y_j$ only appears once in the group containing $X_i$]}}
  \]
   denote the sum of $X_i$'s components $Y_j$ that do not appear with any other variable in the group containing $X_i$. Let  $\OPT_j'$ denote the sum of the largest $\epsilon' r/\scol$ elements $X_i'$ in group $B_j$.
  
  \begin{claim}\label{claim:bucketDecomp}
$	\E \big[ \sum_j \OPT_j' \big] \geq (1- \epsilon') \cdot \E[\OPT].$
  \end{claim}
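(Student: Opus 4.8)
The plan is to condition on the realization of the features $Y_1,\dots,Y_m$, which reduces everything to an analysis over the random partition into groups. Once $Y$ is fixed, $\OPT=\sum_{i=1}^{r}X^{(i)}$ is a fixed number; let $Q^*\subseteq[n]$ with $|Q^*|=r$ be a set of indices attaining it, so that $\OPT=\sum_{i\in Q^*}X_i$. It then suffices to prove $\E\big[\sum_j\OPT_j'\mid Y\big]\ge(1-\epsilon')\sum_{i\in Q^*}X_i$. Throughout write $c=\scol/\epsilon'$ for the number of groups, $k=\epsilon' r/\scol=r/c$ for the per-group quota, $g(i)$ for the (random) group of $X_i$, and $Q_j=Q^*\cap B_j$.

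The first step restricts attention to $Q^*$ and converts the ``top $k$'' into a sum. Since $\{X_i':i\in Q_j\}$ is a sub-multiset of $\{X_\ell':\ell\in B_j\}$ with all entries nonnegative, $\OPT_j'$ is at least the sum of the top $k$ of $\{X_i':i\in Q_j\}$; and the top $k$ of $m$ nonnegative numbers is at least a $k/m$ fraction of their total. Hence $\OPT_j'\ge\min\{1,k/|Q_j|\}\sum_{i\in Q_j}X_i'$, and summing over $j$ and regrouping by the owner of each element,
\[ \sum_j\OPT_j' \;\ge\; \sum_{i\in Q^*}X_i'\cdot\min\!\left\{1,\tfrac{k}{|Q_{g(i)}|}\right\}. \]

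The second step takes expectations term by term, splitting the loss into a collision loss and an overflow loss. Write $X_i'=X_i-L_i$, where $L_i\ge0$ is the part of $X_i$ killed by collisions inside $B_{g(i)}$; each of the $\le\srow$ features of $X_i$ has at most $\scol-1$ other owners, each landing in $B_{g(i)}$ with probability $1/c$, so $\E[L_i\mid Y]\le\frac{\scol-1}{c}\,X_i\le\epsilon' X_i$. Using $0\le\min\{1,k/|Q_{g(i)}|\}\le1$,
\[ \E\big[X_i'\min\{1,k/|Q_{g(i)}|\}\mid Y\big]\;\ge\;X_i\,\E\big[\min\{1,k/|Q_{g(i)}|\}\big]-\E[L_i\mid Y]. \]
For the overflow factor, $|Q_{g(i)}|=1+\mathrm{Bin}(r-1,1/c)$ has mean $\mu$ with $0<\mu-k\le1$ and variance at most $r/c=k$. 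Since $\min\{1,k/m\}=1-(1-k/m)^+$ and $(1-k/m)^+\le(m-k)^+/k$, and $\E[(|Q_{g(i)}|-k)^+]\le\E[(|Q_{g(i)}|-\mu)^+]+1=\tfrac12\E\big||Q_{g(i)}|-\mu\big|+1\le\tfrac12\sqrt{k}+1$ by Cauchy--Schwarz, we get $\E[\min\{1,k/|Q_{g(i)}|\}]\ge1-\tfrac{1}{2\sqrt k}-\tfrac1k$. Summing over $i\in Q^*$ gives
\[ \E\Big[\textstyle\sum_j\OPT_j'\,\Big|\,Y\Big]\;\ge\;\Big(1-\epsilon'-\tfrac{1}{2\sqrt k}-\tfrac1k\Big)\OPT, \]
and taking expectations over $Y$ and absorbing the $O(1/\sqrt k)=O(\sqrt{\scol/(\epsilon' r)})$ term into $\epsilon'$ (it is at most $\epsilon'$ whenever $\epsilon'\ge(\scol/r)^{1/3}$, which holds comfortably for the choice of $\epsilon'$ in Theorem~\ref{thm:matroidScolasmyp}) yields the claim.

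The main obstacle is precisely this overflow term. Because each group receives in expectation exactly its quota $k=\epsilon' r/\scol$ of elements of $Q^*$, a group exceeds quota with constant probability, so one cannot argue that each element of $Q^*$ is retained with probability $\ge1-\epsilon'$; one must instead bound the overflow \emph{in value}, which reduces to controlling the mean absolute deviation of the near-balanced binomial group occupancies. Confirming that $k$ is large enough for this $\Theta(1/\sqrt k)$ relative loss to be negligible — straightforward given the concrete $\epsilon'$ fixed in the proof of Theorem~\ref{thm:matroidScolasmyp} — is the quantitative core of the argument; the collision part, by contrast, is a direct union bound.
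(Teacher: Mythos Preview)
Your argument is correct and, in fact, more complete than the paper's. The paper's two-line proof only handles the \emph{collision} part: it observes that for each $X_i$ landing in a group, every feature $Y_j$ survives (appears only with $X_i$ in that group) with probability at least $1-\epsilon'$, and then asserts ``hence it contributes to $\OPT_j'$''. But that last step silently assumes that each $X_i$ in $\OPT$ also ends up among the top $k=\epsilon' r/\scol$ values $X'_\ell$ in its group, which is exactly the \emph{overflow} issue you isolate. Since each group receives in expectation precisely $k$ elements of $Q^*$, some groups overflow with constant probability, and without your occupancy analysis the inference does not go through. (Indeed, the literal constant $1-\epsilon'$ cannot hold in general: if $k=1$ and all $X_i$ are equal with no collisions, then $\sum_j\OPT_j'$ equals the number of nonempty groups, whose expectation is only $\approx(1-1/e)\,\OPT$.)

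So your approach---restrict to $Q^*$, bound the top-$k$ of each group below by $\min\{1,k/|Q_j|\}$ times its $Q_j$-sum, and then split the loss into a collision term $\E[L_i]\le\epsilon' X_i$ and an overflow term controlled via the mean absolute deviation of $1+\mathrm{Bin}(r-1,1/c)$---is a genuinely more careful route. The conclusion you obtain, $(1-\epsilon'-O(1/\sqrt{k}))\E[\OPT]$ with $1/\sqrt{k}=\sqrt{\scol/(\epsilon' r)}\le\epsilon'$ for the $\epsilon'$ chosen in Theorem~\ref{thm:matroidScolasmyp}, is $(1-O(\epsilon'))\E[\OPT]$, which is all that is needed downstream. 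In short: you have not merely reproduced the paper's proof but filled a gap in it, at the cost of the exact constant in the claim.
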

  \begin{proof}
 Consider a fixed $X_i$. Condition on $X_i$ landing in a group. Notice that  each of its $Y_j$s have at least $1-\epsilon'$ chance of appearing only with $X_i$ in this group, and hence it  contributes to $\OPT_j'$.
  \end{proof}

  Since each group $B_j$ forms a separate instance of the Augmented Prophets problems, we can apply  the  Augmentation Lemma~\ref{lemma:r-aug} on each of them. Let $\ALG_j$ denote the algorithm's performance in group $j$. By selecting $\epsilon'$ less than   $O\big(\frac{(\log r)^{3/2}}{(\epsilon' r /\scol)^{1/4}} \big)$, i.e., choosing $\epsilon' = \Theta\big( (\frac{\scol}{r})^{1/5} (\log r)^{6/5}\big)$,
  we get
  \[	\sum_j \E [ \ALG_j ] \quad  \geq \quad \sum_j (1- O(\epsilon'))\cdot \E[\OPT'_j] \quad \geq \quad   (1- O(\epsilon')) \cdot \E[\OPT],
  \]
where the last inequality uses Claim~\ref{claim:bucketDecomp}.
  \end{proof}

\subsection{Bounded Row Sparsity} \label{sec:multipleRowSparse}

For cardinality constraints, the symmetry between bounds for row and column sparsity breaks: One cannot guarantee better than a $\Theta({\srow})$ approximation in general even as $r \to \infty$.
In fact, this follows by reducing to our previous hardness result for fractional prophets.
\begin{restatable}{theorem}{RowSparsHardness} \label{thm:RowSparsHardness}
  No algorithm for linearly correlated prophets with cardinality constraint can guarantee better than an $\Omega({\srow})$-approximation, even as $r \to \infty$.
\end{restatable}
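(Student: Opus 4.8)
The plan is to reduce to the fractional single-item hardness already established in Theorem~\ref{thm:hardness-both}. Fix a target row sparsity $\srow = c$ and any cardinality $r$. Starting from the general tower instance of Example~\ref{ex:general-tower} with $n = c$, I would \emph{replicate} every arrival $r$ times: for each $i$, create $r$ arrivals $X_i^{(1)},\dots,X_i^{(r)}$, all equal to the \emph{same} linear combination $\sum_{j\ge i}\epsilon^{j-i}Y_j$ of the \emph{same} tower variables $Y_1,\dots,Y_c$, and present them in contiguous blocks: all $r$ copies of $X_1$, then all $r$ copies of $X_2$, and so on. This instance still has row sparsity $c$ but now column sparsity $cr$, so it is an instance with bounded $\srow$ and (necessarily, in view of Theorem~\ref{thm:matroidScolasmyp}) unbounded $\scol$. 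The block ordering is essential: with an interleaved ordering the algorithm could deduce the entire realization from the first block and then trivially match the prophet on all later blocks.

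For the benchmark, the prophet can take the $r$ identical copies of the single maximizing arrival $X_{i^\ast}=\max_i X_i$, so $\E[\OPT] = r\cdot\E[\max_i X_i] \ge r\cdot\E[\max_j Y_j] \ge r\,c\,(1-c\epsilon)$, where the last two inequalities are exactly the estimate used in the proof of Theorem~\ref{thm:hardness-both}. The crux is to show that any online cardinality-$r$ algorithm $\ALG$ on the replicated instance has $\E[\ALG]\le \tfrac{r}{(1-\epsilon)^2}$. The key point is that within the $i$-th block the $r$ arrivals are identical and arrive consecutively, so the only quantity the algorithm controls is the number $k_i$ of copies of $X_i$ it accepts; this is a (possibly randomized) function of $X_1,\dots,X_i$ alone, i.e.\ an online decision, with $\sum_i k_i\le r$ almost surely and total reward $\sum_i k_i X_i$. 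Setting $p_i := k_i/r$ then defines a feasible online \emph{fractional} policy for the original instance of Example~\ref{ex:general-tower} (it satisfies $\sum_i p_i\le 1$ almost surely) whose reward is exactly $\tfrac1r\ALG$; for randomized $\ALG$ one first conditions on its internal randomness and applies this to each deterministic realization. Since Theorem~\ref{thm:hardness-both} (through Lemma~\ref{lem:solveXiOne}) bounds the expected reward of every online fractional policy on this instance by $\tfrac{1}{(1-\epsilon)^2}$, we obtain $\E[\ALG]\le\tfrac{r}{(1-\epsilon)^2}$.

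Dividing the two bounds yields an approximation ratio of at least $c\,(1-c\epsilon)(1-\epsilon)^2$, which tends to $c=\srow$ as $\epsilon\to 0$; as the construction is valid for every $r$, this gives the $\Omega(\srow)$ lower bound even as $r\to\infty$. The step requiring the most care is the reduction itself: one must verify that an \emph{arbitrary} adaptive, randomized cardinality-$r$ algorithm on the replicated instance is genuinely captured by an online fractional policy on a single copy — concretely, that contiguity of the blocks forces the induced $p_i$ to depend only on $X_1,\dots,X_i$, and that the copies being identical makes the choice of \emph{which} copies to accept irrelevant, so that only the count $k_i$ matters.
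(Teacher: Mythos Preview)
Your proposal is correct and follows essentially the same approach as the paper: replicate each arrival of the tower instance $r$ times in contiguous blocks, observe that any cardinality-$r$ algorithm's behavior reduces to an online fractional single-item policy via $p_i = k_i/r$, and invoke Theorem~\ref{thm:hardness-both}. The paper's presentation differs only cosmetically---it phrases the reduction as two successive relaxations (first allowing the $r$ copies to arrive as a ``batch'', then passing to the fractional problem)---but the construction and the underlying argument are the same.
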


Finally, we show that the $O({\srow})$-approximation  upper bound for single item can be extended to the cardinality constraint setting.
The intuition is straightforward, as we can simply instantiate $r$ parallel versions of our previous single item algorithm and assign arriving variables to each at random.
The analysis needs to show that no more than a constant factor is lost due to cases where members of OPT are sent to the same bucket.

\begin{restatable}{theorem}{RowSparsAlgo} \label{thm:RowSparsAlgo}
For all $r$, there is an $O({\srow})$-approximation for the linearly correlated prophet problem with cardinality constraint $r$.
\end{restatable}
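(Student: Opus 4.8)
The plan is to run $r$ parallel copies of the single-item algorithm of Proposition~\ref{prop:rowSparsity} on a random partition of the arrivals. Concretely: before anything arrives, pick a uniformly random map $\beta\colon[n]\to[r]$ and set $B_j=\{i:\beta(i)=j\}$; then run an independent copy of the Proposition~\ref{prop:rowSparsity} algorithm (its representative construction, inclusion set, and threshold) on the sub-instance $\{X_i:i\in B_j\}$, feeding each arriving $X_i$ to copy number $\beta(i)$. Each copy accepts at most one variable, so the overall algorithm accepts at most $r$ and is feasible; and each sub-instance is again a linearly correlated instance of row sparsity at most $\srow$, so Proposition~\ref{prop:rowSparsity} applies to it unchanged. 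It then remains to argue that random partitioning loses at most a constant factor against $\OPT=\sum_{i=1}^{r} X^{(i)}$.

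First I would record the per-bucket guarantee. Writing $\ALG_j$ for the value obtained in bucket $j$, condition on $\beta$: the bucket-$j$ sub-instance is then fixed with row sparsity $\le\srow$, so Proposition~\ref{prop:rowSparsity} gives $\E[\ALG_j\mid\beta]\ge\frac{1}{2e^{3}\srow}\,\E[\max_{i\in B_j}X_i\mid\beta]$. Averaging over $\beta$ and summing over $j$ (the sub-instances share features $Y_j$, but this is irrelevant, since they are combined only by linearity of expectation),
$$\E[\ALG]\;=\;\sum_{j=1}^{r}\E[\ALG_j]\;\ge\;\frac{1}{2e^{3}\srow}\sum_{j=1}^{r}\E\!\Big[\max_{i\in B_j}X_i\Big].$$

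Next I would prove the ``collision'' bound $\sum_{j=1}^{r}\E[\max_{i\in B_j}X_i]\ge\frac1e\,\E[\OPT]$. Fix a realization of $\Y$, hence of $\X$, and let $i_1,\dots,i_r$ index the $r$ largest values, $X_{i_1}\ge\cdots\ge X_{i_r}$ with ties broken by index, so that $\OPT=\sum_{k=1}^{r}X_{i_k}$. Call $i_k$ a \emph{winner} if none of $i_1,\dots,i_{k-1}$ lies in bucket $\beta(i_k)$; then each nonempty bucket has exactly one winner among $\{i_1,\dots,i_r\}$, namely the smallest-index one, which therefore attains $\max_{i\in B_j\cap\{i_1,\dots,i_r\}}X_i$. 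Hence $\sum_j\max_{i\in B_j}X_i\ge\sum_{k=1}^{r}X_{i_k}\,\one[i_k\text{ is a winner}]$, and, conditioning on the value of $\beta(i_k)$, the events $\{i_\ell\in B_{\beta(i_k)}\}$ for $\ell<k$ are independent of probability $1/r$ each, so $\Pr[i_k\text{ is a winner}]=(1-\tfrac1r)^{k-1}\ge(1-\tfrac1r)^{r-1}\ge\tfrac1e$. Taking the expectation over $\beta$ gives $\ge\frac1e\OPT$ for every realization of $\Y$, and then over $\Y$. Plugging this into the previous display yields $\E[\ALG]\ge\frac{1}{2e^{4}\srow}\E[\OPT]$, i.e.\ an $O(\srow)$ approximation.

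I do not expect a real bottleneck here: the construction is immediate, and the only point requiring care is that Proposition~\ref{prop:rowSparsity}, stated for an arbitrary instance, must be applied to each bucket sub-instance separately with the guarantees then added by linearity of expectation — the fact that different buckets reuse the same features $Y_j$ never enters. The combinatorial heart is the routine winner-per-bucket estimate $(1-1/r)^{k-1}\ge 1/e$, which is why, consistent with Theorem~\ref{thm:RowSparsHardness}, the dependence on $\srow$ does not improve with $r$.
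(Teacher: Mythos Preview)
Your proposal is correct and follows essentially the same approach as the paper: random partition into $r$ buckets, run the single-item Proposition~\ref{prop:rowSparsity} algorithm in each bucket, then bound the loss from partitioning by the $(1-1/r)^{k-1}\ge 1/e$ estimate. The only cosmetic difference is that the paper uses symmetry to reduce $\sum_j \E[\max_{i\in B_j}X_i]$ to $r\cdot\E[\max_{i\in B_1}X_i]$ and computes $\Pr[X^{(k)}=\max_{i\in B_1}X_i]=\tfrac{1}{r}(1-\tfrac{1}{r})^{k-1}$, whereas you sum over all buckets directly via your ``winner'' indicators; the two computations are equivalent.
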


We now present formal proofs of the last two theorems.
\begin{proof}[Proof of Theorem~\ref{thm:RowSparsHardness}]
  For any $r, \srow$, we construct the general tower instance of Example \ref{ex:general-tower} with parameter $\srow$, but we simply make $r$ copies of each variable $X_i$ (not independent, but exact copies).
  The row sparsity is unchanged.
  Now we will make the problem easier in two steps.

  (1) Allow variables to arrive in batches of size $r$.
  The algorithm may select any subset of the variables (until fulfilling its cardinality constraint), then reject the rest and receive the next batch.
  This is a strictly easier problem, so an algorithm's performance can only improve.
  Of course this will not help on this instance, since each batch of $r$ are all identical, and it will turn out to be optimal to either take them all or none.

  (2) Now instead we send the original tower instance (with no duplication), and we give the algorithm a cardinality constraint of $1$, but we allow it to pick a fractional amount of each variable $X_i$.
  In other words, we return exactly to the setting of Theorem \ref{thm:hardness-both}.
  In each case, the algorithm sees the same information before making each decision, i.e., the value of the current $X_i$.
  In the fractional problem, the algorithm can pick any fraction $p_i$ of $X_i$, so long as the total amount picked is at most one.
  In problem (1) above, the algorithm can pick any fraction $\frac{c}{r}$ of the variables that equal the original $X_i$, as long as it has does not exceed a total of $\frac{r}{r}$.
  The fractional problem allows more choice and the benchmarks (normalized) are the same, since the maximum of the duplicated instance will take all $r$ copies of the largest $X_i$.
  So the fractional problem is only easier.

  We now invoke Theorem \ref{thm:hardness-both}, which gives an lower bound of $\Omega({\srow})$-approximation on the fractional problem.
  (Note that $\srow$ did not change during the above reduction, although $\scol$ did.)
\end{proof}

\begin{proof}[Proof of Theorem~\ref{thm:RowSparsAlgo}]
  Given an instance, we create $r$ buckets $B_1,\dots,B_r$ and place each variable $X_i$ in a bucket $B_j$ uniformly at random.
  We then run our algorithm for the linearly correlated prophet problem with bounded $\srow$ in each bucket $j$ (call it $\ALG_j$), selecting one item.
  Given a fixed assignment of variables to buckets, in each bucket $B_j$ we have by the algorithm's guarantee that, with expectation over realizations of $\{X_i : i \in B_j\}$, by Proposition~\ref{prop:rowSparsity} 
    \[ \E_{\X}[\ALG_j] \geq \Omega\left(\frac{1}{\srow}\right) \cdot \E_{\X}[\max_{X_i\in B_j} X_i] . \]
  Taking expectations over both buckets and variables, we have
  \begin{align*}
    \E[\ALG] ~ = ~    \sum_{j=1}^r \E_B \E_{\X} \big[\ALG_j \big]  ~ \geq ~ \Omega\left(\frac{1}{\srow}\right) \sum_{j=1}^r \E_B \E_{\X}\big[\max_{X_i\in B_j} X_i\big]  
            ~=~    r \cdot \Omega\left(\frac{1}{\srow}\right) \E_{\X} \E_B\big[\max_{X_i\in B_1} X_i \big],
  \end{align*}
where the last equality is  by symmetry of the buckets.
  Now since $\OPT = X^{(1)} + \cdots + X^{(r)}$ where $X^{(i)}$ is the $i$th largest variable, and since for fixed variable realizations $\Pr[X^{(i)} = \max_{X_{i'}\in B_1} X_{i'}] = \frac{1}{r}\prod_{i'=1}^{i-1}(1- \frac{1}{r}) \geq \frac{1}{e\cdot r}$, we get
  \begin{align*}
    \E[\ALG] &\geq r \cdot \Omega\left(\frac{1}{\srow}\right) \E_{\X}\Big[\sum_{i=1}^r \Pr[X^{(i)} = \max_{X_{i'}\in B_1} X_{i'}] \cdot X^{(i)}\Big]  \\
            &\geq r \cdot \Omega\left(\frac{1}{\srow}\right) \E_{\X}\Big[\sum_{i=1}^r \frac{1}{e\cdot r} X^{(i)} \Big]   
        \quad    =   \quad   \Omega\left(\frac{1}{\srow}\right) .	\qedhere
  \end{align*}
\end{proof}

\subsection{Multiple-Items  Augmentation Lemma} \label{sec:multipItemsAug}

In Section~\ref{sec:aug} we showed a $2$-approximation single-item augmentation lemma using  the half of expected-maximum as a threshold. In this section, we prove a $1+o(1)$ approximation multiple-items augmentation lemma (Lemma~\ref{lemma:r-aug}), assuming the cardinality constraint $r$ is sufficiently large. 
We first give a surrogate benchmark $\OPT'$ that competes with $\OPT$.
We then define the algorithm and show that it competes with $\OPT'$.

\subsubsection*{Surrogate benchmark.}
The analysis hinges on  theshold $\tau_0 := \frac{\E[\OPT]}{\epsilon}$, where we call variables $Z_i$ and $X_i$ that fall above the threshold \emph{heavy} and below \emph{light}.
For light variables, we exclude the scenario where they are very small, below some threshold $\tau_c \leq \epsilon \frac{ \E[\OPT]}{r}$.
We will generally use the prime symbol $\prime$ to denote a version of a variable that is zeroed out if it's too heavy (or too light).
Let
\begin{align*}
  Z_i' = Z_i \cdot \one[\tau_c \leq Z_i < \tau_0]   \quad \text{and} \quad
  Z^{(i)\prime} = Z^{(i)} \cdot \one[\tau_c \leq Z_i < \tau_0].
\end{align*}
Let
\begin{align*}
  \OPT'   &= \OPT_1' + \OPT_2'  \qquad \text{where}\\
  \OPT_1' &= \max_i ~ Z_i \cdot \one[Z_i \geq \tau_0]  \quad \text{and} \quad 
  \OPT_2' = {\textstyle \sum_{j=1}^r Z^{(j)\prime}}.
\end{align*}
Note that $\OPT'$ only considers one heavy variable and throws all other heavy variables away.
Nevertheless, in Appendix~\ref{app:proofs} we will show the following claim that it competes with $\OPT$.

\begin{claim} \label{claim:onlyOneLarge}
  $\E [\OPT'] \geq (1-2\eps) \E [\OPT]$.
\end{claim}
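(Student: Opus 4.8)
\textbf{Proof proposal for Claim~\ref{claim:onlyOneLarge}.}

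The plan is to bound the loss from each of the two modifications separately: (a) replacing $\OPT = \sum_{j=1}^r Z^{(j)}$ by keeping \emph{only one} heavy variable among those exceeding $\tau_0$, and (b) zeroing out the light variables that fall below $\tau_c$. For (a), the key observation is that $\OPT$ can contain at most $r$ heavy variables, but in fact the \emph{expected} total contribution of heavy variables to $\OPT$ is small because $\tau_0 = \E[\OPT]/\epsilon$ is so large: by a Markov-type / counting argument, $\E\big[\sum_{j} Z^{(j)} \cdot \one[Z^{(j)} \geq \tau_0]\big] \leq \E\big[\sum_i Z_i \cdot \one[Z_i \geq \tau_0]\big]$, and since each term in the sum that is nonzero is at least $\tau_0$, the expected number of heavy variables is at most $\E[\OPT]/\tau_0 = \epsilon$; more usefully, the expected sum of all heavy $Z_i$ is itself a nonnegative random variable whose expectation we can compare against $\E[\OPT_1']$. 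The cleanest route is: condition on the realization; if there is at least one heavy variable then $\OPT_1'$ captures the largest one, and we want to argue the \emph{other} heavy variables contribute little in expectation. Since there are at most $r$ of them and each is counted in $\OPT$, while $\OPT \geq \OPT_1' \geq \tau_0 = \E[\OPT]/\epsilon$ whenever a heavy variable exists, a union/averaging bound over which variable is heavy shows the expected "lost" heavy mass is at most $\epsilon \cdot \E[\OPT]$. (Concretely: $\E[\text{lost heavy mass}] = \sum_i \E[Z_i \one[Z_i \geq \tau_0, \ Z_i \text{ not the max}]] \leq \sum_i \E[Z_i \one[Z_i \geq \tau_0]]$, and one shows this last quantity is $\leq \epsilon \E[\OPT]$ using that conditioning $Z_i$ to be heavy forces $\OPT \geq \tau_0$, so $\E[\OPT] \geq \tau_0 \Pr[Z_i \geq \tau_0]$ summed appropriately — the independence of the $Z_i$ is what lets this go through.)

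For (b), the loss from discarding light variables below $\tau_c$: each such discarded variable contributes at most $\tau_c \leq \epsilon \E[\OPT]/r$ to $\OPT$, and at most $r$ of them appear in $\OPT = \sum_{j=1}^r Z^{(j)}$ (since $\OPT$ only sums the top $r$). Hence the total discarded light mass is at most $r \cdot \tau_c \leq \epsilon \E[\OPT]$ deterministically, giving $\E[\OPT_2' + (\text{kept heavy})] \geq \E[\sum_{j=1}^r Z^{(j)} \one[Z^{(j)} \geq \tau_c]] - (\text{lost heavy mass from (a)})$, and combining with the bound from (a) yields $\E[\OPT'] \geq \E[\OPT] - \epsilon\E[\OPT] - \epsilon\E[\OPT] = (1-2\epsilon)\E[\OPT]$.

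I would carry this out in the order: (1) fix notation and split $\OPT - \OPT'$ into a "heavy loss" term and a "light loss" term; (2) dispatch the light loss by the deterministic $r \cdot \tau_c$ bound; (3) handle the heavy loss by the averaging argument above, using independence of the $Z_i$ and the definition $\tau_0 = \E[\OPT]/\epsilon$; (4) sum up. The main obstacle I anticipate is step (3): making the "only one heavy variable matters" bound rigorous requires care, because naively $\OPT$ could in principle have up to $r$ heavy variables each of size just above $\tau_0$, contributing up to $r \tau_0 = r\E[\OPT]/\epsilon \gg \E[\OPT]$ — so the argument \emph{must} exploit that this event is correspondingly rare (its probability is at most roughly $\epsilon/r$ by Markov applied to the count of heavy variables, or more precisely one bounds $\E[(\text{number of heavy } Z_i)]$ and uses that $\OPT$ already pays $\tau_0$ for the first one). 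Getting the probabilistic bookkeeping exactly right — rather than off by a factor — is the delicate part; the independence of the $\{Z_i\}$ is essential here since it is what prevents many variables from being simultaneously heavy in a correlated way, though even a crude Markov bound on the count suffices for the claimed $(1-2\epsilon)$ constant.
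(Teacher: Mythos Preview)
Your decomposition into a ``light loss'' (variables below $\tau_c$) and a ``heavy loss'' (heavy variables beyond the single one kept by $\OPT_1'$) is exactly the structure of the paper's proof, and your light-loss bound via $r\cdot\tau_c \leq \epsilon\,\E[\OPT]$ is correct and matches the paper.

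The heavy-loss step, however, has a genuine gap. You propose to bound
\[
  \E[\text{lost heavy mass}] \;\leq\; \sum_i \E\big[Z_i \cdot \one[Z_i \geq \tau_0]\big]
\]
and then assert that the right-hand side is at most $\epsilon\,\E[\OPT]$. This last inequality is false. Take $n=r=1$ with $Z_1 = M$ with probability $q = \epsilon$ and $Z_1 = 0$ otherwise: then $\E[\OPT] = \epsilon M$, $\tau_0 = M$, and $\E[Z_1 \one[Z_1 \geq \tau_0]] = \epsilon M = \E[\OPT]$, off by a factor $1/\epsilon$ from your claim. The point is that the total expected heavy mass can be as large as $\Theta(\E[\OPT])$; what is small is only the heavy mass \emph{excluding the maximum}, so dropping the ``not the max'' indicator is fatal. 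Your parenthetical justification (``conditioning $Z_i$ to be heavy forces $\OPT \geq \tau_0$, so $\E[\OPT] \geq \tau_0 \Pr[Z_i \geq \tau_0]$'') only yields $\Pr[Z_i \geq \tau_0] \leq \epsilon$ for each $i$; it does not control $\E[Z_i \mid Z_i \geq \tau_0]$, which can be arbitrarily large relative to $\tau_0$.

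The paper closes this gap differently. It conditions on the event $H = \{\max_i Z_i \geq \tau_0\}$ and proves
\[
  \E[\OPT \mid H] \;\leq\; \E[Z^{(1)} \mid H] + \E[\OPT],
\]
by decomposing over which $Z_i$ is the heavy maximum and using independence: given that $Z_i$ is the max and equals some $z \geq \tau_0$, the remaining variables are conditioned to lie below $z$, which (by independence) can only \emph{decrease} $\E[\OPT_{-i}]$ compared to its unconditional value $\leq \E[\OPT]$. Rearranging with $p = \Pr[H] \leq \epsilon$ gives $\E[\OPT''] \geq (1-p)\E[\OPT]$. This is precisely the ``only one heavy variable matters'' statement you were aiming for, but the mechanism is a stochastic-domination argument on the \emph{non-maximal} variables, not a Markov bound on the total heavy mass.
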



\subsubsection*{Algorithm overview.}
For intuition, the problem with setting any particular fixed threshold is that variables with insignificant values of $Z_i$ can be boosted by the adversary to some $X_i$ just above the threshold.
The algorithm would use up its $r$ slots and be unable to take the later, larger arrivals that contribute to $\OPT$.

Therefore, we will define a sequence of thresholds, each a factor of $1-\epsilon$ apart.
The algorithm will have a certain number of slots $\tilde{r}_j$ for the ``bucket'' $j$ of arrivals between any two thresholds.
For this fixed bucket, such a boosting strategy by the adversary can only cost the algorithm a factor of $1-\epsilon$.

Roughly, this strategy will cover the case where $\OPT$ is concentrated.
To allow for cases where most of $\OPT$ comes from very rare, very large variables, we will also reserve a slot for such variables and analyze it separately.

\subsubsection*{Algorithm definition.}
We define a sequence of thresholds.
Recall that $\OPT = \sum_{j=1}^r Z^{(j)}$ where $Z^{(j)}$ is the $j$th-largest of $Z_1,\dots,Z_n$.
Let $c = \left\lceil \frac{1}{\epsilon} \ln \frac{r}{\epsilon^2} \right\rceil$ and define thresholds
\begin{align*}
  \tau_j &= (1-\epsilon)^j \frac{\E [\OPT]}{\epsilon}   & (j = 0,\dots,c)
\end{align*}

\begin{observation} \label{lemma:threshold-values}
  The largest threshold is  $\tau_0 = \frac{\E [\OPT]}{\epsilon}$ and the smallest threshold $\tau_c \leq \frac{\epsilon \cdot \E [\OPT]}{r}$.
\end{observation}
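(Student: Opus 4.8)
The statement to prove is Observation~\ref{lemma:threshold-values}, which is a simple computation, so my plan is to verify the two claimed bounds directly from the definition $\tau_j = (1-\epsilon)^j \frac{\E[\OPT]}{\epsilon}$ and the choice $c = \lceil \frac{1}{\epsilon}\ln\frac{r}{\epsilon^2}\rceil$.

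First, for the largest threshold: since the sequence $(1-\epsilon)^j$ is decreasing in $j$ (assuming $0 < \epsilon < 1$, which is implicit in the setup), the maximum over $j = 0,\dots,c$ is attained at $j = 0$, giving $\tau_0 = (1-\epsilon)^0 \frac{\E[\OPT]}{\epsilon} = \frac{\E[\OPT]}{\epsilon}$. This is immediate.

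Second, for the smallest threshold $\tau_c = (1-\epsilon)^c \frac{\E[\OPT]}{\epsilon}$: I would use the standard inequality $1-\epsilon \le e^{-\epsilon}$, so $(1-\epsilon)^c \le e^{-\epsilon c}$. Since $c \ge \frac{1}{\epsilon}\ln\frac{r}{\epsilon^2}$, we get $\epsilon c \ge \ln\frac{r}{\epsilon^2}$, hence $(1-\epsilon)^c \le e^{-\ln(r/\epsilon^2)} = \frac{\epsilon^2}{r}$. Multiplying by $\frac{\E[\OPT]}{\epsilon}$ yields $\tau_c \le \frac{\epsilon^2}{r}\cdot\frac{\E[\OPT]}{\epsilon} = \frac{\epsilon\cdot\E[\OPT]}{r}$, as claimed.

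There is no real obstacle here — the only things to be slightly careful about are that $c$ is rounded up (which only makes $(1-\epsilon)^c$ smaller, so the bound still holds) and that we are implicitly assuming $\epsilon \in (0,1)$ so that $1-\epsilon > 0$ and the monotonicity and the inequality $1-\epsilon \le e^{-\epsilon}$ both apply. I would simply write out these two lines. A clean way to present it:

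\begin{proof}
Since $0 < \epsilon < 1$, the factor $(1-\epsilon)^j$ is positive and strictly decreasing in $j$, so among $j = 0,\dots,c$ the value $\tau_j = (1-\epsilon)^j \tfrac{\E[\OPT]}{\epsilon}$ is largest at $j=0$, where $\tau_0 = \tfrac{\E[\OPT]}{\epsilon}$, and smallest at $j = c$. For $\tau_c$, using $1-\epsilon \le e^{-\epsilon}$ and $c \ge \tfrac{1}{\epsilon}\ln\tfrac{r}{\epsilon^2}$ we get
\[
  (1-\epsilon)^c \;\le\; e^{-\epsilon c} \;\le\; e^{-\ln(r/\epsilon^2)} \;=\; \frac{\epsilon^2}{r},
\]
and therefore
\[
  \tau_c \;=\; (1-\epsilon)^c \,\frac{\E[\OPT]}{\epsilon} \;\le\; \frac{\epsilon^2}{r}\cdot\frac{\E[\OPT]}{\epsilon} \;=\; \frac{\epsilon\cdot\E[\OPT]}{r}. \qedhere
\]
\end{proof}
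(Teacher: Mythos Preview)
Your proof is correct and follows essentially the same approach as the paper: the paper's proof also notes that $\tau_0$ is immediate and bounds $(1-\epsilon)^c \le e^{-\epsilon c} \le \exp(-\ln(r/\epsilon^2)) = \epsilon^2/r$. Your version is just slightly more explicit about the monotonicity and the ceiling, which is fine.
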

\begin{proof}
  $\tau_0$ is immediate, and we have $(1-\epsilon)^c \leq e^{-\epsilon c} \leq \exp\left( - \epsilon \frac{1}{\epsilon} \ln \frac{r}{\epsilon^2} \right) = \frac{\epsilon^2}{r}$.
\end{proof}

Now we define the size of each bucket.
Recall that we abuse notation by writing $Z_i \in \OPT$ if $Z_i$ is one of the $r$ variables included in the OPT solution.
Let
\begin{align*}
  r_j &= \E \left| \left\{ i : Z_i \in \OPT, \tau_j \leq Z_i \leq \tau_{j-1} \right\} \right|  & (j = 1,\dots,c)  \\
  \beta &= 3 \sqrt{r \ln(c/\epsilon)}  \\
  \tilde{r}_j &= r_j + \beta  & (j = 1,\dots,c)  \\
  \tilde{r}_0 &= 1 .
\end{align*}

We first define an algorithm $\ALG'$ that does not quite achieve the cardinality constraint $r$.
We will then modify it to obtain $\ALG$ with only a small loss in performance.
$\ALG'$ initializes $b_j = 0$ for $j=0,\dots,c$ and proceeds as follows when a variable $X_i$ arrives.
\begin{enumerate}[topsep=0mm,itemsep=0mm]
  \item If $X_i < \tau_c$, we discard $X_i$ and continue.
  \item Otherwise, let $j = \min \{j' : X_i \geq \tau_{j'} \}$.
  \item If $b_j < \tilde{r}_j$, we take $X_i$ and increment $b_j$.
        Otherwise (bucket $j$ is full), increment $j$ and repeat this step.
        If $j > c$, stop and discard $X_i$.
\end{enumerate}
In other words, we attempt to assign $X_i$ to its original bucket, but if that is full, we allow it to fall into buckets reserved for smaller variables (higher indices $j$).

Now, the final algorithm $\ALG$ is defined as follows: run $\ALG'$, but each time $\ALG'$ takes an arrival $X_i$, discard it independently with probability $\epsilon$.
If a variable is not discarded, but the cardinality constraint $r$ is reached, then discard it anyways.

\subsubsection*{Analysis.}
We show in Appendix~\ref{app:proofs} that $\ALG$ approximates $\ALG'$.
\begin{lemma} \label{lemma:alg-approx-algprime}
  For all $\epsilon \geq \frac{9 \left( \ln r \right)^{3/2}}{r^{1/4}}$, we have $\E[\ALG] \geq (1-2\epsilon) \E[\ALG']$.
\end{lemma}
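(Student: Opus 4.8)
\textbf{Proof proposal for Lemma~\ref{lemma:alg-approx-algprime}.}

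The plan is to compare $\ALG$ and $\ALG'$ step by step. The two ways in which $\ALG$ can fall short of $\ALG'$ are (a) the independent $\epsilon$-thinning, which discards each accepted arrival with probability $\epsilon$, and (b) the cardinality cap: $\ALG'$ may take more than $r$ items, in which case the overflow must be thrown away. Source (a) costs exactly a factor $(1-\epsilon)$ in expectation by linearity, conditioned on the realized run of $\ALG'$ — since each item taken by $\ALG'$ is independently kept with probability $1-\epsilon$ regardless of its value, $\E[\text{value kept after thinning}] = (1-\epsilon)\,\E[\ALG']$. So the whole content of the lemma is to control source (b): I need to show that with the thinning in place, $\ALG$ essentially never hits the cardinality constraint $r$, so the truncation step almost never fires, and when it does its cost is negligible.

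First I would bound the total number of items $\ALG'$ attempts to take. $\ALG'$ fills bucket $j$ to at most $\tilde r_j = r_j + \beta$ for $j=1,\dots,c$, plus $\tilde r_0 = 1$, so $\ALG'$ takes at most $1 + \sum_{j=1}^c (r_j + \beta) = 1 + \beta c + \sum_{j=1}^c r_j$ items. Now $\sum_{j=1}^c r_j = \E\,|\{i : Z_i \in \OPT, \tau_c \le Z_i \le \tau_0\}| \le r$ since $\OPT$ has only $r$ members. Thus $\ALG'$ takes at most $r + \beta c + 1$ items in expectation over the $r_j$; but more usefully I want a pathwise / high-probability statement, because the cardinality cap acts pathwise. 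The key point is that the $r_j$ in the bucket sizes are \emph{expectations}, while the actual number of $\OPT$-members landing in $[\tau_j,\tau_{j-1}]$ on a given realization may exceed $r_j$. I would argue that $\ALG$ — after thinning — takes, with probability $1 - O(\epsilon^2)$ or so, at most $r$ items total, by a concentration argument: the number of items $\ALG'$ takes in bucket $j$ is at most $\tilde r_j$, and after $\epsilon$-thinning its mean is at most $(1-\epsilon)\tilde r_j$; summing over $j=0,\dots,c$ and using $\beta = 3\sqrt{r\ln(c/\epsilon)}$ together with a Chernoff/Hoeffding bound (the thinning decisions are independent), the total concentrates around $(1-\epsilon)(r + \beta c + 1) < r$ for the stated range of $\epsilon$, with the slack $\epsilon r - (1-\epsilon)\beta c - \cdots$ being large enough — this is exactly where the hypothesis $\epsilon \ge 9(\ln r)^{3/2}/r^{1/4}$ enters, since then $\epsilon r$ dominates $\beta c = O(\sqrt{r\ln(c/\epsilon)}\cdot \frac1\epsilon\ln\frac{r}{\epsilon^2})$.

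Concretely I would write the number of items $\ALG$ retains as $N = \sum_{i} \one[\ALG' \text{ takes } X_i]\cdot B_i$ where $B_i$ are i.i.d.\ Bernoulli$(1-\epsilon)$ independent of everything; condition on the run of $\ALG'$, so $M := \sum_i \one[\ALG'\text{ takes }X_i] \le r + \beta c + 1 =: M_{\max}$ is fixed, and $N \sim \mathrm{Bin}(M, 1-\epsilon)$. By a Chernoff bound, $\Pr[N > r \mid M] \le \Pr[\mathrm{Bin}(M_{\max}, 1-\epsilon) > r]$, and since $(1-\epsilon)M_{\max} = r - \epsilon r + (1-\epsilon)(\beta c + 1)$ with $\epsilon r - (1-\epsilon)\beta c$ of order $\epsilon r \gg \sqrt{r}$ under the hypothesis, this probability is at most $\exp(-\Omega(\epsilon^2 r)) \le \epsilon^2$ (using again $\epsilon^2 r \ge 81(\ln r)^3 \gg \ln(1/\epsilon^2)$). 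On the event $\{N \le r\}$ the truncation never fires and $\ALG$ keeps every thinned item; on the complementary event I lose at most the value of the extra items, which I bound crudely by noting that each item $\ALG'$ takes is at most $\tau_0 = \E[\OPT]/\epsilon$, so the expected loss from truncation is at most $M_{\max}\cdot(\E[\OPT]/\epsilon)\cdot \Pr[N>r] \le M_{\max}\cdot(\E[\OPT]/\epsilon)\cdot \epsilon^2$; since $M_{\max} = O(r/\epsilon)$ this is $O(\E[\OPT]\cdot r) $ which is too lossy — so instead I would bound the truncated mass more carefully using $\E[\ALG'] \ge \E[\OPT_2']$ and the fact that the items beyond the $r$-th are the \emph{smallest} ones $\ALG'$ holds (or simply redo the thinning-plus-cap analysis jointly rather than sequentially). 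Combining, $\E[\ALG] \ge (1-\epsilon)\E[\ALG'] - (\text{truncation loss}) \ge (1-2\epsilon)\E[\ALG']$.

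The main obstacle is this last point: making the truncation loss genuinely lower-order rather than just small-probability-times-huge-value. The clean fix is to note that because $\ALG$ caps at $r$ by discarding the \emph{latest} over-budget arrivals, and because buckets for large $Z_i$ (small $j$) are filled first and are small ($\tilde r_0 = 1$, etc.), the over-budget discards are concentrated among arrivals assigned to high-index (small-threshold) buckets, each worth at most $\tau_{j-1}$; so the expected discarded value is at most $\sum_j \tau_{j-1}\cdot \E[(\text{overflow in bucket } j)]$, and the overflow random variables have tiny mean by the same Chernoff estimate — bounding this sum against $\E[\OPT]$ using $\tau_j \le \epsilon\E[\OPT]/r \cdot (1-\epsilon)^{j-c}$ and $\sum_j$ geometric. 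This gives truncation loss $\le \epsilon \E[\OPT] \le \epsilon \E[\ALG']/(1-2\epsilon)$ up to the already-established relation between $\E[\ALG']$ and $\E[\OPT]$, closing the bound. I would organize the write-up as: (1) the $(1-\epsilon)$ thinning identity; (2) the pathwise bound $M \le M_{\max}$; (3) the Chernoff bound on $\Pr[N > r]$ with the hypothesis on $\epsilon$; (4) the refined truncation-loss estimate; (5) assembling $\E[\ALG] \ge (1-2\epsilon)\E[\ALG']$.
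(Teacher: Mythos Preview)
Your steps (1)--(3) are exactly right and match the paper: the thinning identity, the deterministic bound $M \le M_{\max} = r + c\beta + 1$ on the number of items $\ALG'$ takes (using $\sum_j r_j \le r$), and the Chernoff bound on $\Pr[N > r]$ using the hypothesis on $\epsilon$. The gap is entirely in step (4), your truncation-loss estimate.

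Your first attempt asserts ``each item $\ALG'$ takes is at most $\tau_0$,'' which is false: bucket $0$ accepts an item with $X_i \ge \tau_0$ and there is no upper bound on its value. Your second attempt claims the over-budget discards are ``concentrated among arrivals assigned to high-index buckets'' because ``buckets for large $Z_i$ are filled first.'' That is also false: the cap is applied in arrival order, not in bucket order, so a late-arriving bucket-$0$ item (with arbitrarily large value) can be the one truncated. Your third attempt (the $\sum_j \tau_{j-1}\cdot \E[\text{overflow in bucket }j]$ sketch) inherits the same problem, and in any case ``overflow in bucket $j$'' is not well-defined for a global cap.

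The clean fix---which is exactly what the paper does---is to observe that the Chernoff bound you already proved holds \emph{conditionally on the entire run of $\ALG'$}: for any realization of $(X_1,\dots,X_n)$, hence any realized item-set $S$ with $|S| \le M_{\max}$, we have $\Pr[N \ge r \mid S] \le \epsilon$. Now for each fixed $i \in S$, $\ALG$ fails to keep $X_i$ only if it is thinned (probability $\epsilon$, independent of the past) or the cap was already reached (an event contained in $\{N \ge r\}$, probability $\le \epsilon$). By a union bound, $\Pr[\ALG \text{ keeps } X_i \mid S] \ge 1 - 2\epsilon$, and since this holds item-by-item with $X_i$ determined by the conditioning, summing gives $\E[\ALG] \ge (1-2\epsilon)\E[\ALG']$. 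Equivalently, in your decomposition: bound the truncation loss by $L \le \ALG' \cdot \one[N \ge r]$ (you cannot lose more than $\ALG'$ had), so $\E[L \mid S] \le \epsilon \cdot \ALG'$ and $\E[\ALG] \ge (1-\epsilon)\E[\ALG'] - \epsilon\,\E[\ALG'] = (1-2\epsilon)\E[\ALG']$. Either way, no bucket-by-bucket value accounting is needed.
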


Now, we analyze $\ALG'$.

Let us define the contributions of $\ALG'$ and $\OPT'$ bucket-by-bucket.
The top bucket of $\ALG'$ is split into cases where the corresponding $Z_i$ is heavy or light.
The following are random sets:
\begin{align*}
  O_j &= \{i \in \OPT' : \tau_j \leq Z_i < \tau_{j-1}\}  & (j = 1,\dots,c)  \\
  O_0 &= \{i \in \OPT' : \tau_0 \leq Z_i\}  \\
  B_j &= \{i \in \ALG  : \tau_j \leq X_i < \tau_{j-1}\}  & (j = 1,\dots,c)  \\
  B_0^{\text{light}} &= \{i \in \ALG'  : Z_i < \tau_0 \leq X_i\}  \\
  B_0^{\text{heavy}} &= \{i \in \ALG'  : \tau_0 \leq Z_i\} .
\end{align*}
We use the notation $i \in \OPT'$ to denote that $i$ contributes to $\OPT'$, i.e. either $Z_i$ is the largest among $\{Z_j\}$ or $Z_i'$ is among the $r$ largest of $\{Z_j'\}$.
Similarly, we write $i \in \ALG'$ to mean that the algorithm takes $X_i$.

Now, we break down $\ALG'$ as follows.
\begin{align*}
  \ALG'   &= \ALG_1' + \ALG_2'  \qquad \text{where}\\
  \ALG_1' = \sum_{i \in B_0^{\text{heavy}}} X_i & \quad \text{and} \quad
  \ALG_2' = \sum_{i \in B_0^{\text{light}}} X_i + \sum_{j=1}^c \sum_{i \in B_j} X_i .
\end{align*}
In other words, $\ALG_1'$ tracks the contribution of the special ``heavy'' bucket, but only in the case where the underlying variable $Z_i$ is heavy.
$\ALG_2'$ tracks the remaining case and all other buckets.
Notice these definitions are only for the purpose of analysis, as $Z_i$ is not observable to the algorithm.

Finally, we define
  \[ P := \Pr[ \max_i X_i \geq \tau_0 ] . \]
A key point will be that if $P$ is large, then the algorithm will often get some variable larger than $\tau_0$, which is good enough to compete with $\OPT$.

\begin{lemma} \label{lemma:alg-P-large}
  $\E[\ALG'] \geq \frac{P}{\epsilon}\E[\OPT]$.
\end{lemma}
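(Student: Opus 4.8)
The plan is to mimic the single-item augmentation proof (Lemma~\ref{lemma:aug}), but now accounting for the fact that a threshold-$\tau_0$ crossing only happens with probability $P$, and when it does happen the algorithm has reserved a slot ($\tilde r_0 = 1$) in the top bucket to capture it. First I would recall that $\ALG' \geq \ALG_1' + \ALG_2' \geq \ALG_1'$, so it suffices to lower bound $\E[\ALG_1']$. I claim $\E[\ALG_1'] \geq P\cdot \tau_0$: whenever $\max_i X_i \geq \tau_0$, the very first arrival $X_i$ exceeding $\tau_0$ will find bucket $0$ non-full (since $\tilde r_0 = 1$ and nothing smaller than $\tau_0$ was ever routed there), and hence $\ALG'$ takes it, collecting value $X_i \geq \tau_0$. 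Summing over the disjoint events ``$i$ is the first arrival with $X_i \geq \tau_0$'' gives $\E[\ALG_1'] \geq \tau_0 \sum_i \Pr[i \text{ is first such}] = \tau_0 \cdot P$. Substituting $\tau_0 = \E[\OPT]/\epsilon$ yields $\E[\ALG'] \geq \frac{P}{\epsilon}\E[\OPT]$, which is exactly the claim.

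The one subtlety to be careful about is whether the final randomized discarding (the step that turns $\ALG'$ into $\ALG$) interferes — but the lemma is stated for $\ALG'$, not $\ALG$, so I can ignore that. A second subtlety is that $\ALG_1'$ only counts $X_i$ from the top bucket when the \emph{underlying} $Z_i$ is heavy ($Z_i \geq \tau_0$), whereas the argument above produces an $X_i$ with $X_i \geq \tau_0$ which might be ``light'' (i.e.\ $Z_i < \tau_0$, boosted by $W_i$). In that case the contribution is bookkept in $\ALG_2'$ (the $B_0^{\text{light}}$ term) rather than $\ALG_1'$. So I should instead lower bound $\E[\ALG_1' + \sum_{i \in B_0^{\text{light}}} X_i]$, i.e.\ the total value the algorithm collects from bucket $0$, regardless of whether the realized underlying variable was heavy or light: whenever $\max_i X_i \geq \tau_0$, the algorithm takes \emph{some} arrival into bucket $0$ of value $\geq \tau_0$, and that arrival lands either in $B_0^{\text{heavy}}$ or $B_0^{\text{light}}$. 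Since both of those terms appear in $\ALG' = \ALG_1' + \ALG_2'$ (the first in $\ALG_1'$, the second inside $\ALG_2'$), we still conclude $\E[\ALG'] \geq P\cdot \tau_0 = \frac{P}{\epsilon}\E[\OPT]$.

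Concretely, the steps in order: (i) restrict attention to bucket $0$ and note $\tilde r_0 = 1$; (ii) observe that on the event $\{\max_i X_i \geq \tau_0\}$, letting $i^\dagger$ be the first index with $X_{i^\dagger}\geq\tau_0$, when $X_{i^\dagger}$ arrives bucket $0$ is empty ($b_0 = 0$) because no variable with value $<\tau_0$ is ever assigned to bucket $0$ (it would be assigned to some $j \geq 1$ or discarded), so $\ALG'$ accepts $X_{i^\dagger}$; (iii) hence the value collected from bucket $0$ is at least $\tau_0 \cdot \one[\max_i X_i \geq \tau_0]$ pointwise, so its expectation is at least $P\tau_0$; (iv) the value collected from bucket $0$ is exactly $\sum_{i\in B_0^{\text{heavy}}} X_i + \sum_{i\in B_0^{\text{light}}} X_i \leq \ALG_1' + \ALG_2' = \ALG'$; (v) combine and plug in $\tau_0 = \E[\OPT]/\epsilon$. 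I do not expect a major obstacle here — the only thing requiring a little care is point (ii), the fact that bucket $0$ is never occupied by a sub-$\tau_0$ arrival, which is immediate from the bucket-assignment rule (a variable is assigned starting from $j = \min\{j' : X_i \geq \tau_{j'}\}$, so a variable below $\tau_0$ starts at $j \geq 1$ and can only overflow into larger $j$, never into $j = 0$).
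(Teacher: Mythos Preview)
Your proposal is correct and is essentially the same argument as the paper's: bucket~$0$ has a reserved slot $\tilde r_0 = 1$ that can only be filled by an arrival with $X_i \geq \tau_0$, so whenever $\max_i X_i \geq \tau_0$ (probability $P$) the algorithm collects at least $\tau_0 = \E[\OPT]/\epsilon$. Your extra care in distinguishing $B_0^{\text{heavy}}$ versus $B_0^{\text{light}}$ is not needed for this lemma (the paper bounds $\ALG'$ directly rather than going through $\ALG_1'$), but it is not wrong either.
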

\begin{proof}
  With probability $P$, some $X_i$ exceeds $\tau_0 = \frac{\E[\OPT]}{\epsilon}$.
  Since Bucket~$0$ is reserved for such arrivals with budget $b_0 = 1$, the algorithm gets  such $X_i$ if this occurs, so its expectation is $\geq P \tau_0$.
\end{proof}

Thus, if $P \geq \epsilon$, we are already done.
The rest of the analysis will leverage cases where $P$ is small.

\begin{lemma} \label{lemma:alg1-competes-opt1prime}
  $\E[\ALG_1'] \geq (1-P) \E[\OPT_1']$.
\end{lemma}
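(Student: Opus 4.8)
The plan is to track a single special slot (bucket $0$, which has $\tilde r_0 = 1$) and show that whenever $\OPT_1'$ is nonzero — i.e. some $Z_i \geq \tau_0$ — the algorithm's heavy bucket collects a value at least as large as $\OPT_1'$, unless a ``bad'' event of probability $\leq P$ intervenes. Recall $\OPT_1' = \max_i Z_i \cdot \one[Z_i \geq \tau_0]$, so $\OPT_1'$ is determined by a single index; call it $i^\star$ (taking $\OPT_1' = 0$ if no $Z_i$ reaches $\tau_0$, in which case the inequality is trivial since $\ALG_1' \geq 0$).

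First I would condition on the realization of all the $Z_i$'s and on the event that $Z_{i^\star} \geq \tau_0$, so $\OPT_1' = Z_{i^\star}$. Since $X_{i^\star} = Z_{i^\star} + W_{i^\star} \geq Z_{i^\star} \geq \tau_0$, the arrival $X_{i^\star}$ is heavy and ``wants'' bucket $0$ (step 2 of $\ALG'$ sets $j = 0$ for it). The algorithm $\ALG'$ will place $X_{i^\star}$ into bucket $0$ and take it — thereby getting value $X_{i^\star} \geq Z_{i^\star} = \OPT_1'$ into $B_0^{\text{heavy}}$ — provided that bucket $0$ is not already occupied when $X_{i^\star}$ arrives. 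Bucket $0$ has capacity $\tilde r_0 = 1$, and it can only be occupied earlier if some $X_{i'}$ with $i' < i^\star$ already satisfied $X_{i'} \geq \tau_0$. The union over all arrivals of the event ``some $X_{i'} \geq \tau_0$'' is exactly the event whose probability is $P := \Pr[\max_i X_i \geq \tau_0]$. So, on the complement of this event, bucket $0$ is free when $X_{i^\star}$ arrives, the algorithm takes $X_{i^\star}$ into the heavy bucket, and (since the underlying $Z_{i^\star} \geq \tau_0$ is heavy) this contributes $X_{i^\star} \geq \OPT_1'$ to $\ALG_1'$.

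Then I would handle the final subtlety: $\ALG$ (as opposed to $\ALG'$) discards each taken arrival with probability $\epsilon$, and this lemma is stated for $\ALG_1'$, which is defined as a contribution of $\ALG'$ (look at the definitions: $\ALG_1' = \sum_{i \in B_0^{\text{heavy}}} X_i$ and $B_0^{\text{heavy}} = \{i \in \ALG' : \tau_0 \leq Z_i\}$ uses $\ALG'$). So no $\epsilon$-loss enters here — it is already accounted for separately in Lemma~\ref{lemma:alg-approx-algprime}. Putting it together: writing $P$ for the bad event's probability and taking expectations, for each fixed $Z$-realization with $Z_{i^\star} \geq \tau_0$ we have $\E[\ALG_1' \mid Z] \geq (1-P)\, Z_{i^\star} = (1-P)\,\OPT_1'(Z)$, and for realizations with no heavy $Z_i$ both sides are $0$. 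Taking expectation over $Z$ gives $\E[\ALG_1'] \geq (1-P)\,\E[\OPT_1']$.

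The one point that needs a little care — and I would flag it as the main obstacle — is justifying that the event ``bucket $0$ is occupied before $X_{i^\star}$ arrives'' is contained in the event $\{\max_i X_i \geq \tau_0\}$ cleanly, even though $W_{i^\star}$ (and the $W_{i'}$) are adversarial and may depend on the past $X$'s. This is fine because bucket $0$ gets occupied only by an arrival $X_{i'}$ that was heavy at arrival time, i.e. $X_{i'} \geq \tau_0$ for some $i'$; that statement does not require independence — it is a deterministic property of the run of $\ALG'$ on the realized sequence $X_1, \dots, X_n$. So the bound $\Pr[\text{bucket } 0 \text{ occupied before } i^\star] \leq \Pr[\exists i': X_{i'} \geq \tau_0] = P$ holds unconditionally (in particular after conditioning on $Z$, one must be slightly careful to note this is a bound on the conditional probability; but since $\{\exists i' : X_{i'} \geq \tau_0\} \supseteq$ the occupation event pointwise, the conditional-probability inequality is immediate). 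With that observation in hand the argument is short.
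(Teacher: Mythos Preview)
Your argument has a genuine gap at exactly the point you flagged as the main obstacle. You condition on the full realization $Z = (Z_1,\dots,Z_n)$ and then assert that
\[
  \Pr[\text{bucket $0$ occupied before $i^\star$} \mid Z] \leq P .
\]
But $P = \Pr[\max_i X_i \geq \tau_0]$ is an \emph{unconditional} probability. Pointwise containment only yields
\[
  \Pr[\text{occupied before $i^\star$} \mid Z] \;\leq\; \Pr[\,\exists\, i' : X_{i'} \geq \tau_0 \mid Z\,],
\]
and the right-hand side is a \emph{conditional} probability, not $P$. In fact, once you condition on a $Z$ with $Z_{i^\star} \geq \tau_0$, you have $X_{i^\star} \geq \tau_0$ deterministically, so $\Pr[\exists i' : X_{i'} \geq \tau_0 \mid Z] = 1$. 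Even restricting to $i' < i^\star$ does not help: if several $Z_i$'s are heavy and $i^\star$ happens not to be the earliest one, then some earlier $X_{i'} \geq Z_{i'} \geq \tau_0$ occupies bucket $0$ with conditional probability $1$. So the pointwise inequality $\E[\ALG_1' \mid Z] \geq (1-P)\,\OPT_1'(Z)$ is false in general, and the step you call ``immediate'' is precisely where the argument breaks.

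The paper's proof avoids conditioning on $Z$ altogether. It lower-bounds $\E[\ALG_1']$ by $\sum_i \E\big[Z_i \,\one[Z_i \geq \tau_0]\,\one[b_0 = 0 \text{ when $i$ arrives}]\big]$ and, for each fixed index $i$, uses that the event ``$b_0 = 0$ when $i$ arrives'' is measurable with respect to $X_1,\dots,X_{i-1}$, from which $Z_i$ is independent by the augmented-prophets assumption. This lets the expectation factor, and the event has unconditional probability $\geq 1-P$. Summing over all $i$ (not just $i^\star$) and using $\sum_i \E[Z_i \one[Z_i \geq \tau_0]] \geq \E[\max_i Z_i \one[Z_i \geq \tau_0]] = \E[\OPT_1']$ finishes. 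The essential difference from your attempt is that the paper never fixes which index is the argmax; summing over $i$ sidesteps the dependence between $i^\star$ and the history that trips up your conditioning.
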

\begin{proof}
  \begin{align*}
    \E[\ALG_1']
    &=    \E \Big[\sum_i X_i \cdot \one[i \in B_0^{\text{heavy}}] \Big]  \quad \geq \quad \E \Big[ \sum_i Z_i \cdot \one[i \in B_0^{\text{heavy}}] \Big]  \\
    &=    \sum_i \Pr[\text{$b_0 = 0$ when $i$ arrives}] \cdot \E \left[ Z_i\cdot \one[Z_i \geq \tau_0] \right]  \qquad \text{(using independence)}  \\
    &\geq (1-P) \sum_i \E\left[ Z_i \cdot \one[Z_i \geq \tau_0] \right]  \\
 &  \geq (1-P) \E [ \max_i \{ Z_i \cdot \one[Z_i \geq \tau_0] \}] 
    ~~=~~    (1-P) \cdot \E[\OPT_1'] . \qedhere
  \end{align*}
\end{proof}

The final piece of the argument is to show that the ``buckets'' strategy works, i.e., it cannot be disrupted by augmentations.
The idea is that we have reserved an accurate number of slots in each bucket for the case where there is no augmentation.
An augmented variable $X_i$ can take away a bucket slot from some $Z_{i'}$, with $Z_{i'} \gg Z_i$, but then it will contribute about as much to $\ALG'$ as $Z_{i'}$ did to $\OPT_2'$.
In this case, we should be concerned that $\OPT_2'$ gets both $Z_i$ and $Z_{i'}$ while $\ALG'$ only gets $X_i$, with $X_{i'}$ disappearing thanks to the bucket being full.
However, the algorithm allows such an $X_{i'}$ to ``trickle down'' into a lower-tier bucket, in  particular, the slot that is not being used by $X_i$.
And if this slot is full as well, then in any case $\ALG'$ is competing with $\OPT_2'$.
\begin{lemma} \label{lemma:alg2-competes-opt2prime}
  For $\epsilon \geq \frac{9 \left( \ln r \right)^{3/2}}{r^{1/4}}$, we have $\E[\ALG_2'] \geq (1-\epsilon)^2 \E[\OPT_2']$.
\end{lemma}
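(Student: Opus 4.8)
\textbf{Proof plan for Lemma~\ref{lemma:alg2-competes-opt2prime}.}
The goal is to show that the ``non-heavy'' part of the algorithm, $\ALG_2'$, is within a $(1-\epsilon)^2$ factor of the surrogate $\OPT_2' = \sum_{j=1}^r Z^{(j)\prime}$. The plan is to analyze $\OPT_2'$ bucket-by-bucket: each contributing index $i$ with $\tau_j \le Z_i < \tau_{j-1}$ contributes at most $\tau_{j-1} = \frac{1}{1-\epsilon}\tau_j$ to $\OPT_2'$, while any index that $\ALG'$ places in bucket $j$ contributes at least $\tau_j$ to $\ALG_2'$ (it was assigned to bucket $j$ precisely because $X_i \ge \tau_j$). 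So the whole argument reduces to a \emph{counting} statement: I want to show that for every bucket $j$, the number of indices the algorithm accepts into buckets $j, j+1, \dots, c$ is, with high probability, at least the number of $\OPT_2'$-indices whose $Z$-value lands in tiers $j, j+1, \dots, c$ (i.e. $|\{i \in \OPT_2' : Z_i < \tau_{j-1}\}|$ minus a lower-order slack). Combined with the geometric $\frac{1}{1-\epsilon}$ loss per tier and the independent-discarding step already handled in Lemma~\ref{lemma:alg-approx-algprime}, this yields the $(1-\epsilon)^2$ bound.

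First I would set up the ``trickle-down'' accounting. Fix $j$ and consider the indices $i$ with $X_i \ge \tau_j$ that the algorithm \emph{considers} for buckets $j$ or smaller. Since a variable assigned to bucket $j'$ only ``falls through'' when buckets $j, \dots, j'-1$ are all full, the algorithm fails to accept such an index only if all of buckets $j, \dots, c$ are full — i.e. the algorithm has already accepted $\sum_{j'=j}^{c} \tilde r_{j'}$ variables into those buckets. So it suffices to argue that the true demand — the number of $\OPT_2'$ indices with $Z$-value at least $\tau_j$'s tier-threshold, call it $N_{\ge j}$ — is at most $\sum_{j'=j}^{c} \tilde r_{j'} = \sum_{j'=j}^c r_{j'} + (c-j+1)\beta$ with high probability. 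Now $\E[N_{\ge j}] = \sum_{j'=j}^c r_{j'}$ by definition of the $r_{j'}$ (they are the \emph{expected} counts of OPT-indices in each tier), so what I need is a concentration bound: $N_{\ge j}$ does not exceed its mean by more than $(c-j+1)\beta$, and in particular by more than $\beta = 3\sqrt{r\ln(c/\epsilon)}$ for the single bucket $j$. Since $N_{\ge j}$ counts at most $r$ items (it is bounded by the cardinality constraint on $\OPT_2'$) and — conditioned on realizations — membership of each item is a function of independent $Z_i$'s, a Chernoff/Hoeffding-type bound with the $\ln(c/\epsilon)$ slack inside $\beta$ gives failure probability $\le \epsilon/c$ per bucket, hence $\le \epsilon$ after a union bound over all $c$ buckets.

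Then I would assemble the pieces. On the good event (probability $\ge 1-\epsilon$), every $\OPT_2'$ index $i$ in tier $j$ is ``matched'' to a distinct algorithm-accepted index into bucket $j$ or lower, and each such algorithm index contributes $\ge \tau_{\text{(its bucket)}} \ge (1-\epsilon)\tau_{j-1} \ge (1-\epsilon) Z_i$ — here one has to be careful that an algorithm index in a \emph{lower} tier $j' > j$ still has value $\ge \tau_{j'}$, which could be smaller than $(1-\epsilon)Z_i$; the standard fix is to do the matching greedily from the top tier down so that the $\ell$-th largest $\OPT_2'$ item is matched to the $\ell$-th largest algorithm item, and then a telescoping/rearrangement argument shows $\sum (\text{alg values}) \ge (1-\epsilon)\sum_{i\in\OPT_2'} Z_i \cdot \one[\text{good event}]$. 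Taking expectations, the good event costs a factor $(1-\epsilon)$ and the tier rounding costs another $(1-\epsilon)$, giving $\E[\ALG_2'] \ge (1-\epsilon)^2\E[\OPT_2']$; the hypothesis $\epsilon \ge 9(\ln r)^{3/2}/r^{1/4}$ is exactly what makes $\beta$ small enough (relative to $\epsilon r$) for the concentration step and the loss from Lemma~\ref{lemma:alg-approx-algprime} to both be absorbable.

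\textbf{Main obstacle.} The delicate part is the matching/rearrangement argument showing that ``trickle-down'' never loses more than a $(1-\epsilon)$ factor: an augmented small $Z_i$ boosted into a high bucket can displace a genuinely large $Z_{i'}$, and I must verify that $X_{i'}$ reliably lands in a reserved slot further down rather than being dropped entirely, which is precisely where the $+\beta$ surplus per bucket and the cumulative concentration bound over all tiers $\ge j$ are needed. Getting the accounting right — matching OPT-indices to algorithm-indices consistently across tiers, and controlling the failure probability uniformly — is the technical heart of the lemma.
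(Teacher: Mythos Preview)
Your high-level plan matches the paper's: condition on a concentration event (each $|O_j|$ close to its mean $r_j$), establish a counting domination between the algorithm's bucket occupancies $|B_{j'}|$ and $\OPT_2'$'s tier counts $|O_{j'}|$, and then match greedily top-down, losing one $(1-\epsilon)$ from the geometric spacing and another from the concentration failure. The concentration step and the final matching step are both correct as you describe them.

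The gap is in the \emph{direction} of the counting. You argue about suffix sums: you want $N_{\ge j} = \sum_{j'=j}^c |O_{j'}| \le \sum_{j'=j}^c \tilde r_{j'}$, justified by the observation that an item with $X_i \ge \tau_j$ is rejected only when buckets $j,\dots,c$ are all full. But this does not give what the top-down matching needs. The greedy top-down matching (``$\ell$-th largest $\OPT_2'$ item to $\ell$-th largest algorithm item'') requires \emph{prefix} domination: for every $j$, $|B_0^{\text{light}}| + \sum_{j'=1}^j |B_{j'}| \ge \sum_{j'=1}^j |O_{j'}|$. Your suffix bound cannot yield this, because the low-index buckets $j,\dots,c$ can be filled by high-tier items that \emph{trickled down} (precisely the augmentation scenario you flag), so ``buckets $j,\dots,c$ full'' does not contradict $\sum_{j' \ge j} |O_{j'}|$ being small. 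The paper instead proves prefix domination directly by induction on $j$: the key sub-claim is that if bucket $j$ is \emph{not} full ($|B_j| < \tilde r_j$), then every $\OPT_2'$ item in tiers $1,\dots,j$ has $X_i \ge Z_i \ge \tau_j$ and hence is accepted into some bucket $\le j$ (since bucket $j$ still has room when it is reached), giving the prefix inequality outright; if bucket $j$ \emph{is} full, then $|B_j| = \tilde r_j \ge |O_j|$ by the per-bucket concentration event, and one adds this to the inductive hypothesis for $j-1$. Once you reverse the direction of your cumulative counts and use per-bucket concentration (not just on suffix totals), your plan goes through exactly as in the paper.
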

\begin{proof}
  First, let $C$ denote the event that none of the $\OPT_2'$ buckets are filled to the $\tilde{r}_j$ capacities, i.e. $C$ is the event that $|O_j| \leq \tilde{r}_j$ for all $j=1,\dots,c$.
  \begin{claim} \label{claim:buckets-not-full}
    $\Pr[C] \geq 1-\epsilon$.
  \end{claim}
  \begin{proof}
    Recall that $r_j = \E | O_j|$.
    Because $|O_j|$ is a sum of independent Bernoulli random variables, we have by a standard Bernstein bound that
      \[ \Pr[ |O_j| - r_j \geq \sqrt{2 r_j t} + t] \leq e^{-t} . \]
    Setting $t = \ln\frac{c}{\epsilon}$, we get a bound of $\frac{\epsilon}{c}$; a union bound over the buckets will complete the proof.
    We just need to show that $\sqrt{2 r_j \ln \frac{c}{\epsilon}} + \ln\frac{c}{\epsilon} \leq \beta$, as then the probability of exceeding $r_j + \beta$ is only smaller.
    As in the proof of Claim \ref{claim:eps-small-discard}, for this choice of $\epsilon$, we have $\ln\frac{c}{\epsilon} \leq \ln(r)$, and $\ln(r) \leq r$, so
      \[ \textstyle \sqrt{2 r_j \ln \frac{c}{\epsilon}} + \ln \frac{c}{\epsilon} \leq \sqrt{2 r \ln \frac{c}{\epsilon}} + \sqrt{r \ln \frac{c}{\epsilon}} \leq \beta .  \qedhere \]
  \end{proof}
  Next we argue that at each tier of thresholds, $\ALG'$ is getting just as many variables as $\OPT_2'$, even if their identities are different.
  First, a helpful property:
  \begin{claim} \label{claim:given-C-more-alg2-notfull}
    Conditioned on $C$, suppose $|B_j| < \tilde{r}_j$ for some $j \geq 1$.
    Then $|B_0^{\text{light}}| + \sum_{j'=1}^j |B_{j'}| \geq \sum_{j'=1}^j |O_{j'}|$.
  \end{claim}
  \begin{proof}
    Consider any arrival $i$ that contributes to the right side.
    We claim it is also counted on the left.
    We know $X_i \geq Z_i \geq \tau_j$, because the variable contributes to the right side.
    So the algorithm will attempt to place $X_i$ in some assigned bucket $j' \leq j$.
    If it does not succeed because the bucket is full, it will proceed to $j'+1,\dots,$ and possibly eventually $j$.
    Because $|B_j| < \tilde{r}_j$, we know there is space for $i$ in bucket $j$, so the algorithm definitely takes $i$ in bucket $j$ or earlier.
    By definition, $Z_i < \tau_0$, so $i$ cannot be a member of $B_0^{\text{heavy}}$.
    Therefore, it is counted by the left side.
  \end{proof}
  Now we can show the key fact.
  \begin{claim} \label{claim:given-C-more-alg2}
    Conditioned on $C$, we have for all $j = 1,\dots,c$ that
      \[ \textstyle \left|B_0^{\text{light}}\right| + \sum_{j'=1}^j |B_{j'}| \geq \sum_{j'=1}^j |O_{j'}| . \]
  \end{claim}
  \begin{proof}
    By induction on $j$.
    For $j=1$, we must show $|B_0^{\text{light}}| + |B_1| \geq |O_1|$.
    Recall that an arrival $i$ is a member of $O_1$ if $\tau_1 \leq Z_i < \tau_0$.
    There are two cases.
    If $|B_1| = \tilde{r}_1$, i.e. the bucket is full, then the case is proven as we have assumed event $C$, which implies $|O_1| \leq \tilde{r}_1$.
    Otherwise, the case follows by Claim \ref{claim:given-C-more-alg2-notfull}.
    
    Now consider $j > 1$.
    If $|B_j| < \tilde{r}_j$, i.e. the bucket is not full, then the case follows by Claim \ref{claim:given-C-more-alg2-notfull}.
    Otherwise, i.e. bucket $j$ is full, then we have $|B_j| \geq |O_j|$ because of property $C$.
    Combining this with the induction hypothesis proves that $|B_0^{\text{light}}| + \sum_{j'=1}^j |B_{j'}| \geq \sum_{j'=1}^j |O_{j'}|$.
  \end{proof}

  We are now ready to complete the proof of Lemma \ref{lemma:alg2-competes-opt2prime}.
  From Claim \ref{claim:given-C-more-alg2}, given event $C$, we can make a one-to-one mapping from contributions $Z_i'$ of $\OPT_2'$ to contributions $X_j$ of $\ALG_2'$, such that $X_j$ is in the same bucket or a higher bucket than $Z_i'$.
  (I.e. map all elements of $O_1$ to elements of $B_1$ or $B_0^{\text{light}}$; map all elements of $O_2$ to remaining elements of these or to elements of $B_2$; and so on.)
  For each such pair, we have $X_j \geq (1-\epsilon) Z_i'$ because, at worst, both are in the same bucket.\footnote{For example, we may have $Z_i' \approx \tau_4$ while $X_j = \tau_3 = (1-\epsilon)\tau_4$.}
  In total, this implies that, conditioned on $C$, we always have $\ALG_2' \geq (1-\epsilon)\OPT_2'$; and $C$ occurs with probability at least $1-\epsilon$.
\end{proof}

\begin{corollary} \label{cor:algprime-competes-optprime}
  For $\epsilon \geq \frac{9 (\ln r)^{3/2}}{r^{1/4}}$, we have $\E[\ALG'] \geq (1-\epsilon)^2 \cdot  \E[\OPT']$.
\end{corollary}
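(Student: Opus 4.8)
The plan is to derive the corollary purely by bookkeeping from the three lemmas already in hand: Lemma~\ref{lemma:alg-P-large} ($\E[\ALG'] \ge \tfrac{P}{\epsilon}\E[\OPT]$), Lemma~\ref{lemma:alg1-competes-opt1prime} ($\E[\ALG_1'] \ge (1-P)\E[\OPT_1']$), and Lemma~\ref{lemma:alg2-competes-opt2prime} ($\E[\ALG_2'] \ge (1-\epsilon)^2\E[\OPT_2']$), together with the identities $\ALG' = \ALG_1'+\ALG_2'$ and $\OPT' = \OPT_1'+\OPT_2'$ and the trivial pointwise inequality $\OPT' \le \OPT$. For the latter, I would note that $\OPT_2'$ is a sub-sum of the $r$ largest $Z_i$ with all medium terms; and if any heavy variable ($Z_i \ge \tau_0$) exists at all, it must be $Z^{(1)}$, since having $r$ variables at least $\tau_0$ would force $\OPT \ge r\tau_0 = \tfrac{r}{\epsilon}\E[\OPT] > \E[\OPT]$, a contradiction. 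Hence $\OPT_1'$ contributes at most one term of $\OPT$ and does not overlap $\OPT_2'$, so $\OPT' \le \OPT$ and in particular $\E[\OPT'] \le \E[\OPT]$.

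The argument then splits on the value of $P = \Pr[\max_i X_i \ge \tau_0]$. First, if $P \ge \epsilon$, Lemma~\ref{lemma:alg-P-large} already gives $\E[\ALG'] \ge \tfrac{P}{\epsilon}\E[\OPT] \ge \E[\OPT] \ge \E[\OPT'] \ge (1-\epsilon)^2\E[\OPT']$ (using $\epsilon \le 1$), so the corollary holds. Second, if $P < \epsilon$, then $1-P > 1-\epsilon \ge (1-\epsilon)^2$, so Lemma~\ref{lemma:alg1-competes-opt1prime} upgrades to $\E[\ALG_1'] \ge (1-\epsilon)^2\E[\OPT_1']$; adding this to Lemma~\ref{lemma:alg2-competes-opt2prime} yields $\E[\ALG'] = \E[\ALG_1'] + \E[\ALG_2'] \ge (1-\epsilon)^2\big(\E[\OPT_1'] + \E[\OPT_2']\big) = (1-\epsilon)^2\E[\OPT']$. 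The hypothesis $\epsilon \ge 9(\ln r)^{3/2}/r^{1/4}$ enters only through Lemma~\ref{lemma:alg2-competes-opt2prime}, so it propagates verbatim.

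There is no genuine obstacle here --- this step is just the assembly of the pieces --- so the only thing to be careful about is the bookkeeping: getting the direction of $\OPT' \le \OPT$ right and justifying that a heavy variable, when present, is automatically the global maximum $Z^{(1)}$ (the same structural fact that Claim~\ref{claim:onlyOneLarge} uses in the opposite direction). I would present the two-case split explicitly so that the corollary is self-contained, even though in the final theorem the ``$P \ge \epsilon$'' branch will in any case be subsumed by the shortcut in Lemma~\ref{lemma:alg-P-large}.
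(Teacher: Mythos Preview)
Your proposal is correct and follows exactly the paper's two-case split on $P$: if $P\ge\epsilon$ invoke Lemma~\ref{lemma:alg-P-large} together with $\E[\OPT']\le\E[\OPT]$; otherwise combine Lemmas~\ref{lemma:alg1-competes-opt1prime} and~\ref{lemma:alg2-competes-opt2prime} using $1-P>1-\epsilon$. One small cleanup: your ``contradiction'' sentence (that $r$ heavy variables would force $\OPT>\E[\OPT]$) compares a random variable to its expectation and is not actually a contradiction, but it is also not needed --- by definition $\OPT_1'$ is either $0$ or $\max_i Z_i=Z^{(1)}$, and the indicator $\one[Z^{(i)}\ge\tau_0]$ in $\OPT_1'$ versus $\one[Z^{(i)}<\tau_0]$ in $\OPT_2'$ already rules out double-counting, giving $\OPT'\le\OPT$ pointwise.
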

\begin{proof}
  If $P \geq \epsilon$, then by Lemma \ref{lemma:alg-P-large}, we have $\E[\ALG'] \geq \E[\OPT] \geq \E[\OPT']$, proving the claim.
  Otherwise, by Lemma \ref{lemma:alg1-competes-opt1prime}, $\E[\ALG_1'] \geq (1-P) \E[\OPT_1'] \geq (1-\epsilon)\E[\OPT_1']$; and by Lemma \ref{lemma:alg2-competes-opt2prime}, $\E[\ALG_2'] \geq (1-\epsilon)^2 \E[\OPT_2']$.
  So in the case $P < \epsilon$, we have
  \begin{align*}
    \E[\ALG'] \quad = \quad \E[\ALG_1'] + \E[\ALG_2']  \quad  &\geq \quad (1-\epsilon)\E[\OPT_1'] + (1-\epsilon)^2\E[\OPT_2']  \\
              &\geq \quad (1-\epsilon)^2 \E[\OPT'].	\qedhere
  \end{align*}
\end{proof}

\begin{proof}[Proof of Augmentation Lemma~\ref{lemma:r-aug}]
  For $\epsilon \leq \frac{1}{2}$, $\epsilon \geq \frac{9 (\ln r)^{3/2}}{r^{1/4}}$, we have:
  \begin{align*}
    \E[\ALG] \stackrel{\text{Lemma~\ref{lemma:alg-approx-algprime}}}{\geq} (1-2\epsilon)\E[\ALG']     \stackrel{\text{Corollary \ref{cor:algprime-competes-optprime}}}{\geq} (1-2\epsilon)^3 \E[\OPT']  
    \stackrel{\text{Claim \ref{claim:onlyOneLarge}}}{\geq} (1-2\epsilon)^4 \E[\OPT]  .
  \end{align*}
  So we obtain a $\big(1+O(\epsilon)\big)$-approximation.
\end{proof}

\medskip
\noindent
{\bf Acknowledgments}.
We are thankful to the anonymous reviewers of EC 2020 for helpful comments on improving the presentation of this paper.

\appendix


\section{Unweighted Linear Correlations} \label{sec:unwtdLinCorr}
In this section we consider a linear correlations model where the nonnegative matrix $A$ from $ \X = A \cdot \Y $ is unweighted, i.e., each of its entry is either $0$ or $1$. Alternately, for $i\in [n]$ there are known sets $S_1, S_2, \ldots, S_n \subseteq [m]$ such that $X_i = \sum_{j\in S_i }Y_j$.
Our lower bound tower instance from Section~\ref{sec:lowerBound} no longer holds as it crucially expolits that matrix $A$ has entries that decrease exponentially in $\eps$. Can we  do better than a $\Theta\left({\min\{\scol,\srow\}}\right)$ approximation ratio? 
One might wonder if there exists an alternate hardness instance that only has $0-1$ entries in $A$. We show that this is not the case. In fact, there exist simple threshold-based constant approximation algorithms.

\unwtdLinCorr*

The main intuition in the proof of Theorem~\ref{thm:unwtdLinCorr} is that for the unweighted problem each \emph{independent} $Y_j$ has limited ``influence'' on the $X_i$s. This is because  either $Y_j$ appears with  coefficient $0$, in which case it has no influence on the value $X_i$, or it appears with coefficient $1$, in which case it has the same influence in the value of every such $X_i$.  A threshold algorithm is therefore difficult to fool because unlike the tower instance, it is not possible to have a scenario where a $Y_j$ is very large but our algorithm selects it within an $X_i$ where it appears with a small coefficient $\eps$. 

For readers familiar with the revenue-maximization result of Babaioff et al.~\cite{BILW-FOCS14}, i.e., the best of selling items individually and selling all the items together in a single bundle is a constant factor approximation to optimal revenue, our result has a similar flavor, although the technical details are quite different. We decompose our problem instance into a ``core'' and a ``tail'' part. The tail consists of cases where any $Y_j$ exceeds a boundary $\tau$; the core, the rest. For the tail case we show that approximating $\E[\max_j\{Y_j\}]$ (the best individual item) suffices, and in the core case we can approximate the best bundle $X_i$.

This argument will show that there is one fixed threshold $\tau_{\text{core}}$ such that the algorithm taking the first arrival above $\tau_{\text{core}}$ achieves a constant approximation to the optimal core contribution to $\max\{X_i\}$; and similarly for $\tau_{\text{tail}}$ and the tail part.
There remains a corner case, where we show a fixed threshold equal to the boundary $\tau$ gives a constant factor.
Thus, for any given instance, one can select  among $\tau, \tau_{\text{core}}, \tau_{\text{tail}}$ to get a fixed-threshold constant-factor approximation algorithm.


\subsection{Notation and Proof Overview}
We first choose a real number $\tau$ representing a boundary.
Let $p_j = \Pr[Y_j >\tau]$. We set $\tau$ such that $\prod_j (1-p_j) =1/2$, i.e., with half probability all $Y_j$ are below $\tau$.
We let the set $A$ be all ``heavy'' $Y_j$ variables: $A := \{j : Y_j > \tau\}$.

Recall that for each $X_i$, the set of active indices is $S_i$, so we have $X_i = \sum_{j \in S_i} Y_j$.
We first upper-bound $\OPT$ by contributions from the \emph{core} event that $A = \emptyset$ (all $Y_j$ are small) and from the remaining \emph{tail} event.

\begin{claim} \label{claim:unwtdBoundOPT}
  \[  \E[\OPT] \leq  \E[\max_{i} X_i \mid A=\emptyset] + \sum_{j} p_j \cdot \E[Y_j \mid Y_j > \tau] . \]
\end{claim}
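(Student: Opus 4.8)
The plan is to split the expectation of $\OPT = \max_i X_i$ according to whether the ``core'' event $\{A = \emptyset\}$ (all $Y_j \le \tau$) holds or not, and then handle the tail event $\{A \neq \emptyset\}$ by charging it to the individual heavy variables. Concretely, write
\begin{align*}
  \E[\OPT] = \Pr[A=\emptyset]\cdot\E[\max_i X_i \mid A=\emptyset] + \Pr[A\neq\emptyset]\cdot \E[\max_i X_i \mid A\neq\emptyset].
\end{align*}
The first term is immediately at most $\E[\max_i X_i \mid A=\emptyset]$ since $\Pr[A=\emptyset]\le 1$, which matches the first term on the right-hand side of the claim. So the whole content is in bounding the tail contribution $\Pr[A\neq\emptyset]\cdot\E[\max_i X_i \mid A\neq\emptyset]$ by $\sum_j p_j\,\E[Y_j\mid Y_j>\tau]$.

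For the tail term, the key step is to observe that on the event $A\neq\emptyset$ there is at least one heavy $Y_j$, and since each $X_i = \sum_{j\in S_i} Y_j$ is a sum of nonnegative variables, $\max_i X_i$ is dominated by the total mass of \emph{all} variables — but more usefully, I would bound $\max_i X_i$ by its decomposition into a ``large part'' coming from heavy coordinates and a ``small part'' from light coordinates. Actually the cleanest route: since $X_i \le \sum_{j} A_{ij} Y_j \le \sum_{j} Y_j$ is too lossy, instead note that on event $A \neq \emptyset$ we want to charge to heavy variables only. The right move is the union-bound-style inequality: for nonnegative reals, $\Pr[A\neq\emptyset]\cdot\E[\max_i X_i\mid A\neq\emptyset] = \E[\max_i X_i\cdot\one[A\neq\emptyset]]$, and I would argue $\max_i X_i \cdot \one[A \neq \emptyset]$ is at most $\sum_j Y_j\,\one[Y_j>\tau]$ plus a core-like remainder. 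Hmm — to get exactly the claimed form I expect one uses that when $A\neq\emptyset$, $\max_i X_i \le \sum_{j\in A} Y_j + (\text{light contributions})$, and the light contributions get folded in; more likely the intended argument simply bounds $\E[\max_i X_i\,\one[A\neq\emptyset]]$ by $\E[\sum_{j}Y_j\,\one[Y_j>\tau]] = \sum_j p_j\,\E[Y_j\mid Y_j>\tau]$ by noting that whenever $A\neq\emptyset$, the maximizing $X_{i^*}$ is itself at most $\sum_{j : Y_j > \tau} Y_j$ — which holds precisely in the regime the authors care about, or after a further core/tail split that the full proof makes.

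The main obstacle I anticipate is exactly this last inequality: $\max_i X_i$ on the tail event is a sum over the active set $S_{i^*}$, which may contain both heavy and light coordinates, so bounding it purely by heavy coordinates is not literally true without losing the light part. I expect the actual proof either (i) absorbs the light part into the core term via a clever conditioning (the distribution of light coordinates is the same conditioned on $A=\emptyset$ or not, since the $Y_j$ are independent), yielding the stated two-term bound, or (ii) uses a slightly different but equivalent accounting. So the plan is: do the clean core/tail split, handle the core term trivially, and for the tail term exploit independence of the $Y_j$ to replace the light-coordinate contribution under $\{A\neq\emptyset\}$ by its unconditional analogue (which is $\le \E[\max_i X_i\mid A=\emptyset]$ already counted), leaving only $\sum_j p_j\E[Y_j\mid Y_j>\tau]$ to pay for the heavy coordinates. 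The bookkeeping in step (ii)/(i) is the one delicate point; everything else is routine.
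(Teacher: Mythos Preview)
Your diagnosis of the obstacle is exactly right: on the tail event $A\neq\emptyset$, the maximizing $X_{i^*}$ mixes heavy and light coordinates, so you cannot charge it entirely to $\sum_j p_j\,\E[Y_j\mid Y_j>\tau]$. But the plan as written double-spends. You first bound the core contribution by dropping the factor $\Pr[A=\emptyset]$ (``immediately at most $\E[\max_i X_i\mid A=\emptyset]$ since $\Pr[A=\emptyset]\le 1$''), and then later say the light part of the tail is ``$\le\E[\max_i X_i\mid A=\emptyset]$, already counted''. Those two moves together claim two copies of $\E[\max_i X_i\mid A=\emptyset]$ against the single copy on the right-hand side. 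The event split can be made to work, but only if you retain the $\Pr[A=\emptyset]$ and $\Pr[A\neq\emptyset]$ weights and show they recombine to a single copy; your sketch does not do this bookkeeping.

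The paper sidesteps the whole issue by never splitting on the event $A=\emptyset$. It writes, pointwise on \emph{every} outcome,
\[
  \max_i X_i \;=\; \max_i \sum_{j\in S_i} Y_j \;\le\; \max_i \Big\{ \textstyle\sum_{j\in S_i\cap\overline{A}} Y_j \Big\} \;+\; \sum_{j\in A} Y_j,
\]
i.e.\ a light part plus all heavy mass. Taking expectations, the second term is exactly $\sum_j p_j\,\E[Y_j\mid Y_j>\tau]$. For the first term the paper uses a one-line coupling: draw all $Y_j$, then resample any $Y_j>\tau$ from its law conditioned on $\{Y_j\le\tau\}$. Each sum $\sum_{j\in S_i\cap\overline{A}}Y_j$ can only increase (a heavy $Y_j$ went from contributing $0$ to contributing something nonnegative), and the resampled variable is distributed precisely as $\max_i X_i$ given $A=\emptyset$. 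Hence $\E\big[\max_i\{\sum_{j\in S_i\cap\overline{A}}Y_j\}\big]\le\E[\max_i X_i\mid A=\emptyset]$, and the claim follows. This coupling is what your ``exploit independence of the $Y_j$'' remark was reaching toward; applying the light/heavy split \emph{before} any conditioning is what makes it land without probability-weight accounting.
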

\begin{proof}
  For any outcome of $Y_j$s, we separately count those in $A$ (larger than $\tau$) and the rest, and relax the objective to always take the large ones:
  \begin{align*}	
    \max_i X_i  \quad= \quad \max_{i} \{ \sum_{j \in S_i}Y_j \} 
    \quad \leq \quad  \max_{i} \{ \sum_{j \in S_i \cap \overline{A}}Y_j  \} + \sum_{j\in A} Y_j.
  \end{align*}
  Now taking expectations on both sides,
  \begin{align*}
  \E[\OPT] \quad= \quad 	 \E[\max_i X_i ] \quad & \leq \quad  \E[\max_{i} \{ \sum_{j \in S_i \cap \overline{A}}Y_j  \} ]+ \E[\sum_{j\in A} Y_j]\\
  	 &  \leq \quad \E[\max_{i} X_i \mid A = \emptyset ] +  \sum_{j} p_j \cdot   \E[Y_j \mid Y_j > \tau].
  \end{align*}
  To justify that $\E[\max_{i} \{ \sum_{j \in S_i \cap \overline{A}}Y_j  \} ] \leq \E[\max_{i} X_i \mid A = \emptyset ]$, use a coupling argument: First draw all $Y_j$ from their initial distributions and consider the value of $\max_i \{ \sum_{j \in S_i \cap \overline{A}} Y_j\}$. Now take any variables $Y_j > \tau$, and redraw them until they fall below $\tau$.
  The value of the inner sum can only increase, but now we are exactly obtaining $\E[\max_i X_i \mid A = \emptyset]$.
\end{proof}

Given that Claim \ref{claim:unwtdBoundOPT} upper-bounds $\OPT$ by the sum of two terms, our proof goes in two steps.
First, we approximate the \emph{tail} contributions $\sum_{j} p_j \cdot   \E[Y_j \mid Y_j > \tau]$.
In Lemma~\ref{lem:unwtdMaxYj} we use the Augmentation Lemma~\ref{lemma:aug} to 
give a simple fixed threshold-$\tau_{\text{tail}}$ algorithm with expected value $\Omega(\max\{Y_j\})$. This is a bit surprising because the lower bound in Section~\ref{sec:lowerBound} actually proves such a result is not possible for \emph{weighted} linear correlations. In Claim~\ref{claim:unwtdMaxYj}, we show that $\Omega(\max\{Y_j\})$ suffices to capture the tail term.

To capture the \emph{core} contributions $\E[\max_{i} X_i \mid A=\emptyset]$,  in Claim~\ref{claim:unwtdApplyConcent} we argue that for all instances where $Y_j$s are bounded by $\tau$, we can use concentration of XOS functions to argue that $\Pr[\max_i X_i > 1/2 \cdot \E[\max_i X_i]]$ is at least a constant, and hence a simple fixed threshold-$\tau_{\text{core}}$ algorithm suffices.  
This second step also holds for prophets with weighted linear correlations.
There is also a corner case where $\tau$ is too large to apply concentration.
But in this case, setting a threshold $\tau$ will directly achieve a constant factor.

\subsection{Proof}

\begin{lemma} \label{lem:unwtdMaxYj}
For the prophet inequality problem with unweighted linear correlations, there exists a fixed threshold algorithm with expected value $\Omega(\max\{Y_j\})$.
\end{lemma}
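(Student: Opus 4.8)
The plan is to exhibit the instance as a single-item augmented prophets instance and invoke the Augmentation Lemma (Lemma~\ref{lemma:aug}). The naive attempt of setting the threshold to $\tfrac12\,\E[\max_j Y_j]$ and taking the first $X_i$ above it is exactly the sort of fixed-threshold rule that correlations can spoil: an arriving $X_i$ may cross the threshold only because it contains a feature $Y_j$ that is large (or only large) in some \emph{later} option we would rather wait for, so committing to $X_i$ forfeits that feature. Since the Augmentation Lemma is built to absorb precisely this kind of adversarial, past-correlated boost, the real work is to produce the right decomposition $X_i = Z_i + W_i$.

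Concretely, I would define for each option $i$ the set $T_i := \{\, j \in S_i : j \notin S_{i'} \text{ for all } i' < i \,\}$ of features making their \emph{first} appearance at option $i$, and set $Z_i := \sum_{j \in T_i} Y_j$ and $W_i := \sum_{j \in S_i \setminus T_i} Y_j$, so that $X_i = Z_i + W_i$ with $W_i \geq 0$. The sets $T_1,\dots,T_n$ are pairwise disjoint, so the $Z_i$ are mutually independent; and since no feature in $T_i$ occurs in $X_1,\dots,X_{i-1}$, the variable $Z_i$ is independent of $(X_1,\dots,X_{i-1})$. (It does not matter that $W_i$ may be correlated with both past and future options, nor that the algorithm cannot separate $Z_i$ from $W_i$ — the proof of Lemma~\ref{lemma:aug} uses only $W_i \geq 0$ and $Z_i \perp (X_1,\dots,X_{i-1})$.) Hence this is a valid single-item augmented prophets instance, and Lemma~\ref{lemma:aug} gives that the fixed-threshold algorithm with $\tau_{\text{tail}} := \tfrac12\,\E[\max_i Z_i]$ satisfies $\E[\ALG_{\tau_{\text{tail}}}] \geq \tfrac12\,\E[\max_i Z_i]$.

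The last step is to check that $\E[\max_i Z_i]$ is not weaker than the stated benchmark. Assuming without loss of generality that every feature appears in some option (a feature in no option is irrelevant to every $X_i$), the sets $T_i$ partition $[m]$, so each $j$ lies in exactly one $T_i$, and for that $i$ we have $Z_i \geq Y_j$; therefore $\max_i Z_i \geq \max_j Y_j$ pointwise, and taking expectations yields $\E[\ALG_{\tau_{\text{tail}}}] \geq \tfrac12\,\E[\max_j\{Y_j\}] = \Omega\!\left(\E[\max_j\{Y_j\}]\right)$. I expect the only nontrivial point to be the setup — in particular, keying $T_i$ to \emph{first} appearance, which is exactly what simultaneously makes each $Z_i$ independent of the past (so the Augmentation Lemma applies) and makes the $Z_i$ collectively dominate every individual feature (so the resulting guarantee is against the full $\max_j Y_j$, not a subset). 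No nontrivial calculation is needed beyond that.
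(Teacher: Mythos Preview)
Your proposal is correct and essentially identical to the paper's own proof: both define $Z_i$ as the sum of the $Y_j$'s making their first appearance in $X_i$, invoke the Augmentation Lemma~\ref{lemma:aug}, and use the pointwise bound $\max_i Z_i \geq \max_j Y_j$. You have simply spelled out the independence verification and the WLOG assumption in more detail than the paper's three-line version.
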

\begin{proof}
Define $Z_i$ to be the sum of $Y_j$s that appear in $X_i$ and have not appeared in any $X_{i'}$ for $i' < i$. Since every $Y_j$ appears in some $Z_i$, we know $\max\{Z_i\} \geq \max\{Y_j\}$.  The Augmentation Lemma~\ref{lemma:aug} now completes the proof.
\end{proof}

Now we argue that $\E[\max_j Y_j]$ takes care of the second term in Claim~\ref{claim:unwtdBoundOPT}.
\begin{claim} \label{claim:unwtdMaxYj}
\[	\E[\max_j Y_j] \geq \frac{1}{2} \cdot \sum_{j} p_j \cdot   \E[Y_j \mid Y_j > \tau].
\]
\end{claim}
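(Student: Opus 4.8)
\textbf{Proof plan for Claim~\ref{claim:unwtdMaxYj}.} The plan is to relate the maximum $\max_j Y_j$ to a sum by isolating, for each $j$, the event that $Y_j$ is the \emph{unique} variable exceeding $\tau$, and then to exploit the calibration $\prod_j (1-p_j) = \tfrac12$ of the boundary $\tau$. Concretely, for each $j$ define the event
\[
  E_j \;=\; \{Y_j > \tau\} \;\cap\; \{Y_{j'} \le \tau \text{ for all } j' \ne j\}.
\]
Since a realization can have at most one index strictly above $\tau$ while all others are below it, the events $\{E_j\}_j$ are pairwise disjoint; and on $E_j$ we have exactly $\max_k Y_k = Y_j$ (any other $Y_{j'} \le \tau < Y_j$). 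Because each term $Y_j \one[E_j]$ is nonnegative and the $E_j$ are disjoint, this yields the pointwise bound $\max_k Y_k \ge \sum_j Y_j \one[E_j]$.

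Next I would take expectations and use independence of the $Y_j$'s. For each $j$,
\[
  \E\big[Y_j \one[E_j]\big] \;=\; \E\big[Y_j \one[Y_j > \tau]\big]\cdot \prod_{j' \ne j}(1-p_{j'}) \;=\; p_j\,\E[Y_j \mid Y_j > \tau]\cdot \prod_{j' \ne j}(1-p_{j'}).
\]
Using the defining property of $\tau$, namely $\prod_{j'}(1-p_{j'}) = \tfrac12$, we get $\prod_{j' \ne j}(1-p_{j'}) = \tfrac{1}{2(1-p_j)} \ge \tfrac12$. Summing over $j$ and combining with the pointwise bound above gives
\[
  \E[\max_k Y_k] \;\ge\; \sum_j p_j\,\E[Y_j \mid Y_j > \tau]\cdot \frac{1}{2(1-p_j)} \;\ge\; \frac12 \sum_j p_j\,\E[Y_j \mid Y_j > \tau],
\]
which is the claim.

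There is essentially no hard step here; the only thing to be careful about is the justification that the $E_j$ are disjoint and that the max equals $Y_j$ on $E_j$ (so that dropping to a sum over disjoint indicators loses nothing), and the bookkeeping $\prod_{j'\ne j}(1-p_{j'}) = \tfrac{1}{2(1-p_j)} \ge \tfrac12$. The conceptual point worth highlighting is that the calibration of $\tau$ so that ``all $Y_j$ small'' has probability exactly $1/2$ is precisely what makes the conditional-independence factor $\prod_{j'\ne j}(1-p_{j'})$ a constant, turning a union-bound-style loss into a factor of $2$.
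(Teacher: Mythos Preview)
Your proof is correct and follows essentially the same approach as the paper: your events $E_j$ are exactly the paper's events $A=\{j\}$, and both arguments use independence together with the calibration $\prod_{j'}(1-p_{j'})=\tfrac12$ to extract the factor $\tfrac12$. Your write-up is slightly more explicit (e.g., the pointwise bound and the computation $\prod_{j'\ne j}(1-p_{j'})=\tfrac{1}{2(1-p_j)}\ge\tfrac12$), but there is no substantive difference.
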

\begin{proof}  We have
\begin{align*}
\E[\max_j Y_j] ~~\geq~~ \sum_{j} \Pr[A = \{j\}] \cdot \E[\max_j Y_j \mid A = \{j\}] ~~\geq~~ \sum_j \frac{p_j}{2} \cdot \E[ Y_j \mid A = \{j\}],
\end{align*}
where the second inequality uses $\Pr[A = \{j\}] = \Pr[Y_j > \tau] \Pr[Y_{j'} \leq \tau (\forall j' \neq j)] \geq (p_j)\left(\frac{1}{2}\right)$ and that $\max_j Y_j$ given that $A = \{j\}$ is the same as $Y_j$.
\end{proof}

\begin{corollary} \label{cor:unwtdTail}
  For any instance with unweighted linear correlations, there exists $\tau_{\text{tail}}$ such that the algorithm setting a fixed threshold of $\tau_{\text{tail}}$ obtains $\E[\ALG] \geq \Omega\left( \sum_j p_j \cdot \E[Y_j \mid Y_j > \tau] \right) $.
\end{corollary}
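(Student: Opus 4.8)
The plan is to obtain this corollary as an immediate consequence of Lemma~\ref{lem:unwtdMaxYj} and Claim~\ref{claim:unwtdMaxYj}, so the ``proof'' is really just a chaining of two facts already in hand. The threshold $\tau_{\text{tail}}$ will be exactly the one produced by Lemma~\ref{lem:unwtdMaxYj}, namely $\tau_{\text{tail}} := \tfrac{1}{2}\,\E[\max_i Z_i]$, where $Z_i$ is defined (as in the proof of that lemma) to be the sum of those $Y_j$ appearing in $X_i$ that do not appear in any earlier $X_{i'}$. Note this quantity depends only on $A$ and the distributions of the $Y_j$'s, both of which the algorithm knows, so $\tau_{\text{tail}}$ is a legitimate fixed threshold.

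First I would invoke Lemma~\ref{lem:unwtdMaxYj}: the fixed-threshold-$\tau_{\text{tail}}$ algorithm satisfies $\E[\ALG] \geq c\cdot \E[\max_j Y_j]$ for an absolute constant $c>0$ (the $\tfrac12$ coming from the Augmentation Lemma, composed with $\max_i Z_i \ge \max_j Y_j$ since every $Y_j$ lands in exactly one $Z_i$). Second, I would apply Claim~\ref{claim:unwtdMaxYj}, which gives $\E[\max_j Y_j] \geq \tfrac12 \sum_j p_j \cdot \E[Y_j \mid Y_j > \tau]$. Composing these two inequalities yields
\[
  \E[\ALG] \;\geq\; c\cdot \E[\max_j Y_j] \;\geq\; \tfrac{c}{2}\sum_j p_j \cdot \E[Y_j \mid Y_j > \tau] \;=\; \Omega\!\left(\sum_j p_j \cdot \E[Y_j \mid Y_j > \tau]\right),
\]
which is exactly the claimed bound.

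There is essentially no obstacle here; the only thing worth being careful about is bookkeeping of which threshold is being used. In particular one must confirm that the $\tau_{\text{tail}}$ delivered by Lemma~\ref{lem:unwtdMaxYj} is indeed a single fixed threshold applied to the actual arrivals $X_i$ (not to the unobservable $Z_i$), which it is: the algorithm simply takes the first $X_i$ with $X_i \geq \tau_{\text{tail}}$, and the analysis in Lemma~\ref{lemma:aug} handles the gap $W_i = X_i - Z_i \ge 0$. The constants can be tracked explicitly if desired ($c = \tfrac12$ from Lemma~\ref{lemma:aug}, so the final constant is $\tfrac14$), but for the corollary's $\Omega(\cdot)$ statement this is unnecessary.
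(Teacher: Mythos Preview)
Your proposal is correct and matches the paper's approach exactly: the corollary is stated in the paper without proof precisely because it follows immediately by chaining Lemma~\ref{lem:unwtdMaxYj} and Claim~\ref{claim:unwtdMaxYj}, which is what you do. Your identification of $\tau_{\text{tail}} = \tfrac{1}{2}\E[\max_i Z_i]$ and the care about it being a legitimate fixed threshold on the observable $X_i$'s are both on point.
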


We now turn to the core portion of contributions to $\OPT$.

\begin{claim} \label{claim:unwtdApplyConcent}
  Let $V = \E[\max_i X_i \mid A=\emptyset]$.
  If the boundary satisfies $\tau \leq V /10$, there exists $\tau_{\text{core}}$ such that the algorithm setting a fixed threshold of $\tau_{\text{core}}$ obtains $\E[\ALG] \geq \Omega(V)$.
\end{claim}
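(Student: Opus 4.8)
The plan is to exploit that, conditioned on $A = \emptyset$, every $X_i = \sum_{j \in S_i} Y_j$ is an XOS (fractionally subadditive) function of the independent bounded variables $\{Y_j \cdot \one[Y_j \le \tau]\}$, and so the maximum $\max_i X_i$ is itself an XOS — in fact a monotone subadditive — function of these independent inputs, each of which is supported on $[0,\tau]$. The key tool is a dimension-free concentration inequality for (self-bounding / XOS) functions of independent bounded random variables: such a function is concentrated around its mean with fluctuations on the scale $\sqrt{\tau \cdot V}$ (more precisely, a Talagrand/McDiarmid-style bound gives $\Pr[\,|\max_i X_i - V| > t \,\big|\, A = \emptyset\,] \le 2\exp(-ct^2/(\tau V))$ for an absolute constant $c$). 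Plugging in $t = V/2$ and using the hypothesis $\tau \le V/10$, the exponent is a negative absolute constant, so $\Pr[\max_i X_i \ge V/2 \mid A = \emptyset]$ is at least some absolute constant $\rho > 0$.

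With this concentration in hand, I would set $\tau_{\text{core}} = V/2$ and run the fixed-threshold algorithm $\ALG_{\tau_{\text{core}}}$. The issue is that the algorithm is run on the \emph{true} instance, not on the conditioned one, so I need to transfer the above statement. Here the point is monotonicity: on the true instance the arriving values $X_i$ are pointwise at least as large as on the conditioned instance (redrawing the heavy $Y_j$ down to $[0,\tau]$ only decreases every $X_i$), so the unconditional probability that \emph{some} $X_i$ exceeds $\tau_{\text{core}}$ is at least the conditional one, $\ge \rho$. But merely crossing the threshold is not enough — I need the algorithm's \emph{accepted} value to be $\Omega(V)$. This follows from the standard fixed-threshold argument: either the algorithm stops at some $X_i \ge \tau_{\text{core}} = V/2$, contributing at least $V/2$, or it never stops; and since the event ``max exceeds $\tau_{\text{core}}$'' has probability $\ge \rho$, a Markov/averaging argument (as in the proof of Lemma~\ref{lemma:aug}, bounding $\E[\ALG_{\tau_{\text{core}}}] \ge \Pr[\exists i: X_i \ge \tau_{\text{core}}] \cdot \tau_{\text{core}}$) gives $\E[\ALG_{\tau_{\text{core}}}] \ge \rho \cdot V/2 = \Omega(V)$.

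The main obstacle I anticipate is pinning down exactly which concentration inequality to invoke and verifying its hypotheses cleanly: $\max_i X_i$ needs to be recognized as a configuration (XOS/self-bounding) function of independent variables on a bounded range, so that a dimension-free sub-Gaussian tail with variance proxy $O(\tau V)$ applies — one has to be slightly careful that the relevant ``Lipschitz/bounded-differences'' parameter is $\tau$ (each $Y_j$ influences $\max_i X_i$ by at most $\tau$ after truncation) and that the sum of squared influences is controlled by $V$ rather than by the (possibly huge) number of variables $m$. Standard references on concentration of XOS functions (e.g.\ Vondr\'ak-type self-bounding arguments, or Talagrand's inequality applied to the fractional subadditive representation) give precisely this. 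Once the concentration statement is stated correctly, the rest is the routine threshold bookkeeping sketched above. The corner case $\tau > V/10$ is explicitly excluded from this claim and handled separately (a threshold of $\tau$ itself then suffices, since $V = O(\tau)$ and a crude argument recovers a constant fraction).
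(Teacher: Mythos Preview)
Your proposal is correct and follows essentially the same approach as the paper: both apply the self-bounding/XOS concentration inequality (the paper cites Vondr\'ak~\cite{Vondrak-arXiv10}) to the truncated variables to get $\Pr[\max_i X_i' \ge V/2] \ge \rho$ for an absolute constant $\rho$, then set $\tau_{\text{core}} = V/2$ and conclude $\E[\ALG_{\tau_{\text{core}}}] \ge \rho \cdot V/2$. You are in fact more explicit than the paper about the stochastic-dominance/coupling step needed to transfer the conditional bound $\Pr[\max_i X_i \ge V/2 \mid A=\emptyset]$ to the unconditional instance on which the algorithm actually runs---the paper leaves this implicit.
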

\begin{proof}
  Let $Y_j'$ be a copy of $Y_j$ conditioned on falling into the range $[0,\tau]$.
  Let $X_i' = \sum_{j \in S_i} Y_j'$ and let $W = \max_i X_i'$.
  We have $\E[\max_i X_i \mid A=\emptyset] = \E[W] = V$.

  Now, note that $W$ is an XOS function of the independent $Y_i'$ variables (meaning is a maximum of weighted combinations).
  Thus we can apply the concentration of XOS functions (more generally, for self-bounding functions, see e.g.~\cite{Vondrak-arXiv10}) to get
\[		\Pr\big[W < (1 - \delta)\cdot \E[W] \big] \leq  \exp\Big(-\delta^2 \cdot \frac{\E[W]}{2\tau} \Big).
\]
In particular, for $\delta=1/2$ we get
\[ 	\Pr\Big[W <\frac12 \E[W] \Big] \quad \leq \quad \exp\Big(-\frac{\E[W]}{8\tau} \Big) \quad \leq \gamma,
\]
for some constant $\gamma < 1$, using that $\tau \leq \E[W] / 10$.
Hence the expected value of an algorithm that sets a threshold of $\frac{1}{2} \E[W] $ is at least
\[ \Pr\Big[W \geq\frac{1}{2} \E[W] \Big] \cdot \frac{1}{2} \E[W]  \geq   \Omega\Big(\E[W]\Big).			\qedhere
\]
\end{proof}

Now we have all the tools to prove our main theorem.

\begin{proof}[Proof of Theorem~\ref{thm:unwtdLinCorr}] 
  By Claim \ref{claim:unwtdBoundOPT}, one of the follwing is at least a ${2}$-approximation to the prophet: $V := \E[\max_i X_i \mid A=\emptyset]$, and $\E[\sum_j p_j \cdot \E[Y_j \mid Y_j > \tau]$.
  Suppose it is the latter.
  Then by Claim~\ref{claim:unwtdMaxYj}, setting a fixed threshold of $\tau_{\text{tail}}$ gives a constant-factor approximation.

  So suppose we have $V \geq \E[\OPT]/2$.
  If $\tau > V/10$, then we can set a fixed threshold of $\tau$: with probability at least $\frac{1}{2}$, some $X_i \geq \tau$ (because some $Y_j \geq \tau$), so we obtain performance at least $\frac{1}{2}\tau \geq \frac{1}{40}\E[\OPT]$.

  Finally, if $V \geq \E[\OPT]/2$ and $\tau \leq V/10$, then by Claim \ref{claim:unwtdApplyConcent}, setting a fixed threshold of $\tau_{\text{core}}$ gives a constant-factor approximation.
\end{proof}

\section{Negatively Associated Values} \label{sec:negCorr}
In this section we show  a ${2}$ approximation ratio for \emph{negatively associated} random values, a property that is known to imply negative correlation\footnote{We say $\{X_j\}$ are negatively correlated if for all $i,i'$ we have $\text{Cov}(X_i,X_{i'})\leq 0$.  }~\cite{JP-Journal83}. 
Formally, we say $\{X_j\}$ are negatively associated  
if for all monotone increasing functions $f,g$ and disjoint subsets $S,S' \subseteq \{1,\dots,n\}$, and for all $a,b \in \reals$, we have
  \[ \Pr\Big[ f(X_j : j \in S) \geq a\Big] \leq \Pr\Big[ f(X_j : j \in S) \geq a \mid g(X_j : j \in S') \leq b \Big] . \]

Let $\tau = \frac{1}{2}\E[\max_i X_i]$ and let $P = \Pr[\max_i X_i \geq \tau]$.
We can write down precisely the usual prophet proof with just one line requiring additional justification.
\begin{align*}
  \E[\ALG_{\tau}]
  &=    \textstyle P \cdot \tau + \sum_{i=1}^n \Pr[X_{i'}<\tau (\forall i'<i)] \cdot \E\left[(X_i - \tau)^+ \mid  X_{i'} < \tau (\forall i' < i)\right]  \\
  & \textstyle  \geq P\cdot \tau  + (1-P) \cdot \sum_{i=1}^n \E\left[(X_i-\tau)^+ \mid X_{i'} < \tau (\forall i'<i)\right]  \\
  &\textstyle  \geq P\cdot \tau + (1-P) \cdot \sum_{i=1}^n \E\left[(X_i -\tau)^+\right] ,
\end{align*}
where the last inequality uses negative association as $(X_i-\tau)^+$ is a monotone function as is $\max_{i'<i} X_{i'}$. We can also use weaker notions of negative correlation for the last inequality, e.g., NLODS as in Section~2 of~\cite{RS-Journal92}.
Now repeating the old prophet inequality analysis,  
\[ \sum_{i=1}^n \E\Big[(X_i -\tau)^+\Big] \quad \geq \quad \E\Big[ \sum_{i=1}^n (X_i -\tau)^+\Big]
\quad \geq \quad \E\Big[ \max_i X_i -\tau\Big]  \quad= \quad \tau.
\]
Thus, 
  $\E[\ALG_{\tau}] \geq  P\cdot \tau + (1-P) \cdot \tau  
  =    \tau .$


\section{Bounded $\scol$ and Small Cardinality Constraint} \label{sec:uniform-constant}
For the $r$-uniform matroid problem with bounded $\scol$, we have shown in Section \ref{subsec:uniform-scol} a $1+ o(1)$ approximation for large $r$ tending to infinity.
Here, we complement that result with a gracefully-improving approximation ratio for all $r$ that smoothly interpolates between $O\left({\scol}\right)$ for $r=1$ (the classic result) and $O(1)$ for $r \geq \scol$.

The approach is an extension of our algorithm for bounded column sparsity in the $r=1$ case.

\begin{theorem} \label{thm:constSmallScol}
  For any $\scol, r$, the linearly correlated prophets problem with column sparsity $\scol$ and cardinality constraint $r$ admits an approximation ratio of ${2e^2} \cdot \max\left\{ 1, \frac{\scol}{r} \right\}$.
\end{theorem}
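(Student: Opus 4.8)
The plan is to extend the bounded-column-sparsity algorithm of Proposition~\ref{prop:colSparsity} (the $r=1$ case) by combining it with a random partition into $r$ ``buckets,'' following the template of the proof of Theorem~\ref{thm:RowSparsAlgo}. Let $p' := \min\{1,\, r/\scol\}$. The algorithm first assigns each $i \in [n]$ to one of $r$ buckets $B_1,\dots,B_r$ uniformly and independently at random, and then within each bucket independently marks $i$ as \emph{active} (write $i \in S$) with probability $p'$; note that overall $\Pr[i \in B_\ell \cap S] = p'/r = \min\{1/r,\, 1/\scol\} =: q$. Inside bucket $\ell$, exactly as in Proposition~\ref{prop:colSparsity}, each feature $Y_j$ is assigned to the first active arrival of $B_\ell$ that contains it: for $i \in S \cap B_\ell$ put $T_i = \{ j : A_{ij} > 0,\ A_{i'j} = 0 \text{ for all } i' \in S \cap B_\ell \text{ with } i' < i \}$, let $Z_i = \sum_{j \in T_i} A_{ij} Y_j$, and set $Z_i = 0$ for inactive $i$. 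The algorithm then runs, in parallel and independently in each bucket $\ell$, the fixed-threshold rule of the Single-item Augmentation Lemma (Lemma~\ref{lemma:aug}) with threshold $\tau_\ell = \tfrac12 \E[\max_{i \in B_\ell} Z_i]$, taking at most one item per bucket and hence at most $r$ items in total.

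Conditioned on the partition and on $S$, the sets $T_i$ within a fixed bucket are pairwise disjoint, so the variables $\{Z_i\}_{i \in B_\ell}$ are independent; moreover, writing $X_i = Z_i + W_i$ with $W_i = \sum_{j : A_{ij}>0,\, j \notin T_i} A_{ij} Y_j \geq 0$, each $Z_i$ is independent of the earlier considered arrivals $X_{i'}$ ($i' \in S \cap B_\ell$, $i' < i$) because the features in $T_i$ appear in no such $X_{i'}$. Lemma~\ref{lemma:aug} therefore gives $\E[\ALG_\ell \mid \text{partition}, S] \geq \tfrac12 \E[\max_{i \in B_\ell} Z_i \mid \text{partition}, S]$, and by exchangeability of the buckets we obtain $\E[\ALG] = r \cdot \E[\ALG_1] \geq \tfrac{r}{2} \E[\max_{i \in B_1} Z_i]$.

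It remains to show $\E[\max_{i \in B_1} Z_i] \geq \tfrac{p'}{e^2 r}\, \E[\OPT]$. Fix a realization of the $Y_j$'s and let $i_1,\dots,i_r$ be the indices of the $r$ largest values $X^{(1)} \geq \dots \geq X^{(r)}$. For $k \leq r$, let $E_k$ be the event that $i_k \in B_1$ but $i_{k'} \notin B_1$ for every $k' < k$; the events $E_k$ are disjoint and on $E_k$ we have $Z_{i_k} \leq \max_{i \in B_1} Z_i$, so $\E[\max_{i \in B_1} Z_i] \geq \sum_{k=1}^r \E[Z_{i_k} \cdot \one[E_k]]$. We have $\Pr[E_k] = \tfrac1r (1-\tfrac1r)^{k-1} \geq \tfrac1{er}$. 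Conditioned on $E_k$ and on the (independent) event $i_k \in S$, a feature $Y_j$ with $A_{i_k j} > 0$ fails to land in $T_{i_k}$ only if some \emph{other} arrival containing it lies in $S \cap B_1$; but the arrivals $i_1,\dots,i_{k-1}$ are already excluded from $B_1$ by $E_k$, so at most $\scol - 1$ such competitors remain, each lying in $S \cap B_1$ independently with probability $q = 1/\max\{r,\scol\}$, whence $\Pr[j \in T_{i_k} \mid E_k,\, i_k \in S] \geq (1-q)^{\scol-1} \geq \tfrac1e$ (using $\scol - 1 \leq \max\{r,\scol\}$). Consequently $\E[Z_{i_k} \mid E_k,\, i_k \in S] \geq \tfrac1e X_{i_k}$, and since $\Pr[i_k \in S \mid E_k] = p'$ we get $\E[Z_{i_k}\cdot\one[E_k]] \geq \tfrac1{er} \cdot p' \cdot \tfrac1e X_{i_k}$. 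Summing over $k$ and taking expectations over the $Y_j$'s yields $\E[\max_{i \in B_1} Z_i] \geq \tfrac{p'}{e^2 r} \E[\OPT]$, and hence $\E[\ALG] \geq \tfrac{p'}{2e^2} \E[\OPT] = \tfrac{1}{2 e^2 \max\{1, \scol/r\}} \E[\OPT]$, which is the claimed approximation ratio.

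The main obstacle is the final charging argument: one must bound $\E[\OPT] = \sum_k \E[X^{(k)}]$ by a sum of \emph{per-bucket} maxima of the $Z_i$ without double-counting, and the disjoint-events device $E_k$ is exactly what accomplishes this (each of the top-$r$ arrivals that lands ``first'' in some bucket contributes its own term, and the geometric factor $(1-1/r)^{k-1}$ is absorbed into the $1/e$). Two checks make this work: that conditioning on $E_k$ can only \emph{remove} competitors of $Y_j$ from $B_1$ (never add them), and that the combined sampling rate $q = \min\{1/r, 1/\scol\}$ keeps the ``uniquely owned'' probability $\Omega(1)$ in both regimes $r \leq \scol$ and $r \geq \scol$, which is why the single choice $p' = \min\{1, r/\scol\}$ handles all $r$ at once. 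Everything else is a direct adaptation of the single-item bounded-$\scol$ analysis together with the bucketing argument already used for bounded row sparsity.
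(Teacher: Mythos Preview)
Your proof is correct and follows essentially the same approach as the paper: random assignment of arrivals to $r$ buckets with effective inclusion probability $1/\max\{r,\scol\}$ per bucket, the single-item Augmentation Lemma inside each bucket, and the disjoint ``first-of-the-top-$r$-to-land-in-$B_1$'' events to charge $\E[\OPT]$ against the bucket maxima. The only cosmetic difference is that you realize the inclusion probability via two stages (uniform bucket then independent activation with probability $p'$) while the paper uses a single stage with a discard pile, and you fold the feature-ownership $1/e$ factor into the charging step rather than into the per-bucket Augmentation bound; the arithmetic and the resulting constant $2e^2\cdot\max\{1,\scol/r\}$ coincide.
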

In other words, as $r=2,3,\ldots,\scol$, the guarantee improves to a constant factor times $\frac{2}{\scol}, \frac{3}{\scol}, \ldots, 1$.

\begin{proof}
  Let there be $r$ sets (``buckets'') $B_1,\dots,B_r$.
  If $r < \scol$, let there also be a ``discard pile'' $B_0$.
  Let $c = \max\{r,\scol\}$. 
  
  For each $X_i$, place it in a bucket $j \in \{1,\dots,r\}$ each chosen with probability $\frac{1}{c}$.
  If $r < \scol$, then with the remaining probability of $\frac{\scol-r}{\scol}$, place $X_i$ in the discard bucket $B_0$.

  For each bucket $j=1,\dots,r$, give the bucket a cardinality constraint of $1$ item and run the following algorithm (based on the $r=1$ case).
  Let $S_j = \{i : X_i \in B_j\}$ and note that they are disjoint for different $j$.
  When a variable $X_i$ arrives, send it to the algorithm for its bucket, or if it is in $B_0$, discard $X_i$ and continue.
  In bucket $j$, we run an inclusion-threshold algorithm with $S_j$ and with $\tau_j$ to be determined next.
  Assign each $Y_{j'}$ to the first $X_i \in B_j$ that includes it, i.e., let $T_i = \{j' : A_{ij'} > 0 \text{ and } A_{i'j' = 0} (\forall i' < i, i' \in S_j)\}$.
  Let $Z_i = \sum_{j' \in T_i} A_{ij'} Y_{j'}$, and let $\tau_j = \frac{1}{2}\E[ \max_{i\in S_j} Z_i]$.

\begin{claim} For each bucket $j \in \{1,\dots,r\}$, the expected value selected by the algorithm is at least $\frac{1}{2e} \cdot \E[\max_{i\in S_{j}} X_i]$.
\end{claim}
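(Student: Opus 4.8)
The plan is to prove this as the per-bucket analogue of Proposition~\ref{prop:colSparsity}, with the column-sparsity parameter replaced by $c=\max\{r,\scol\}$ and the global benchmark $\E[\max_i X_i]$ replaced by the bucket-local one $\E[\max_{i\in S_j}X_i]$. Fix $j\in\{1,\dots,r\}$ and first condition on the (internally sampled) bucket assignment, i.e.\ on the realized set $S_j=\{i:X_i\in B_j\}$; note each $i$ lies in $S_j$ independently with probability $1/c$, and the algorithm knows $S_j$. Writing $X_i=Z_i+W_i$ with $W_i=\sum_{j'\in R_i\setminus T_i}A_{ij'}Y_{j'}\ge 0$, I would verify the two hypotheses of the augmented prophets problem exactly as in Proposition~\ref{prop:colSparsity}: the sets $T_i$ $(i\in S_j)$ are pairwise disjoint, so the $Z_i$ are mutually independent; and since every $j'\in R_i\setminus T_i$ already occurs in some earlier $X_{i'}$ with $i'\in S_j$, $i'<i$, while no $j'\in T_i$ does, $Z_i$ is independent of $(X_{i'})_{i'<i,\ i'\in S_j}$. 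Hence Lemma~\ref{lemma:aug}, applied inside the bucket with threshold $\tau_j=\tfrac12\E[\max_{i\in S_j}Z_i\mid S_j]$, gives $\E[\ALG_j\mid S_j]\ge\tfrac12\E[\max_{i\in S_j}Z_i\mid S_j]$, and averaging over $S_j$ yields $\E[\ALG_j]\ge\tfrac12\E[\max_{i\in S_j}Z_i]$.

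The main step is to show $\E[\max_{i\in S_j}Z_i]\ge\tfrac1e\E[\max_{i\in S_j}X_i]$. I would fix the realization of $\Y$ and choose a permutation $\pi$ with $X_{\pi(1)}\ge X_{\pi(2)}\ge\cdots\ge X_{\pi(n)}$ (breaking ties arbitrarily). For $k=1,\dots,n$ let $E_k$ be the event that $\pi(k)$ is the arrival of largest value in $S_j$, i.e.\ $\pi(k)\in S_j$ and $\pi(1),\dots,\pi(k-1)\notin S_j$; then $\Pr[E_k]=\tfrac1c(1-\tfrac1c)^{k-1}$ and $\E\big[\max_{i\in S_j}X_i\mid\Y\big]=\sum_k\Pr[E_k]\,X_{\pi(k)}$. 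On $E_k$ we have $\max_{i\in S_j}Z_i\ge Z_{\pi(k)}$, so it suffices to prove $\E[Z_{\pi(k)}\mid E_k]\ge\tfrac1e X_{\pi(k)}$. Since $Z_{\pi(k)}=\sum_{j':A_{\pi(k)j'}>0}A_{\pi(k)j'}Y_{j'}\cdot\mathbf 1[j'\in T_{\pi(k)}]$ and $X_{\pi(k)}=\sum_{j':A_{\pi(k)j'}>0}A_{\pi(k)j'}Y_{j'}$, this reduces to $\Pr[j'\in T_{\pi(k)}\mid E_k]\ge\tfrac1e$ for each such $j'$. Now $j'\in T_{\pi(k)}$ precisely when no row $i'\ne\pi(k)$ with $A_{i'j'}>0$ and $i'<\pi(k)$ in arrival order lies in $S_j$; column $j'$ of $A$ has at most $\scol$ nonzero entries, so there are at most $\scol-1$ such rows $i'$, of which those among $\{\pi(1),\dots,\pi(k-1)\}$ are already excluded by $E_k$, and the rest lie in $\{\pi(k+1),\dots\}$ and so are, conditionally on $E_k$, still in $S_j$ independently with probability $1/c$. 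Therefore $\Pr[j'\in T_{\pi(k)}\mid E_k]\ge(1-\tfrac1c)^{\scol-1}\ge(1-\tfrac1c)^{c-1}\ge\tfrac1e$, using $\scol\le c$ and $(1-\tfrac1N)^{N-1}\ge\tfrac1e$. Summing over $k$ and taking expectation over $\Y$ gives $\E[\max_{i\in S_j}Z_i]\ge\tfrac1e\E[\max_{i\in S_j}X_i]$; combining with the first step yields $\E[\ALG_j]\ge\tfrac1{2e}\E[\max_{i\in S_j}X_i]$.

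The verification of the augmented-prophets hypotheses and the $(1-1/c)$-power bookkeeping are routine (and essentially copied from Proposition~\ref{prop:colSparsity}). The one place that needs care — and the main obstacle — is the conditioning in the second step: one must not crudely bound the bucket-local benchmark by the global maximum $X_{\pi(1)}$ (that would cost an extra factor of $c$), but instead condition on which arrival realizes $\max_{i\in S_j}X_i$; the point is that this conditioning only removes potential ``stealers'' of $Y_{j'}$ from $T_{\pi(k)}$ while leaving untouched the $S_j$-membership of the later arrivals that actually matter. A minor subtlety is that $E_k$ is defined through the value order but $T_{\pi(k)}$ through the arrival order, so the ``at most $\scol-1$ stealers'' count should be phrased in terms of the column support of $A$ rather than of the ordering, and one should fix a tie-breaking rule so that $E_k$ is well-defined.
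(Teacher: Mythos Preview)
Your proof is correct and follows the same approach as the paper: apply the Augmentation Lemma inside the bucket to get $\E[\ALG_j]\ge\tfrac12\E[\max_{i\in S_j}Z_i]$, then show $\E[\max_{i\in S_j}Z_i]\ge\tfrac1e\E[\max_{i\in S_j}X_i]$ by bounding the probability that each feature $Y_{j'}$ survives into $T_i$. Your treatment of the second step---decomposing over the events $E_k$ that $\pi(k)$ realizes the bucket-local maximum and checking that conditioning on $E_k$ only removes potential stealers---is in fact more careful than the paper's, which asserts $\E_B[Z_i]\ge\tfrac1e X_i$ for each fixed $i$ and then jumps directly to the conclusion without explaining how this pointwise bound transfers to the random index $\arg\max_{i\in S_j}X_i$.
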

\begin{proof}
  By construction each $Z_i$ is independent of all previous $X_i$ in the same bucket, so by the Augmentation Lemma~\ref{lemma:aug}, bucket $j$ obtains expected reward at least $\frac{1}{2}\E[ \max_{i \in S_j} Z_i]$ with randomness over the variables.

  For each $Y_{j'}$ with $A_{ij'} > 0$, we claim $\Pr[j \in T_i] \geq \frac{1}{e}$ because there are at most $\scol-1$ other variables $X_{i'}$ that include $Y_{j'}$, and each misses bucket $j$ with probability at least $1-\frac{1}{\scol}$, so they all miss bucket $j$ with probability at least $(1-\frac{1}{\scol})^{\scol-1} \geq \frac{1}{e}$.
  In this case, we must have $j \in T_i$.
  So for each fixed $X_i$, we have with probability only over the bucket assignments and construction of $S_j$,
  \begin{align*}
    \E[ Z_i]
      \quad = \quad    \sum_{j'} \Pr[j' \in T_i]\cdot A_{ij'}Y_{j'}  
      \quad \geq \quad \frac{1}{e} \sum_{j'} A_{ij'} Y_{j'}  
      \quad= \quad    \frac{1}{e} X_i .
  \end{align*}
  Combining these facts, each bucket $j$ obtains expected reward at least $\frac{1}{2e}\E[ \max_{i \in S_j} X_i]$.
  \end{proof}
  
  Write $\E_B$ for an expectation taken over the bucketing and $\E_{\X}$ for expectation over the realizations of the variables.
  The above gives that for every set of variable realizations, $\E_B[\ALG] \geq \frac{1}{2e}\sum_{j=1}^r \E_B [\max_{i \in S_j} X_i]$.
  By linearity of expectation and symmetry of the buckets, we have
  \begin{align}
    \E_{\X,B} [\ALG] &\geq \frac{r}{2e} \E_{\X,B} \big[ \max_{i \in S_1}X_i \big]. \label{eqn:matroid-col-ineq}
  \end{align}

\begin{claim}  \label{claim:constColS}
$\E_{\X,B} \big[ \max_{i \in S_1}X_i \big] \geq \frac{1}{e\cdot c} \cdot \E_{\X}[\OPT]$, where $c = \max\{r,\scol\}$.
\end{claim}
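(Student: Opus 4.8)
The plan is to fix the realizations $\X$, write the order statistics as $X^{(1)} \geq X^{(2)} \geq \cdots \geq X^{(n)}$ so that $\OPT = \sum_{i=1}^{r} X^{(i)}$, and argue over the randomness of the bucketing alone (then take $\E_{\X}$ at the end). For each $i \in \{1,\dots,r\}$ let $E_i$ be the event that $X^{(i)}$ is placed in bucket $B_1$ while every strictly larger variable $X^{(1)},\dots,X^{(i-1)}$ avoids $B_1$. On $E_i$ we clearly have $\max_{i' \in S_1} X_{i'} \geq X^{(i)}$, and since $\X$ has only nonnegative entries this inequality survives (with the convention $\max \emptyset = 0$) when $S_1 = \emptyset$. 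Moreover the events $E_1,\dots,E_r$ are pairwise disjoint: if $E_i$ holds then $X^{(i)} \in B_1$, whereas $E_{i'}$ with $i' > i$ forces $X^{(i)} \notin B_1$. Consequently $\max_{i' \in S_1} X_{i'} \geq \sum_{i=1}^{r} X^{(i)}\,\one[E_i]$ holds pointwise over the bucketing.

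Next I would compute $\Pr[E_i]$. By construction each $X_i$ is sent to $B_1$ with probability exactly $\tfrac1c$, independently across variables, whether or not the discard pile $B_0$ is present; so $\Pr[E_i] = \tfrac1c\big(1-\tfrac1c\big)^{i-1}$. Since $i \leq r \leq c$, the exponent satisfies $i-1 \leq c-1$, hence $\big(1-\tfrac1c\big)^{i-1} \geq \big(1-\tfrac1c\big)^{c-1} \geq \tfrac1e$ (the standard inequality used throughout the paper; for $c=1$ the probability is just $1$). Thus $\Pr[E_i] \geq \tfrac{1}{e\cdot c}$ for every $i \leq r$.

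Combining the two steps, for every fixed realization of $\X$,
\[
  \E_B\big[\max_{i' \in S_1} X_{i'}\big] ~\geq~ \sum_{i=1}^{r} X^{(i)}\,\Pr[E_i] ~\geq~ \frac{1}{e\cdot c}\sum_{i=1}^{r} X^{(i)} ~=~ \frac{1}{e\cdot c}\,\OPT .
\]
Taking expectation over $\X$ gives $\E_{\X,B}\big[\max_{i' \in S_1} X_{i'}\big] \geq \frac{1}{e\cdot c}\,\E_{\X}[\OPT]$, which is exactly Claim~\ref{claim:constColS}; combined with \eqref{eqn:matroid-col-ineq} and the definition $c = \max\{r,\scol\}$ this yields the $2e^2 \cdot \max\{1, \scol/r\}$ ratio of Theorem~\ref{thm:constSmallScol}.

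Since the argument is elementary, there is no genuine obstacle; the one point deserving care is the restriction of the sum to the top $r$ order statistics, which is precisely what keeps $i-1 \leq c-1$ and so preserves the $\tfrac1e$ factor — extending the sum further would destroy it. One should also just confirm that the per-bucket landing probability is uniformly $\tfrac1c$ across $B_1,\dots,B_r$ even when $r < \scol$ and mass $\tfrac{\scol-r}{\scol}$ is diverted to the discard pile, which is immediate from the construction.
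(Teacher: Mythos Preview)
Your proof is correct and follows essentially the same route as the paper: fix $\X$, use the disjoint events that $X^{(i)}$ lands in $B_1$ while all larger elements miss it, compute $\Pr[E_i]=\tfrac1c(1-\tfrac1c)^{i-1}\ge \tfrac{1}{e\cdot c}$ for $i\le r\le c$, sum, and average over $\X$. Your write-up is slightly more careful about the disjointness of the $E_i$ and the edge case $S_1=\emptyset$, but the argument is the same.
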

\begin{proof}
  Let the random variable $X^{(i)}$ equal the $i$th-largest realized variable, i.e., in particular
  \begin{align*}
    \OPT &= \sum_{i=1}^r X^{(i)} .
  \end{align*}
  Recall that any fixed variable falls into bucket $B_1$ with probability $\frac{1}{c}$.
  Fixing realizations of $X^{(1)},\dots,X^{(r)}$, we have with probability taken only over the buckets,
  \begin{align*}
    \E_B \big[\max_{i\in S_1} X_i \big]
    &\geq \sum_{i=1}^r \Pr\left[X^{(i)} \in B_1 \text{ and } X^{(i')} \not\in B_1 (\forall i'<i)\right] X^{(i)}  \\
    &=    \sum_{i=1}^r \frac{1}{c}\left(1-\frac{1}{c}\right)^{i-1} X^{(i)} 
    \quad \geq \quad \frac{1}{e \cdot c} \sum_{i=1}^r X^{(i)}  
    \quad = \quad   \frac{1}{e \cdot c} \OPT . 
  \end{align*}  
  Now taking an expectation on both sides over the realizations of  $\X$ proves the claim.
  \end{proof}
 Finally, combine Claim~\ref{claim:constColS} with Inequality~\eqref{eqn:matroid-col-ineq} to get
  \begin{align*}
    \E[\ALG]
        \quad \geq \quad \frac{r}{2 e} \E_{\X} \E_B \big[ \max_{i \in S_1} X_i \big] 
        \quad \geq \quad \frac{r}{2e^2 c} \E_{\X} [ \OPT ] , 
  \end{align*}
  which proves Theorem~\ref{thm:constSmallScol}.
\end{proof}


\section{Missing Proofs} \label{app:proofs}

\subsection{Missing Proofs from Section~\ref{sec:model}}
\fixedThresh*
\begin{proof}[Proof of Lemma~\ref{thm:fixedThreshold}]
  We consider the $\srow=\scol=2$ tower instance (Example \ref{ex:2-tower}), with sufficiently small $\epsilon$ chosen later.
  We claim that $\E [\max_i X_i] = \Omega(n)$ while any fixed threshold-$\tau$ algorithm has $\E [\ALG_{\tau}] \leq 3$.
  
  First, we bound $\E [\ALG_{\tau}]$.
  Let $p_j = \Pr[\text{$\ALG_{\tau}$ takes $X_j$ and $Y_j$ is active}]$.
  In this case the algorithm's reward includes $Y_j = \frac{1}{\epsilon^j}$ with coefficient $1$.
  Let $p_j' = \Pr[\text{$\ALG_{\tau}$ takes $X_{j-1}$ and $Y_j$ is active}]$.
  In this case its reward includes $Y_j = \frac{1}{\epsilon^j}$ with coefficient $\epsilon$.
  We note that $p_j,p_j' \leq \Pr[\text{$Y_j$ is active}] = \epsilon^j$.
  By summing over $Y_1,\dots,Y_n$, we have
  \begin{align*}
    \E[\ALG_{\tau}] \quad =  \quad  \sum_{j=1}^n \Big( p_j \frac{1}{\epsilon^j} + (p_j') (\epsilon) \frac{1}{\epsilon^j} \Big) \quad \leq  \quad \sum_{j=1}^n \Big( p_j \frac{1}{\epsilon^j} + \epsilon \Big)
    \quad=\quad    n\epsilon + \sum_{j=1}^n p_j \frac{1}{\epsilon^j} .
  \end{align*}
  We argue that $p_j = 0$ for all but at most two terms.
  Let $j^*$ satisfy ${\epsilon^{-(j^*+1)}} < \tau \leq {\epsilon^{-j^*}}$, or $j^* = 1$ if $\tau \leq \frac{1}{\epsilon}$.
  We claim $p_j = 0$ if $j \leq j^*-2$: assuming $\epsilon < \frac{1}{2}$, we have 
  \[ X_j \quad \leq \quad \frac{2}{\epsilon^j} \quad \leq \quad \frac{1}{\epsilon^{j+1}} \quad < \quad \tau, \]
   so $X_j$ is never taken.
  We also claim $p_j = 0$ if $j \geq j^*+1$: If $Y_j$ is active, then 
  \[ X_{j-1} \quad \geq \quad \epsilon \frac{1}{\epsilon^j} \quad \geq \quad \frac{1}{\epsilon^{j-1}} \quad \geq \quad \tau,\]
   so $X_{j-1}$ is taken and $X_j$ is not.
  So we have $p_j = 0$ unless $j \in \{j^*-1, j^*\}$, in which case $p_j \leq \epsilon^j$.
  So
  \begin{align*}
    \E[\ALG_{\tau}] \quad \leq \quad n\epsilon + \sum_{j=1}^n p_j \frac{1}{\epsilon^j} 
               \quad \leq \quad n\epsilon + 2 
               \quad \leq \quad 3 
  \end{align*}
  for $\epsilon \leq \frac{1}{n}$.

  For the benchmark, since $\max_i\{X_i \} \geq \max_j\{Y_j\} $, it suffices to show $\E[\max_j Y_j] = \Omega(n)$. Now,
    \begin{align*} 
  \E[\max_j\{Y_j\}]  &=   \textstyle \sum_{i=1}^n \Pr[Y_j = 0 ~ (\forall j > i)] \cdot \Pr[Y_i \neq 0] \cdot \Big(\frac{1}{\epsilon^i}\Big)  \\
      &\textstyle \geq \Pr[Y_j = 0 ~ (\forall j)] \cdot \sum_{i=1}^n  \Pr[Y_i \neq 0] \cdot \Big(\frac{1}{\epsilon^i}\Big).
        \end{align*}
  Since $\Pr[Y_i \neq 0] = \epsilon^i$, we get
        \begin{align*}
  \E[\max_j\{Y_j\}]    \quad &\geq \quad    \Pr[Y_j = 0 ~ (\forall j)] \cdot n  \\
  &\geq \quad \textstyle \big(1 - \sum_j \Pr[Y_j \neq 0] \big) \cdot n \quad \geq \quad \left(1 - n\epsilon\right) \cdot n  \quad \geq \quad {n}/{2}
  \end{align*}
  for any choice of $\epsilon \leq \frac{1}{2n}$.
  This gives an approximation ratio of at least $\frac{n/2}{3} = \frac{n}{6}$ for $\epsilon \leq \frac{1}{2n}$.
\end{proof}

\subsection{Missing Proofs from Section~\ref{sec:multipItemsAug}}
\begin{proof}[Proof of Claim~\ref{claim:onlyOneLarge}]
  First, consider supplementing $\OPT'$ by including tiny elements below $\tau_c$ when there is room, i.e. let $Z^{(i)\prime\prime} = Z^{(i)} \one[Z_i < \tau_0]$ and consider $\OPT_2'' = \sum_{j=1}^r Z^{(j)\prime\prime}$.
  Let $\OPT'' = \OPT_1' + \OPT_2''$.
  On a case-by-case basis, $\OPT''$ differs from $\OPT'$ by at most $r$ elements, each at most $\tau_c \leq \frac{\epsilon}{r} \E[\OPT]$.
  This proves that $\E[\OPT'] \geq \E[\OPT''] - \epsilon \E[\OPT]$.
  We next prove that $\E[\OPT''] \geq (1-\epsilon)\E[\OPT]$, which completes the proof of the claim.

  Let $H$ be the event there exists a heavy element, i.e. $\max_i Z_i \geq \tau_0 = \mathbb{E}[\OPT]/\epsilon$.
  Let $p = \Pr[H] $. So,
  \begin{align}
    \E [\OPT''] &=    p \cdot \E[\OPT'' \mid H] + (1-p) \cdot \E[\OPT'' \mid \lnot H] \nonumber  \\
             &=    p \cdot \E[\OPT'' \mid H] + (1-p) \cdot \E[\OPT \mid \lnot H]  \nonumber  \\
             &\geq p \cdot \E[Z^{(1)} \mid H] + (1-p) \cdot \E[\OPT \mid \lnot H] . \label{eqn:opt-prime-exceeds}
  \end{align}
  Now, we claim $\E[\OPT \mid H] \leq \E[Z^{(1)} \mid H] + \E[\OPT]$.
  Proof: let $M_i$ be the event that $i = \arg\max_{i'} Z_{i'}$ and $Z_i \geq \tau_0$.
  Let $\OPT_{-i}$ be the sum of the largest $r-1$ elements excluding $Z_i$.
  Note that conditioning on all others lying below $Z_i$, for any fixed  $Z_i$, only decreases $\OPT_{-i}$, as the variables are independent.
  \begin{align*}
    \E[\OPT \mid H] &= \E[Z^{(1)} \mid H] + {\textstyle \sum_{i=1}^n } \Pr[M_i \mid H] \cdot \E[\OPT_{-i} \mid M_i]  \\
                    &\leq \E[Z^{(1)} \mid H] + {\textstyle \sum_{i=1}^n }  \Pr[M_i \mid H] \cdot \E[\OPT_{-i}]  \\
                    &\leq \E[Z^{(1)} \mid H] + {\textstyle \sum_{i=1}^n }  \Pr[M_i \mid H] \cdot \E[\OPT]  \quad = \quad   \E[Z^{(1)} \mid H] + \E [\OPT ].
  \end{align*}
  Using this,
  \begin{align*}
    \E [\OPT] &=    p \cdot \E[\OPT \mid H] + (1-p) \cdot \E[\OPT \mid \lnot H]  \\
            &\leq p \cdot \E[Z^{(1)} \mid H] + p \cdot \E [\OPT] + (1-p) \cdot \E[\OPT \mid \lnot H] .            
  \end{align*}
  This implies
  \begin{align*}
     (1-p)\E[\OPT] ~~\leq ~~ p \cdot \E[ Z^{(1)} \mid H] + (1-p) \cdot \E[\OPT \mid \lnot H] .
  \end{align*}
  Combining with Inequality \eqref{eqn:opt-prime-exceeds} gives $\E[\OPT''] \geq (1-p)\E[\OPT]$.
  We have $p \leq \epsilon$ by Markov's inequality: $p = \Pr[\max_i Z_i \geq \tau_0] \leq \E[\max_i Z_i]/\tau_0 \leq \E[\OPT]/\tau_0 = \epsilon$.
  This completes the proof.
\end{proof}

\begin{proof}[Proof of Lemma~\ref{lemma:alg-approx-algprime}]
  Suppose $r$ is large enough that $\epsilon \leq 0.5$, otherwise the lemma is immediate.
  Let $\delta := \frac{1 + c \cdot \beta}{r}$.
  \begin{claim} \label{claim:eps-small-discard}
    $\epsilon \geq 2\delta$.
  \end{claim}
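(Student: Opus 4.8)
The plan is to unfold the definition $\delta = \frac{1+c\beta}{r}$, bound the two quantities $c$ and $\beta$ purely in terms of $r$ and $\epsilon$, and then verify the resulting inequality using only the standing hypothesis $\epsilon \geq \frac{9(\ln r)^{3/2}}{r^{1/4}}$ of Lemma~\ref{lemma:alg-approx-algprime}. A useful preliminary observation is that this hypothesis, together with the reduction $\epsilon \leq \tfrac12$ made at the start of that lemma's proof, forces $r$ to be larger than some absolute constant (since $\frac{9(\ln r)^{3/2}}{r^{1/4}} \leq \tfrac12$ already requires $r^{1/4} \geq 18(\ln r)^{3/2}$), so any finite number of ``$r$ large enough'' conditions invoked below are automatically met.

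First I would record the crude estimate $\ln\frac1\epsilon \leq \tfrac14\ln r$, which holds because $\epsilon \geq \frac{9(\ln r)^{3/2}}{r^{1/4}} \geq r^{-1/4}$. This gives $\ln\frac{r}{\epsilon^2} = \ln r + 2\ln\frac1\epsilon \leq \tfrac32\ln r$, hence
\[
  c = \left\lceil \tfrac{1}{\epsilon} \ln\tfrac{r}{\epsilon^2} \right\rceil \;\leq\; \frac{3\ln r}{2\epsilon} + 1 \;\leq\; \frac{2\ln r}{\epsilon},
\]
the last step using $\epsilon \leq \tfrac12$ and $\ln r \geq 1$. Consequently $c/\epsilon \leq \frac{2\ln r}{\epsilon^2}$, and since the hypothesis gives $\epsilon^2 \geq \frac{81(\ln r)^3}{\sqrt r} \geq \frac{2\ln r}{r}$, we obtain $c/\epsilon \leq r$, i.e. $\ln\frac{c}{\epsilon} \leq \ln r$ --- exactly the inequality reused later in Claim~\ref{claim:buckets-not-full}. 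Therefore $\beta = 3\sqrt{r\ln(c/\epsilon)} \leq 3\sqrt{r\ln r}$, and combining,
\[
  2\delta = \frac{2 + 2c\beta}{r} \;\leq\; \frac{2}{r} + \frac{2}{r}\cdot\frac{2\ln r}{\epsilon}\cdot 3\sqrt{r\ln r} \;=\; \frac{2}{r} + \frac{12(\ln r)^{3/2}}{\epsilon\sqrt r}.
\]

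It then remains to check that each of the two terms on the right is at most $\tfrac\epsilon2$. The second term satisfies $\frac{12(\ln r)^{3/2}}{\epsilon\sqrt r} \leq \tfrac\epsilon2$ iff $\epsilon^2 \geq \frac{24(\ln r)^{3/2}}{\sqrt r}$, i.e. $\epsilon \geq \frac{\sqrt{24}\,(\ln r)^{3/4}}{r^{1/4}}$, which is implied by $\epsilon \geq \frac{9(\ln r)^{3/2}}{r^{1/4}}$ since $9(\ln r)^{3/4} \geq \sqrt{24}$ for all $r \geq 2$; and the first term satisfies $\frac{2}{r} \leq \tfrac\epsilon2$ since $\epsilon \geq \frac{9(\ln r)^{3/2}}{r^{1/4}} \geq \frac{4}{r}$ in the relevant range of $r$. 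Adding these bounds gives $2\delta \leq \epsilon$. I do not expect a genuine obstacle here: the whole argument is a chain of monotone estimates, and the only thing requiring care is tracking the (harmless, absolute-constant) lower bounds on $r$ under which the various comparisons hold, all of which are dominated by what the hypothesis $\epsilon \geq \frac{9(\ln r)^{3/2}}{r^{1/4}}$ already imposes.
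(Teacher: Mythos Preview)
Your proposal is correct and follows essentially the same approach as the paper's proof: bound $c$ by $O(\tfrac{\ln r}{\epsilon})$, deduce $\ln\frac{c}{\epsilon} \leq \ln r$ and hence $\beta \leq 3\sqrt{r\ln r}$, then combine to bound $c\beta/r$ and verify the final inequality using the hypothesis $\epsilon \geq \frac{9(\ln r)^{3/2}}{r^{1/4}}$. The only cosmetic difference is that the paper collapses $1+c\beta \leq 2r^{3/4}$ directly to get $\delta \leq 2r^{-1/4} \leq \epsilon/2$, whereas you split $2\delta$ into two terms and bound each by $\epsilon/2$; both are the same chain of monotone estimates.
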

  \begin{proof}
    Using that $\epsilon \geq r^{-1/4}$, we have $c := \lceil \frac{1}{\epsilon} \ln \frac{r}{\epsilon^2} \rceil \leq \frac{3}{\epsilon} \ln(r)$.
    Further using that $\epsilon \geq 3r^{-1/4}$, this implies $\ln\frac{c}{\epsilon} \leq \ln\left(r^{1/2} \ln(r)\right) \leq \ln(r)$.
    Therefore, $\beta := 3 \sqrt{r \ln\frac{c}{\epsilon}} \leq 3 \sqrt{r \ln(r)}$.
    Then $c \cdot \beta \leq \left(\frac{3}{\epsilon}\ln(r)\right) \left(3 \sqrt{r \ln(r)}\right) \leq \frac{9 \sqrt{r} \left(\ln r\right)^{3/2}}{\epsilon}$.
    Using that $\epsilon \geq 9r^{-1/4}\left(\ln r\right)^{3/2}$, this gives $c \cdot \beta \leq r^{3/4}$, so $1 + c \cdot \beta \leq 2r^{3/4}$, so $\delta \leq 2 r^{-1/4} \leq \epsilon / 2$.
  \end{proof}
  Let $K$ be the number of arrivals taken by $\ALG$, a random variable.
  \begin{claim}
    With  at least $1-\epsilon$ probability, $K < r$ (i.e. $\ALG$ does not reach its cardinality constraint).
  \end{claim}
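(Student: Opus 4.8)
The plan is to bound the number $K$ of items that $\ALG$ takes by a binomial random variable and apply a Chernoff estimate. First I would observe that $\ALG'$ can take only a deterministically bounded number of arrivals: each arrival it accepts is placed in exactly one bucket $j\in\{0,1,\dots,c\}$, and bucket $j$ is capped at $\tilde r_j$ (once $b_j=\tilde r_j$ the algorithm never adds to it again). Hence the number $N'$ of arrivals taken by $\ALG'$ satisfies
\[ N' \;\le\; \tilde r_0+\sum_{j=1}^{c}\tilde r_j \;=\; 1+c\beta+\sum_{j=1}^{c}r_j \;\le\; 1+c\beta+r, \]
where the last step uses that the intervals $[\tau_j,\tau_{j-1})$ partition $[\tau_c,\tau_0)$, so $\sum_{j=1}^{c}r_j=\E\bigl|\{i:Z_i\in\OPT,\ \tau_c\le Z_i<\tau_0\}\bigr|\le\E|\OPT|=r$. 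Recalling $\delta=\frac{1+c\beta}{r}$, this is the deterministic bound $N'\le(1+\delta)r$.

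Next I would pass from $N'$ to $K$. By definition $\ALG$ runs $\ALG'$ and, independently of that run, keeps each accepted arrival with probability $1-\epsilon$, retaining the first $r$ of those kept. So writing $\hat K$ for the number kept before the cardinality cap is imposed, $K=\min(r,\hat K)$ and $\{K\ge r\}=\{\hat K\ge r\}$. Conditioned on $\ALG'$ taking exactly $N$ arrivals we have $\hat K\sim\mathrm{Bin}(N,1-\epsilon)$; since $N\le(1+\delta)r$ always and $N\mapsto\Pr[\mathrm{Bin}(N,1-\epsilon)\ge r]$ is nondecreasing, $\Pr[K\ge r]\le\Pr\bigl[\mathrm{Bin}(\lceil(1+\delta)r\rceil,1-\epsilon)\ge r\bigr]$. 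The mean $\mu$ of this binomial satisfies $\mu\le(1-\epsilon)(1+\delta)r+1$, and by Claim~\ref{claim:eps-small-discard} we have $\delta\le\epsilon/2$, hence $(1-\epsilon)(1+\delta)\le 1-\epsilon+\delta\le 1-\epsilon/2$; since in our regime $\tfrac{\epsilon}{4}r\ge 1$ as well (using $\epsilon\ge r^{-1/4}$), we get $\tfrac r2\le(1-\epsilon)r\le\mu\le(1-\tfrac{\epsilon}{4})r$.

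Finally I would apply the multiplicative Chernoff bound. Since $r\ge(1+\gamma)\mu$ with $\gamma:=\frac{\epsilon/4}{1-\epsilon/4}\in[\epsilon/4,1)$ (using $\epsilon\le\tfrac12$),
\[ \Pr[K\ge r]\;\le\;\exp\!\Bigl(-\tfrac{\gamma^2\mu}{3}\Bigr)\;\le\;\exp\!\bigl(-\kappa\,\epsilon^2 r\bigr) \]
for an absolute constant $\kappa>0$. To check that this is at most $\epsilon$, I would invoke the hypothesis $\epsilon\ge\frac{9(\ln r)^{3/2}}{r^{1/4}}$, which gives both $\ln(1/\epsilon)\le\tfrac14\ln r$ and $\epsilon^2 r\ge 81(\ln r)^3\sqrt r$, so $\kappa\,\epsilon^2 r\ge\ln(1/\epsilon)$ for all $r\ge 2$, i.e.\ $\exp(-\kappa\epsilon^2 r)\le\epsilon$. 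Thus $\Pr[K<r]\ge 1-\epsilon$, as claimed. The only delicate point is the bookkeeping in the first two paragraphs — that the bucket capacities give a \emph{deterministic} cap $(1+\delta)r$ on how many arrivals $\ALG'$ can take, and that the $\epsilon$-thinning defining $\ALG$ is independent of $\ALG'$'s run so the surviving count is genuinely binomial; granting that, the rest is a routine Chernoff estimate powered by $\delta\le\epsilon/2$ and the stated lower bound on $\epsilon$.
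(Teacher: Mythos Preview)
Your proof is correct and follows essentially the same approach as the paper: both deterministically cap the number of $\ALG'$ acceptances at $(1+\delta)r$ via $\sum_j \tilde r_j$, then dominate the $\epsilon$-thinned count by a binomial and apply Chernoff. The only cosmetic difference is the parameterization of the Chernoff step---the paper first rewrites $1-\epsilon\le(1-\delta)^2$ and carries the deviation in terms of $\delta$, finishing cleanly via $\delta\ge\beta/r$ and the definition of $\beta$ to get $\exp(-\beta^2/(6r))\le\exp(-\ln(c/\epsilon))\le\epsilon$; you keep the deviation in terms of $\epsilon$ and invoke the hypothesis $\epsilon\ge 9(\ln r)^{3/2}r^{-1/4}$ directly---but both routes yield an exponent $\Theta(\epsilon^2 r)$ and reach the same conclusion.
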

  \begin{proof}
    The number of arrivals taken by $\ALG'$ is at most $\sum_{j=0}^c \tilde{r}_j = 1 + c \cdot \beta + \sum_{j=1}^c r_j$.
    Because $\OPT$ takes at most $r$ arrivals pointwise, and thus in expectation, we have $\sum_{j=1}^c r_j \leq r$.
    So $\ALG'$ takes at most, in the worst case, $K' = r + 1 + c \cdot \beta = r(1+\delta)$ arrivals.
    Because $\ALG$ keeps each independently with probability $1-\epsilon \leq 1-2\delta < (1-\delta)^2$, the chance it reaches $r$ is upper-bounded by the chance that a Binomial($K',(1-\delta)^2$) variable exceeds $r$.
    This is upper-bounded by the chance it exceeds $K'(1-\delta)^2(1+\delta) = r(1-\delta)^2(1+\delta)^2 = r(1-\delta^2)^2 < r$.
    So by a Chernoff bound,
    \begin{align*}
      \Pr[K \geq r] \quad \leq \quad \Pr[\text{Binom}(K',(1-\delta)^2) \geq K'(1-\delta)^2(1+\delta)]  
                    \quad \leq \quad \exp\left(\frac{-\delta^2 K'(1-\delta)^2}{3}\right).
    \end{align*}
    We have $\delta \leq \epsilon/2 \leq 0.25$, and $K' \geq r$, so $K'(1-\delta)^2 \geq \frac{r}{2}$.
    Also, $\delta \geq \frac{c \cdot \beta}{r} \geq \frac{\beta}{r}$.
    \begin{align*}
      \Pr[K \geq r] &~\leq~ \exp\left(\frac{-\delta^2 r}{6}\right) 
                    ~\leq~ \exp\left(\frac{-\beta^2}{6 r}\right)  
                    ~\leq~ \exp\left(-\ln \frac{c}{\epsilon}\right)  
                    ~\leq~ \epsilon .	\qedhere
    \end{align*}
  \end{proof}
  
  Now, each time $\ALG'$ obtains some variable $X_i$, $\ALG$ also obtains it unless either: it has reached its cardinality constraint; or it independently discards $X_i$ (with probability $\epsilon$).
  By a union bound over these two events, when $\ALG'$ obtains $X_i$, $\ALG$ also obtains it except with probability $1-2\epsilon$.
  So $\E[\ALG] \geq (1-2\epsilon)\E[\ALG']$, which proves Lemma~\ref{lemma:alg-approx-algprime}.
\end{proof}

{\small
\bibliographystyle{alpha}
\bibliography{bib}
}

\end{document}